\tikzstyle{query} = [rectangle, draw, text centered, rounded corners, minimum height = 2em]
\tikzstyle{connector} = [draw, -latex']
\newcolumntype{P}[1]{>{\centering\arraybackslash}p{#1}}
\newtheorem{lemma}{Lemma}
\def\ie{{\it i.e.}}
\def\eg{{\it e.g.}}
\newskip\zatskip \zatskip=0pt plus0pt minus0pt
\def\matth{\mathsurround=0pt}
\def\lsim{\mathrel{\mathpalette\atversim<}}
\def\gsim{\mathrel{\mathpalette\atversim>}}
\def\atversim#1#2{\lower0.7ex\vbox{\baselineskip\zatskip\lineskip\zatskip
  \lineskiplimit 0pt\ialign{$\matth#1\hfil##\hfil$\crcr#2\crcr\sim\crcr}}}
\newif\ifdiagrams
\begin{document}


\begin{flushright}
\today
\end{flushright}
\vspace*{5mm}

\renewcommand{\thefootnote}{\fnsymbol{footnote}}
\setcounter{footnote}{1}
\begin{center}

{\Large {\bf BFBrain: Scalar Bounded-From-Below Conditions from Bayesian Active Learning}}\\

\vspace*{0.75cm}

{\bf George N. Wojcik~\footnote{gwojcik@wisc.edu}}

\vspace{0.5cm}

{Department of Physics, University of Wisconsin-Madison, Madison, WI 53706, USA}

\end{center}
\vspace{.5cm}

\begin{abstract}
 
\noindent  
We present a procedure leveraging Bayesian deep active learning to rapidly produce highly accurate approximate bounded-from-below conditions for arbitrary renormalizable scalar potentials, in the form of a neural network which may be saved and exported for use in arbitrary parameter space scans. We explore the performance of our procedure on three different scalar potentials with either highly nontrivial or unknown symbolic bounded-from-below conditions (the most general two-Higgs doublet model, the three-Higgs doublet model, and a version of the Georgi-Machacek model without custodial symmetry). We find that we can produce fast and highly accurate binary classifiers for all three potentials. Furthermore, for the potentials for which no known symbolic necessary and sufficient conditions on boundedness-from-below exist, our classifiers substantially outperform some common approximate analytical methods, such as producing tractable sufficient but not necessary conditions or evaluating boundedness-from-below conditions for scenarios in which only a subset of the theory's fields achieve vev's. Our methodology can be readily adapted to any renormalizable scalar field theory. For the community's use, we have developed a Python package, BFBrain, which allows for the rapid implementation of our analysis procedure on user-specified scalar potentials with a high degree of customizability.
\end{abstract}

\renewcommand{\thefootnote}{\arabic{footnote}}
\setcounter{footnote}{0}
\thispagestyle{empty}
\vfill
\newpage
\setcounter{page}{1}

\section{Introduction}\label{sec:intro}

Although the Standard Model (SM) contains only a single elementary scalar (the Higgs boson), given the SM's failure to adequately address a variety of ongoing theoretical and experimental questions, such as the identity of dark matter, the gauge-gravity hierarchy problem, and the origin of the intricate and nontrivial structure of fermion masses and mixings, it would be needlessly limiting to avoid considering theories with multiple additional scalar fields. Over the decades, significant model-building work has been and continues to be done on a variety of theories with extended scalar sectors (see, for example, \cite{Maniatis:2006fs, ivanov2017building, Georgi1985DoublyCH, Weinberg:1976hu, Moultaka:2020dmb, Boto:2022uwv, Kannike:2016fmd}), in which these beyond the Standard Model (BSM) scalars might spontaneously break new gauge symmetries, act as dark matter candidates or mediators, reproduce the observed SM Higgs mass with a greater degree of naturalness, and more. Unfortunately, multi-scalar BSM theories tend to present model builders with considerable difficulties-- in a significant subset of such constructions, the scalar sector is in fact the principal source of model complexity. Unlike gauge interactions, which have highly constrained forms for their interactions by virtue of the elegant structure suggested by local gauge invariance, there are comparatively few a priori theoretical constraints on most scalar coupling parameters. As such, the number of free parameters in a given theory will grow rapidly with the number of scalar fields, quickly making comprehensive explorations of these parameter spaces nontrivial at best and entirely intractable at worst. Among the more notorious problems arising in multi-scalar theories is the issue of boundedness-from-below of the scalar potential, a necessary precondition for stability of the vacuum. The problem is simple enough to express: For a given set of model parameters, the scalar potential function in the action must have an absolute minimum at some finite configuration of scalar vacuum expectation values (vev's). If this condition is not satisfied, then the model has no absolute minimum and any vacuum configuration we observe in our universe is necessarily unstable.

A necessary step to determining the bounded-from-below region of scalar potential parameter space is specifically identifying \emph{strict} boundedness-from-below conditions, namely the region of parameter space in which the sum of all quartic terms of the scalar potential are always positive. Once this criterion is resolved, the complete space of bounded-from-below scalar potentials can be determined simply by requiring that on the boundaries of the strictly bounded-from-below region (that is, where there exists a vev configuration such that the quartic coefficients sum to 0), the cubic and/or quadratic terms ensure boundedness-from-below is maintained. For many BSM studies, it in fact suffices to simply determine the strict bounded-from-below conditions, since a region satisfying the weak boundedness-from-below conditions but not the strict ones is likely infinitesimally close to a model for which the strict condition is observed-- this practice is so common that in subsequent sections in this paper we shall often for the sake of brevity refer only to ``bounded-from-below conditions'' when discussing the strict positivity criteria.\footnote{There are, of course, models where flat directions of the quartic potential are significant, perhaps most famously the Minimal Supersymmetric Standard Model (MSSM) \cite{Gherghetta:1995dv}.}

At tree level, identifying the strictly bounded-from-below region of parameter space amounts to determining if a given multivariable quartic polynomial, parameterized by the quartic scalar coupling coefficients, is everywhere positive, or equivalently, that a rank-4 tensor is positive-definite. Since many radiative corrections can be expressed simply as modifications of the tree-level quartic couplings, resolving the tree-level bounded-from-below region generally also addresses the problem at arbitrary loop level as well. Unfortunately, while it is always possible to determine the positive-definiteness of a given rank-4 tensor \cite{Kannike:2016fmd}, the general procedure to do so is NP-hard \cite{Kannike:2016fmd, Ivanov:2018jmz, Murty1987SomeNP}, and hence often not feasible to accomplish even for numerical scans of the model parameter space. While certain special cases can be treated more easily\footnote{for example, if the quartic terms in a potential can be split into biquadratic forms, such as in the inert doublet model \cite{Deshpande:1977rw}, the problem can be reduced to simply proving the positive-definiteness of a rank-2 tensor, which can be readily done numerically or symbolically.}, the only recourse for many scalar potentials is to embark on an enormously complicated symbolic analysis, which will generally not yield compact solutions if closed-form results are obtainable at all. Failing to find exact necessary and sufficient conditions, model builders have generally relied on tractable symbolic approximations to these conditions by finding criteria that are necessary but not sufficient (\eg, \cite{Blasi:2017xmc}, in which case the criteria will permit points that are not, in fact, stable) or sufficient but not necessary (\eg, \cite{Grzadkowski:2009bt}, in which case a stable allowed portion of parameter space will be omitted). Any attempt to create a generic strategy to approach the boundedness-from-below problem, therefore, naturally must address the punishing complexity of determining the positive-definiteness of a rank-4 tensor, even approximately. 

In this work, we propose mitigating this problem by using active learning: By training a sequential neural network on comparatively few explicitly labelled points and using the network itself to propose additional training examples, the full landscape of decision boundaries in the space of the potential's quartic coupling constants for boundedness-from-below can be rapidly approximated while probing only a fraction of the full parameter space-- recently in \cite{caron2019constraining,goodsell2023active,hammad2023exploration} these techniques have been used in other BSM model building contexts . Training can be done on the scale of hours on an individual personal computer and requires only publicly available machine learning frameworks, and once the decision boundary is well-explored, the trained neural network itself can be used as a fast classifer to evaluate the boundedness-from-below of the scalar potential for \emph{any} point in the model's parameter space to a remarkable degree of accuracy. On explicitly labelled test data, we find that our classifiers substantially outperform common examples of necessary but not sufficient and sufficient but not necessary approximate bounded-from-below conditions. Furthermore, because we employ a Bayesian neural network, the trained classifier also has a meaningful metric for uncertainty in its predictions, which can be used to gauge the reliability of its results.

Once trained, the classifier produced by our methodology represents a portable set of approximate bounded-from-below conditions, accurate enough to be usable for a wide range of phenomenological studies and with reliable uncertainty estimates on its predictions, that can in turn be saved, shared, and applied to new points in parameter space with minimal computational effort. In this paper, we shall outline the key components of our procedure and explore the results of applying it to several scalar potentials with nontrivial or unknown strict bounded-from-below conditions. The remainder of this paper is laid out as follows. In Section \ref{sec:problem-setup}, we outline the problem of strict boundedness-from-below in greater detail and establish notation. In Section \ref{sec:bayesian-nn}, we review some key concepts in Bayesian deep learning and reference their application in our setting, with a particular emphasis on uncertainty quantification, as well as arguing for the suitability of a Bayesian neural network for the classification task at hand. In Section \ref{sec:active-learning}, we review the fundamental concepts in active learning (particularly in the context of a Bayesian neural network) and describe the components of our active learning procedure in detail. In Section \ref{sec:experiments}, we present the results of applying our procedure to three different scalar potentials: The most general two-Higgs doublet model (2HDM), the Weinberg three-Higgs doublet model of \cite{Weinberg:1976hu}, and the Georgi-Machacek model \cite{Georgi1985DoublyCH} with custodial symmetry broken (as occurs at the loop level \cite{gunion:1990dt,Blasi:2017xmc,Moultaka:2020dmb}). Finally, in Section \ref{sec:conclusion}, we summarize our findings and discuss directions for future work. In addition to this paper, we have published a Python package on GitHub, BFBrain \cite{Wojcik2023}, which allows for our procedure to be easily implemented for arbitrary user-specified scalar potentials, and includes substantial customization options.

\section{Problem Setup}\label{sec:problem-setup}

In general, a renormalizable scalar potential with real degrees of freedom given as $\phi_i$ can be written as
\begin{align}
    V(\phi) = M_{i j} \phi_i \phi_j + \xi_{i j k} \phi_i \phi_j \phi_k + Q(\vec{\lambda})_{i j k l} \phi_i \phi_j \phi_k \phi_l,
\end{align}
where $M$, $\xi$, and $Q(\vec{\lambda})$ are a rank-2, rank-3, and rank-4 tensor, respectively, while $\vec{\lambda}$ is a vector of independent real coefficients of dimension-4 scalar operators-- for later notational convenience we keep these coefficients explicit, and without loss of generality let $Q(\vec{\lambda})$ be a linear function of $\vec{\lambda}$. A necessary precondition for vacuum stability of the model (barring additional nonrenormalizable operators) is that such a model is bounded from below-- that is, the global minimum of $V$ is greater than $- \infty$. As noted in Section \ref{sec:intro}, a necessary step in this analysis is confirming strict bounded-from-below conditions, namely
\begin{align}\label{eq:general-BFB-cond}
    Q(\vec{\lambda})_{i j k l} \phi_i \phi_j \phi_k \phi_l > 0 \; \forall \; \phi \in \mathbb{R}^d.
\end{align}
In tensor algebra terms, the rank-4 tensor $Q(\vec{\lambda})$ must be positive-definite. It is straightforward to see that many of the characteristics of positive-definiteness of matrices carry over to rank-n tensors-- for example, a positive rescaling of a positive-definite tensor remains positive-definite, and the sum of two positive-definite tensors must necessarily remain positive-definite. Two lemmas are of particular use when characterizing the space of $\vec{\lambda}$ values that satisfy Eq.(\ref{eq:general-BFB-cond}) for some scalar potential. First,
\begin{lemma}\label{prop:convex}
Let $Q(\vec{\lambda})_{ijkl}$ be a rank-4 tensor representing the quartic part of a scalar potential, and let $Q$ be a linear function of $\vec{\lambda}$. Then, the set of $\vec{\lambda}$ such that $Q(\vec{\lambda})_{ijkl}$ is positive-definite is convex.\footnote{Thank you to Matthew Sullivan for pointing this out.}
\end{lemma}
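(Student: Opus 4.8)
The plan is to show convexity directly from the definition: if $\vec{\lambda}_1$ and $\vec{\lambda}_2$ both yield positive-definite quartic tensors, then so does any convex combination $\vec{\lambda}_t = t\vec{\lambda}_1 + (1-t)\vec{\lambda}_2$ for $t \in [0,1]$. The key observation is that because $Q$ is a linear function of $\vec{\lambda}$, we have $Q(\vec{\lambda}_t) = t\, Q(\vec{\lambda}_1) + (1-t)\, Q(\vec{\lambda}_2)$, so contracting with an arbitrary $\phi \in \mathbb{R}^d$ gives
\begin{align}
Q(\vec{\lambda}_t)_{ijkl}\,\phi_i\phi_j\phi_k\phi_l = t\, Q(\vec{\lambda}_1)_{ijkl}\,\phi_i\phi_j\phi_k\phi_l + (1-t)\, Q(\vec{\lambda}_2)_{ijkl}\,\phi_i\phi_j\phi_k\phi_l.
\end{align}

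First I would fix an arbitrary nonzero $\phi$. By hypothesis, each of the two terms on the right-hand side is strictly positive (the scalar contraction of a positive-definite tensor with a nonzero vector), and the coefficients $t$ and $1-t$ are nonnegative and not both zero. Hence the right-hand side is a sum of nonnegative terms with at least one strictly positive, so it is strictly positive. Since $\phi$ was arbitrary, $Q(\vec{\lambda}_t)$ is positive-definite, which means $\vec{\lambda}_t$ lies in the set. This is exactly the statement that the set is convex.

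I would also remark that this is really just the general fact that a preimage of a convex set under an affine (here, linear) map is convex, combined with the fact that the cone of positive-definite rank-4 tensors is itself convex (which follows immediately from the two properties already noted in the text: closure under positive rescaling and under addition). The only mild subtlety — hardly an obstacle — is being careful at the endpoints $t = 0$ and $t = 1$, where one of the coefficients vanishes; there the conclusion is trivial since $\vec{\lambda}_t$ equals one of the original points. There is no real hard part here: the proof is a two-line consequence of linearity of $Q$ and positivity of each contribution.
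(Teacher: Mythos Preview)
Your proof is correct and follows essentially the same approach as the paper's: use linearity of $Q$ to write $Q(\vec{\lambda}_t) = tQ(\vec{\lambda}_1) + (1-t)Q(\vec{\lambda}_2)$ and then observe that a nonnegative combination of positive-definite tensors (with at least one coefficient positive) is positive-definite. If anything, you are slightly more careful than the paper in explicitly contracting with a nonzero $\phi$ and treating the endpoint cases $t\in\{0,1\}$ separately.
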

\begin{proof}
    A region $\mathcal{C}$ is convex by definition if the line segment separating any two points $\vec{\lambda}_a, \vec{\lambda}_b \in \mathcal{C}$ are also in $\mathcal{C}$-- we can write this as $t \vec{\lambda}_a + (1 - t)\vec{\lambda}_b \in \mathcal{C}$ for $t \in [0, 1]$. From the linearity of $Q(\vec{\lambda})$, we know that $Q(t \vec{\lambda}_a + (1 - t)\vec{\lambda}_b) = t Q(\vec{\lambda}_a) + (1-t) Q(\vec{\lambda}_b)$. We know that $Q(\vec{\lambda}_{a,b})$ are both positive-definite, and since $t \in [0,1]$, we know that $t, 1-t \geq 0$. Therefore, we know that both $t Q(\vec{\lambda}_a)$ and $(1-t)Q(\vec{\lambda}_b)$ are positive-definite, and so their sum, $Q(t \vec{\lambda}_a + (1-t) \vec{\lambda}_b)$ will also be positive-definite.
\end{proof}

Furthermore, we can straightforwardly see that
\begin{lemma}\label{prop:hypersphere}
Let $Q(\vec{\lambda})_{ijkl}$ be a rank-4 tensor representing the quartic part of a scalar potential, and let $Q$ be a linear function of $\vec{\lambda}$. Then, for some real number $r > 0$, $Q(r \vec{\lambda})$ is positive-definite if and only if $Q(\vec{\lambda})$ is positive-definite.
\end{lemma}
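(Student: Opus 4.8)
The plan is to deduce the claim directly from the homogeneity of $Q$, together with the elementary fact---already noted in the text---that a positive rescaling of a positive-definite rank-4 tensor is again positive-definite. The argument is short enough that there is no genuine obstacle; the one thing to be careful about is the precise meaning of ``linear.''

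First I would invoke the hypothesis that $Q$ is a linear function of $\vec{\lambda}$ in the homogeneous sense, which gives the tensor identity $Q(r\vec{\lambda})_{ijkl} = r\, Q(\vec{\lambda})_{ijkl}$ for every real $r$, and in particular for $r > 0$. Next I would contract both sides with an arbitrary $\phi \in \mathbb{R}^d$ to obtain $Q(r\vec{\lambda})_{ijkl}\phi_i\phi_j\phi_k\phi_l = r\, Q(\vec{\lambda})_{ijkl}\phi_i\phi_j\phi_k\phi_l$. Since $r > 0$, multiplication by $r$ does not change the sign of the right-hand side, so the left-hand side is strictly positive for every $\phi \neq 0$ if and only if the right-hand side is; that is, $Q(r\vec{\lambda})$ is positive-definite iff $Q(\vec{\lambda})$ is. Both directions of the biconditional come for free because $r \mapsto r\vec{\lambda}$ is invertible on $r>0$ (its inverse being multiplication by $1/r$), so no separate converse argument is required.

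The only subtlety worth flagging is that ``$Q$ linear in $\vec{\lambda}$'' should be understood as genuinely homogeneous linear ($Q(\vec{0}) = 0$ and $Q(r\vec{\lambda}) = r Q(\vec{\lambda})$) rather than merely affine; this is consistent with the way linearity is used in Lemma~\ref{prop:convex}. One should also note that the strict inequality in Eq.~(\ref{eq:general-BFB-cond}) is imposed only away from the origin, so the trivial contraction at $\phi = 0$ plays no role. As a remark, combining this lemma with Lemma~\ref{prop:convex} shows that the positive-definite region in $\vec{\lambda}$-space is in fact a convex cone, which is precisely what motivates restricting the later sampling of $\vec{\lambda}$ to directions on a hypersphere.
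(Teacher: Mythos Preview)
Your proof is correct and follows essentially the same approach as the paper: both use the homogeneity $Q(r\vec{\lambda}) = r\,Q(\vec{\lambda})$ from linearity, together with the fact that a positive rescaling preserves positive-definiteness, and both handle the converse via the inverse scaling $r^{-1}$. Your additional remarks on the linear-versus-affine distinction and the convex-cone interpretation are accurate and consistent with how the paper uses linearity in Lemma~\ref{prop:convex}.
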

\begin{proof}
Both directions of the proof follow immediately from the linearity of $Q(\vec{\lambda})$. If $Q(\vec{\lambda})$ is positive-definite, then we know that $Q(r \vec{\lambda}) = r Q(\vec{\lambda})$ must also be positive-definite for $r > 0$, and similarly if $Q(r \vec{\lambda})$ is positive-definite, then $Q(\vec{\lambda}) = r^{-1} Q(r \vec{\lambda})$ is also positive-definite.
\end{proof}

Combining the two lemmas \ref{prop:convex} and \ref{prop:hypersphere}, we readily see that characterizing the entire strictly bounded-from-below region for a given scalar potential is simply the task of identifying a single, geodesically convex region on the surface of the unit hypersphere in $\vec{\lambda}$ space.

The convexity of the bounded-from-below region suggests a convenient feature: In order to fully characterize that region, we need merely to find a small ensemble of points within it from a random search, and then scan intelligently in the local vicinity of those points in order to locate the decision boundaries. We don't need to worry, for example, that any other valid region might exist, hugely separated in $\vec{\lambda}$ space from the convex region the classifier learns. It is therefore not unreasonable to expect that a sufficiently well-trained classifier can to good approximation characterize the \emph{entire} bounded-from-below region for a given scalar potential, and allow a model builder to easily include bounded-from-below constraints in arbitrary parameter scans even with potentials for which numerical or algebraic techniques for determining these constraints are unresolved or impractically computationally intensive. To realize such a classifier, we merely need to identify a classification technique that suits our needs and a training strategy to efficiently explore $\vec{\lambda}$.

\section{Bayesian Deep Learning and Uncertainty: A Review}\label{sec:bayesian-nn}

To answer the first of our needs, defining an appropriate classifier, we propose a Bayesian neural network \cite{mackay1992bayesian, magris2023bayesian}, a form of learner which has previously proven effective in addressing other problems arising in high energy physics, for example jet classification \cite{Araz:2021wqm,Bollweg:2019skg}, predictions for supersymmetric (SUSY) theories \cite{Kronheim:2020vct}, and analyzing galactic gamma ray observations \cite{List:2020mzd}. Since neural networks can in principle approximate any continuous function,\footnote{This is only rigorously true for infinitely wide or deep neural networks, and we will make some architecture choices later on which will further limit expressivity, but for practical purposes a finite neural network is capable of expressing essentially any decision boundary we are likely to come across.} a neural network should be extremely well-suited to learning an arbitrary classification rule for boundedness-from-below, where continuity and analyticity of the scalar potential ensure that a continuous function evaluating ``boundedness-from-below'' as a scalar function of the input quartic potential couplings, theoretically exists. In a Bayesian neural network, probability distributions rather than point estimates of the neural network's parameters are learned during training, affording greater uncertainty quantification abilities which we shall find useful. This section briefly reviews the concepts underlying a Bayesian deep neural network, outlining its suitability for our task and summarizing some important results leveraged in our analysis.

To maximize readability, the discussion here will be primarily qualitative and intuitive with as little mathematics as possible-- more detailed and mathematically rigorous discussions of Bayesian neural networks are deferred to Appendix \ref{appendix:BayesianDropout}. Before diving into the Bayesian neural network paradigm, it is useful to discuss the nature of predictive uncertainty in neural networks. Because the black-box nature of any neural network, it is virtually inconceivable that any classifier we produce will be perfectly accurate. Furthermore, even if it were, we would have no means of rigorously proving that accuracy. Therefore, in order to render our classifier useful, it must have some notion of its predictive uncertainty-- that is, given an input, the neural network must produce not only a label, but also some metric for how confident the classifier is about that label. We also note that a high degree of classifier uncertainty may stem from one of two principal sources in our problem: First, a point in $\vec{\lambda}$ space that is extremely close to the boundary between the bounded-from-below region and the unbounded region will presumably have uncertain classification because it resembles points in both possible classes. Second, a point anywhere in $\vec{\lambda}$ space may have uncertain classification simply because the model is insufficient-- either its training data is sparse in the vicinity of that point or the model isn't complex enough to perfectly capture the physics. These two sources of uncertainty-- from inherent ambiguity in the data and from insufficiency of the model-- are often classified as \emph{aleatoric} and \emph{epistemic} uncertainty, respectively.

It is clear that any classifier we wish to use for our task must have a notion of both epistemic and aleatoric uncertainty, and be capable of distinguishing between the two-- we can see this by considering how differently a model builder might consider points with significant levels of one or the other uncertainty. A point with high aleatoric uncertainty likely denotes a region that may be of physical interest, perhaps with a metastable scalar potential, for example. A point with high epistemic uncertainty, meanwhile, solely suggests a shortcoming of our training data or neural network that should trigger skepticism in the classification from a physicist, but can be corrected by expanding the training set or increasing the complexity of the model. Unfortunately, a conventional neural network-based classifier lacks any adequate machinery to track epistemic uncertainty. For clarity, we will discuss this problem in the case of a simple binary classifier. In this case, a neural network takes an input value $\mathbf{x}$ and outputs a confidence score (really, a likelihood) between 0 and 1 given as
\begin{align}\label{eq:deterministic-confidence}
    c_w (\mathbf{x}) = \frac{1}{1 + \exp{-f_w(\mathbf{x})}},
\end{align}
where $f_w$ is a real function specified by neural network's weights (trainable parameters) $w$.\footnote{Because this score ranges between 0 and 1 and represents a likelihood of the positive label $y$ given the input $\mathbf{x}$ and the weights $w$, it is often written as  $p(y | \mathbf{x}, w)$. Given that misinterpretation of the confidence as the genuine probability of the label $y$ given input $\mathbf{x}$ (that is, $p (y | \mathbf{x})$) is rife, however, we have opted for a different somewhat nonstandard notation.} This score indicates the model's confidence that $\mathbf{x}$ is in the ``positive'' class (which class is defined as ``positive'' is of course arbitrary)-- a score close to 1 indicates high confidence that $\mathbf{x}$ belongs to this class, while a score close to 0 indicates high confidence that it belongs instead to the other classification. Translating this into labels is then trivial: Points with $c_w(\mathbf{x}) > 0.5$ are classified as positive and all others are placed in the opposite class. Points with high aleatoric uncertainty are inherently ambiguous and will output confidence scores near 0.5, since similar training data points will incentivize the weights to modify predictions around the uncertain point in opposite directions. However, since the confidence score is the sole output from the neural network, it is also clear that there is no further information from our prediction that we might use to quantify epistemic uncertainty. In fact, it's been empirically demonstrated \cite{gal2016dropout} that neural networks can often produce highly confident but incorrect predictions for points that are dissimilar from any training data-- obviously, the confidence score alone not only lacks any way for us to separate aleatoric and epistemic uncertainty, it appears to disregard the latter entirely.

\subsection{Bayesian Neural Networks}

Bayesian neural networks offer a solution to this shortcoming. While a conventional deterministic neural network learns a series of point estimates for its weights, based on their maximum likelihood values given the training data, a Bayesian neural network is presented with a prior probability weight distribution, and learns a posterior distribution given the training data. As a result, a Bayesian neural network doesn't offer a deterministic point estimate for its prediction, but instead offers a probability distribution over outputs. The variance of this distribution provides a notion of epistemic uncertainty: If more data is provided, the posteriors of the model weights will become sharper (reflecting our diminishing ignorance), in turn sharpening the distribution of the prediction.

In practical terms, translating the above notions into tractable approaches for deep learning is nontrivial, and for details we refer readers to Appendix \ref{appendix:BayesianDropout} and a review such as \cite{magris2023bayesian}. In this work, we approximate Bayesian inference using a technique known as Monte Carlo dropout \cite{gal2016dropout}. Originally devised as a regularization technique for neural networks, dropout randomly sets the outputs of some neurons to zero during each pass through the network during training. In \cite{gal2016dropout}, a correspondence was proved between a conventional neural network trained with dropout and an (approximate) Bayesian neural network. The dropout-trained neural network will produce prediction distributions with the same mean and variance as the corresponding Bayesian neural network as long as dropout is also applied when making predictions. In the original proposal for Monte Carlo dropout, the dropout probability of the neurons remained a free input parameter in the training process, which required optimization for any given set of training data via simple trial and error (or in other words, repeated training of the network). Because our task involves adding new training data throughout the learning process, we instead opt for a modified approach known as concrete dropout \cite{gal2017concrete}, which treats the dropout probability of each network layer's neurons as a learnable parameter which is automatically optimized during training. The details of the implementation of concrete (and Monte Carlo) dropout are discussed in Appendix \ref{appendix:BayesianDropout}.

To glean information about the prediction distribution for an input $\mathbf{x}$, then, we only need to repeatedly query our trained network with the dropout enabled. For example, a modified confidence score for an input $\mathbf{x}$ can be attained by taking the mean of the confidence scores given in some number of trials $T$, or more succinctly
\begin{align}\label{eq:bayesian-confidence}
    \overline{c}(\mathbf{x}) = \frac{1}{T} \sum_{t=1}^{T} c_t (\mathbf{x}),
\end{align}
where $t$ here represents some specific forward pass of the input through the model, and $c_t(\mathbf{x})$ represents the output confidence score from the $t^{\textrm{th}}$ pass. Now we can consider how this confidence score might assess uncertain inputs. An input with high aleatoric uncertainty will, similar to the deterministic case, result in most passes through the neural network outputting values near 0.5, but a point about with high epistemic uncertainty (and therefore a high variance in the weights which contribute to the output) will instead give outputs which are highly certain in opposing directions, \ie, some outputs will be close to 1 and others close to 0. $\overline{c}(\mathbf{x})$, will therefore still be close to 0.5. In contrast to the confidence score given in Eq.(\ref{eq:deterministic-confidence}), then, this Bayesian version has a notion of \emph{both} aleatoric and epistemic uncertainty.

\subsection{Quantifying Uncertainty in Bayesian Deep Learning}\label{sec:uncertainty-quantification}

Having argued that the probability distributions predicted by a Bayesian neural network reflect both epistemic and aleatoric uncertainty, we can now summarize how these components are estimated in our specific application. Information theory suggests tractable estimates of both total uncertainty (that is, epistemic and aleatoric uncertainty combined) and epistemic uncertainty. A measurement of total uncertainty is the Shannon entropy \cite{shannon1948communication}, which we can estimate as
\begin{align}\label{eq:shannon-entropy}
    H(\overline{c}) = - \overline{c} \log \overline{c} - (1 - \overline{c}) \log (1 - \overline{c}),
\end{align}
where $\overline{c} = \overline{c}(\mathbf{x})$, as defined in Eq.(\ref{eq:bayesian-confidence}), for some input $\mathbf{x}$. The expression in Eq.(\ref{eq:shannon-entropy}) is maximized for $\overline{c} = 1/2$ and is precisely analogous to the Gibbs formula for entropy in thermodynamics. The epistemic uncertainty can be estimated via mutual information, given by
\begin{align}\label{eq:mutual-information}
    I(c_t) = H(\overline{c}) + \frac{1}{T} \sum_{t=1}^{T} \big[ c_t \log c_t + (1 - c_t) \log(1 - c_t) \big],
\end{align}
where $c_t$ follows the same notation as Eq.(\ref{eq:bayesian-confidence}). Formally, mutual information measures the expected information gained about the weights through knowing the label of the queried input. Because uncertainty in the weights in turn corresponds to a Bayesian neural network's measure of epistemic uncertainty, we follow the practice of \cite{depeweg2018decomposition} and identify the mutual information as an estimate of that quantity. We can see that mutual information isn't sensitive to low-confidence predictions in the individual passes $c_t$ through the neural network, as long as they're consistent: Even if $\overline{c} = 1/2$, $I(c_t) = 0$ if all $c_t = \overline{c}$. Instead, mutual information is maximized for points where there is a high degree of disagreement between different predictions of the neural network, regardless of the inherent ambiguity of the input.

We finally invoke one further candidate metric for model uncertainty, the variation ratio. This quantity is given by
\begin{align}\label{eq:variation-ratios}
    VR(\mathbf{x}) = 1 - \frac{f_\mathbf{x}}{T},
\end{align}
where $f_\mathbf{x}$ is the number of the $T$ forward passes through the neural network that specify a classification \emph{other} than the most common label for some input $\mathbf{x}$. In the Bayesian case, this approximates the likelihood that the mode label is \emph{not} the true label, given the input and the training data. This measure doesn't cleanly correspond to either epistemic or aleatoric uncertainty, but its intuitive usefulness is clear, and it is often included in studies on uncertainty in Bayesian deep learning \cite{rakesh2021efficacy,gal2017deep}. We note that, unlike mutual information, points with higher aleatoric uncertainty (that is, with predicted confidence scores all near 1/2) will tend to also have a high variation ratio, since a smaller variability of the confidence scores will result in more instances of conflicting classification if $\overline{c}$ is near 1/2 than if it instead were highly certain. At the same time, a point with high aleatoric uncertainty might have a smaller variation ratio than a point with lower aleatoric uncertainty, simply because the latter point has a greater variance in its predictions.

An important caveat to each of these metrics of uncertainty is that they are not, a priori, \emph{calibrated}, especially in active learning where the training set is disproportionately composed of inputs for which the neural network is uncertain. This means that the likelihood estimated from, \eg, the variation ratio is \emph{not} going to be generically equal to the probability that a given point drawn from some distribution is incorrectly classified. Of course, since at the time of training we generally can't know the distribution of quartic couplings in some user's phenomenological scan (which will likely have a complicated prior depending on a variety of parameters other than the quartic scalar couplings, such as physical particle masses or vev mixing parameters), the notion of a generally well-calibrated uncertainty is nonsensical here. It is feasible that an uncalibrated model produced by analyses of the type presented here can be calibrated for a particular distribution of inputs using techniques discussed in, \eg, \cite{laves2019well}, but we do not pursue this possibility here or in our public package BFBrain. Uncalibrated uncertainties, however, are hardly useless-- they contain information about the relative confidence for some input that the neural network has about its outputs, compared to other inputs. In our empirical studies, we shall find that the uncertainty metrics outlined in this section serve as an excellent predictor of the classifier accuracy for a given input point. In a numerical scan of parameter space points, therefore, we can readily apply the neural network to classify scalar potential boundedness-from-below while identifying points where the classification may be unreliable, either excluding these points from the scan or subjecting them to further (more computationally expensive) analysis.

\section{Active Learning}\label{sec:active-learning}

Having discussed the mechanics of our classifier, we can now address the training strategy we employ to explore our parameter space. A key issue with which we are presented is that the overwhelming majority of points in the full parameter space (namely the hypersphere in $\vec{\lambda}$ space) will for most scalar potentials be thoroughly uninteresting, while only a very small portion will yield bounded-from-below potentials. Furthermore, explicitly labelling any points to create training data will, by the nature of the problem, be computationally expensive. Active learning is a paradigm that efficiently addresses these problems, by dynamically generating a training set consisting of only the most informative samples from parameter space over the course of training.\footnote{For a review of active learning, see, \eg, \cite{ren2021survey}} In this section, we shall review some of the core concepts of active learning and outline the application of these concepts in the context of our particular problem.

An implementation of an active learning strategy generally consists of three components:
\begin{itemize}
    \item An \emph{oracle} computes the label for given input data. This calculation, as noted earlier, will generally be computationally expensive, and so calls to the oracle will represent a significant bottleneck in the program and should be minimized.
    \item Some sort of statistical \emph{learner} (in our case a Bayesian neural network functioning as a binary classifier) which will be trained on points labelled by the oracle. This learner will be trained on data labelled by the oracle, and then its outputs on additional unlabelled data will suggest new points for the oracle to label and incorporate into the next round of training.
    \item A \emph{query strategy} by which the trained learner suggests new points to the oracle for labelling, which are then incorporated into the training set for the next round of learning. This strategy generally selects new training points on which the classifier is highly uncertain.
\end{itemize}

In our analysis, then, the active learning program flow is as follows:

\begin{enumerate}
    \item Generate a random initial sample of $\vec{\lambda}$'s and label them with the oracle. This shall become the initial training data $P_{\textrm{train}}$.
    \item Train the classifier on $P_{\textrm{train}}$.
    \item Generate an additional sample $L$ of candidate $\vec{\lambda}$'s in the vicinity of points which the oracle has labelled as bounded-from-below.
    \item Score the points in $L$ based on the algorithm's query strategy, and add those with the top percentile of scores to $P_{\textrm{train}}$.
    \item Repeat Steps 2-4 until a predetermined number of active learning iterations have completed.
\end{enumerate}

For convenience and clarity, we have also summarized the program flow of the active learning loop in Figure \ref{fig:active-learning}. With the active learning strategy outlined, we can spend the remainder of this section describing the individual components of this strategy in greater detail-- in particular the oracle and the query strategy. The specific architecture of our learner, a Bayesian neural network of the type discussed in Section \ref{sec:bayesian-nn}, may be of less interest to the general reader and so information on its structure is located in Appendix \ref{appendix:Learner}

\begin{figure}
    \centering
    \begin{tikzpicture}
        \node at (0,2) [query, fill=blue!20] (p-id) {Initial $P_{\textrm{train}}$};
        \node at (0,0) [query, fill=red!20] (c-id) {Train \textbf{learner} on all $P_\textrm{train}$};
        \node at (8,0) [query, fill=blue!20] (l-id) {Generate unlabelled data $L$};
        \node at (8,-2) [query, fill=violet!20] (q-id) {Select $K \subset L$ using \textbf{query strategy}};
        \node at (0,-2) [query, fill=green!20] (o-id) {Label $K$ with \textbf{oracle} and add to $P_\textrm{train}$};
        \path [connector] (p-id) -- (c-id);
        \path [connector] (c-id) -- (l-id);
        \path [connector] (l-id) -- (q-id);
        \path [connector] (q-id) -- (o-id);
        \path [connector] (o-id) -- (c-id);
    \end{tikzpicture}
    \caption{A schematic diagram of the flow of the active learning program, as described in the text. }
    \label{fig:active-learning}
\end{figure}
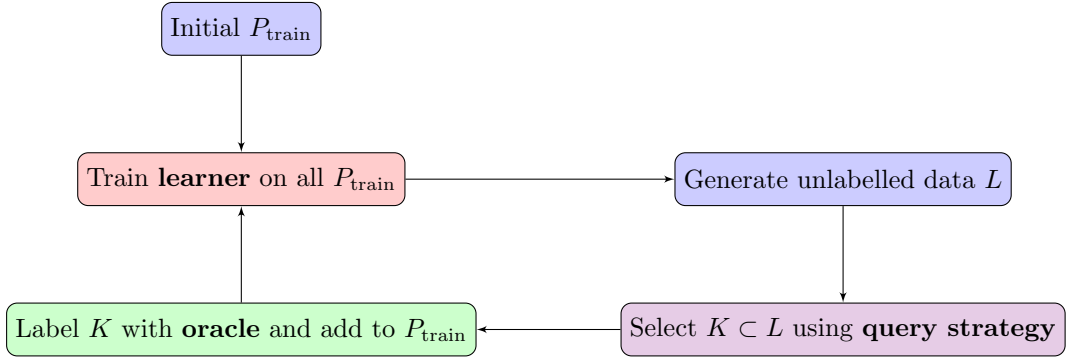

\subsection{Oracle}\label{sec:oracle}
The problem of creating an oracle to label scalar potentials as bounded from below is a somewhat nontrivial one-- in large part because the lack of a computationally efficient and highly accurate method for assigning these labels is precisely the problem that our analysis is devised to address. A possible choice would be to implement a version of the general algorithm discussed in \cite{Ivanov:2018jmz}, which determines whether a given scalar potential is bounded from below with perfect accuracy using the theory of resultants, but this method's accuracy comes at a price: Labelling a single data point for even an extremely simple model (\eg, the inert doublet model) will take hours-- a consequence of the fact that the algorithm is executed in exponential time. 

Instead, for this analysis, we opt for introducing a small degree of inaccuracy into our oracle in exchange for a substantially more time-efficient labelling procedure. A necessary and sufficient condition for satisfying the positive definiteness condition of Eq.(\ref{eq:general-BFB-cond}) (and therefore establishing boundedness from below) is the non-negativity of the potential for every value of the vev vector $\phi$ on $S^{d-1}$, the surface of the $d$-dimensional unit hypersphere, where we remind the reader that $\phi$ is a real $d$-dimensional vector parameterizing the vev configuration of the scalar field(s) in the theory. So, we can approximate this condition, theoretically arbitrarily well, by simply repeatedly locally minimizing the quartic potential on $S^{d-1}$ an arbitrary number of times with random starting points, and labelling a point as bounded from below if all the local minima found by the oracle are positive. In Algorithm \ref{alg:oracle}, we depict this simple strategy schematically. In the practical implementation of our analysis (and in the public package BFBrain), the execution of this algorithm is optimized using Jax \cite{jax2018github} and the JaxOpt constrained numerical optimization package \cite{jaxopt_implicit_diff}, leveraging these tools' automatic differentiation and parallelization capabilities to greatly speed up computation.

\begin{algorithm}[hbt!]
\caption{A schematic depiction of our approximate oracle's classification strategy. Here, $PGD(f, \phi_0)$ refers to a projected gradient descent minimization over $\phi$ on the unit hypersphere $S^{d-1}$, with starting point $\phi_0$.}\label{alg:oracle}
\KwIn{$Q(\vec{\lambda)}, \; \vec{\lambda} \in \mathbb{R}^m, \; n_{\textrm{iter}} \gsim O(100)$}
\For{$i \leq n_{\textrm{iter}}$}{
$\phi_0 \gets $ Random $\in \mathbb{R}^d$\;
$V = PGD(Q(\vec{\lambda}) \phi^4, \phi_0)$\;
\If{$V \leq 0$}{return False\;}
}
return True\;
\end{algorithm}

It is clear that formally, our approximate oracle represents necessary but not sufficient conditions for boundedness-from-below. The accuracy with which these conditions approximate necessary \emph{and} sufficient boundedness-from-below conditions is dependent on three parameters. Two of these are simply parameters for the local minimization algorithm: The maximum number of gradient descent steps taken and the error tolerance of the minimizer (\ie, how small of a gradient norm does the minimizer accept as being indicative of a local extremum). We have found that for our experiments, these parameters have a negligible effect for reasonable choices of 10000 maximum steps and a tolerance of $10^{-3}$. The third parameter which affects our algorithm's precision is the number of local minimization attempts that the oracle makes for each potential, $n_{\textrm{iter}}$-- in practice this parameter overwhelmingly controls the oracle's accuracy. Fortunately, this parameter is (at least within hardware and software constraints) arbitrarily tunable, and as $n_{\textrm{iter}} \rightarrow \infty$, the approximation will approach perfect necessary and sufficient conditions. Of course, the fact that the calculation is always approximate introduces a degree of label noise into our training data-- some points which are in reality not bounded-from-below will be mislabelled. We will investigate the effects of this noise empirically in subsequent sections.

In order to use this oracle effectively, we must obviously estimate a value of $n_{\textrm{iter}}$ which dependably excludes points which are not bounded-from-below, but remains computationally tractable. Because we wish our strategy to readily apply to any possible renormalizable scalar potential, it is not realistic to make a general statistical guarantee of the oracle reliability for given values of $n_{iter}$-- at best, we might estimate the probability that a given local minimization iteration will yield a negative minimum. Instead, we rely on the robustness of results to increasing $n_{\textrm{iter}}$ to estimate an optimal value of this parameter for each scalar potential we consider: For each scalar potential, starting from $n_{\textrm{iter}} = 50$, we repeatedly use the oracle to label the same random sample of $10^5$ points (uniformly sampled from the surface of the unit hypersphere in $\vec{\lambda}$ space), increasing $n_{\textrm{iter}}$ by 50 after each labelling attempt. We estimate that the optimal value of $n_{\textrm{iter}}$ is the value for which the oracle produces identical labels for all $10^5$ points for at least 5 consecutive iterations. The basic strategy is depicted in Algorith \ref{alg:oracle-test}, and a flexible implementation of the strategy is included in the BFBrain package.

\begin{algorithm}[hbt!]
\caption{The empirical method for testing the approximate oracle $\Omega$ and identifying the best value of $n_{\textrm{iter}}$, using $\mathbf{\Lambda}$, a list of sets of scalar quartic coefficients in $\vec{\lambda}$}\label{alg:oracle-test}
\KwIn{$n_{\textrm{iter}} = 50, \; \mathbf{\Lambda}$}
$n_{\textrm{best}} \gets 50$\;
$count \gets 0$\;
old\_labels $\gets \Omega(n_{\textrm{iter}}, \mathbf{\Lambda})$\;
$n_{\textrm{iter}} \gets n_{\textrm{iter}} + 50$\;
\While{$count < 5$}{
$new\_labels \gets \Omega(n_{\textrm{iter}}, \mathbf{\Lambda})$\;
    \eIf{new\_labels == old\_labels}{
        $count \gets count + 1$
    }{$old\_labels \gets new\_labels$\; 
      $count \gets 0$\; 
      $n_{\textrm{best}} \gets n_{\textrm{iter}}$}
$n_{\textrm{iter}} \gets n_{\textrm{iter}} + 50$\;
}
$\textrm{return}$ $n_{\textrm{best}}$
\end{algorithm}

A natural question emerges at this point-- namely, why do we not simply employ this approximate oracle directly to label points in a phenomenological scan, rather than resorting to a neural network? The first and most obvious answer to this question lies in the fact that our oracle can rapidly become enormously computationally expensive, as the number of vev parameters increases and the number of local minimizations necessary to achieve robustness increases with it. After training, the neural network's performance is \emph{entirely independent} of the oracle's computational cost-- therefore, depending on the model builder's requirements for precision, an oracle of arbitrary expense may be used without affecting the computational efficiency of the neural network in regular use-- one may even use the NP-hard algorithm of \cite{Ivanov:2018jmz} as an oracle. Furthermore, in the case of a noisy (that is, suboptimal $n_{\textrm{iter}}$) oracle, the oracle itself lacks any capacity for uncertainty estimation. We shall find empirically that even in the presence of significant label noise (that is, with $n_{\textrm{iter}}$ such that $\sim 10 \%$ of the points labelled as bounded-from-below are false positives), metrics for model and predictive uncertainty on data outside of the training set remain effective indicators of the reliability of a prediction. We also find in some circumstances that a neural network trained on such noisy data achieves consistently better performance than the noisy oracle itself. Finally, the use of a neural network to classify boundedness-from-below in turn makes certain aspects of exploring the parameter space of the scalar potential considerably simpler: For example, the neural network's output now constitutes a differentiable function describing the boundedness-from-below of a scalar potential.

\subsection{Query Strategy}\label{sec:query-strategy}

The efficacy of our strategy for analyzing scalar potentials will clearly be highly dependent on the manner in which the training set is constructed. To that end, here we discuss our strategy for generating our training data in significant detail.
First, before active learning can take place, we need an initial (small) set of labelled training data, which we shall call $P_{\textrm{train}}$. To generate the initial $P_{\textrm{train}}$, we begin by uniformly sampling 1000 points from the unit hypersphere in $\vec{\lambda}$ space, and querying our oracle about their labels. In all of the scalar potentials we consider here (and likely in most scalar potentials that might be of interest), the overwhelming majority of the points generated in this manner will \emph{not} be bounded-from-below. To avoid incentivizing our neural network to simply label everything as not bounded-from-below, we must then either reweight the points which \emph{are} bounded-from-below or augment our training data with considerably more bounded-from-below points. We can leverage the convexity of the bounded-from-below region to readily do the latter: By randomly sampling points on the line segments between the known bounded-from-below points in $P_{\textrm{train}}$ (and then projecting these back onto the unit hypersphere), we can generate an arbitrary number of additional bounded-from-below points without any further need to label them (as long as we can be confident that our oracle has sufficiently small noise-- we shall investigate this effect later).\footnote{Strictly speaking, we could achieve a greater diversity of points by randomly sampling positive linear combinations of bounded-from-below points, but in practice this minor generalization would not be especially meaningful, since points generated by this procedure represent a small fraction of the training set once a sufficient number of active learning iterations have been performed.} Adding the newly-generated bounded-from-below points to $P_{\textrm{train}}$, we then have a balanced initial training set.

After generating the initial $P_{\textrm{train}}$, our query strategy must also identify additional points to be added to the training set during each active learning iteration. We shall be considering a paradigm known as pool-based active learning, where a pool of new unlabelled points $L$ are proposed to the classifier, and then the trained classifier selects some subset of them to be labelled and added to the training set. The next task for our query strategy, then, is to generate $L$. To begin, we consider which points are of interest in our task: Points in and near the single convex region in $\vec{\lambda}$ space in which the potential is bounded from below. Ideally, then, our pool should focus on this region already, rather than uniformly sampling the entire parameter space. Similar to \cite{goodsell2023active, hammad2023exploration}, we accomplish this emphasis by generating $L$ from sampling in the vicinity of bounded-from-below points in the training set. Specifically, in order to generate a new point from an existing training point $p$, we rotate $p$ in a random direction (uniformly sampled from all possible directions in $\vec{\lambda}$ space) by an angle (in radians) $\delta$, where $\delta$ is randomly chosen from a normal distribution $\mathcal{N}(0, \Delta^2)$. The initial points $p$ are randomly selected (with uniform probability) from the bounded-from-below points in $P_{\textrm{train}}$.

The procedure for generating $L$ leaves only a single free parameter that we must select: The scale of the rotation angles $\Delta$. To estimate an appropriate $\Delta$, we note that we wish $L$ to be, to good approximation, a pool of points drawn from a region that at least somewhat tightly encompasses the entire bounded-from-below region of $\vec{\lambda}$ space, while not oversampling points which are far from this region and therefore of little interest to us. We should therefore anticipate that an appropriate $\Delta$ value should equate to a length scale characteristic of the size of the bounded-from-below region-- then, we would anticipate that $L$ will likely (after several active learning iterations to expand the initially small pool of bounded-from-below points) sample a region that approximately covers the entire bounded-from-below region in parameter space and its immediate vicinity, without needing to sample from the entire parameter space. To estimate the characteristic length scale of the bounded-from-below region, we can simply approximate it (very roughly) as the surface of the hypersphere subtended by a single angle (which we shall suggestively also refer to as $\Delta$). Then, if a fraction $f$ of uniformly-distributed random points on the $\vec{\lambda}$ hypersphere is bounded-from-below, we can estimate that $\Delta$ satisfies the equation
\begin{align}\label{eq:delta-eq}
    f = \frac{\sqrt{\pi} \Gamma(\frac{n}{2})}{2 \Gamma(\frac{n + 1}{2})} (\sin \Delta)^{n - 1} \, {}_2 F_{1}  \bigg( \frac{1}{2}, \frac{n - 1}{2}; \frac{n + 1}{2}; \sin^2 \Delta \bigg), \; n \equiv \textrm{dim}(\vec{\lambda}),
\end{align}
where ${}_2 F_1(a,b;c;z)$ denotes the ordinary hypergeometric function and $\Gamma(x)$ is the Euler gamma function, as long as $f \leq 1/2$ (which for practical problems will almost certainly always be the case). Since $f$ can always be estimated by finding the fraction of the initially generated training points that are bounded-from-below (before we rebalance the initial training data by adding additional bounded-from-below points), we therefore can estimate $\Delta$ by numerically solving the above expression.

We should note that the expression in Eq.(\ref{eq:delta-eq}), being based on somewhat unrealistic assumptions about the geometry of the bounded-from-below region and predicated on an initial training sample that can give a highly uncertain measurement of $f$ (for example, if the training sample includes only $O(\textrm{several})$ bounded-from-below points), gives us only a very approximate characterization of the optimal value for $\Delta$. In general, it only suggests an order of magnitude. However, because we only require $L$ to provide adequate coverage of the full bounded-from-below region (because our query strategy will then identify which points in $L$ are most informative regardless of the pool's distribution), this approximate knowledge is all that is required-- in other words, we should anticipate that our learner's performance should be robust against $O(1)$ modifications of $\Delta$ (in fact, cursory experiments have borne this expectation out). We find that this strategy yields good results for the scalar potentials that we consider here, which suggests its applicability in a broader range of scalar potentials that users of our public code may want to analyze.

With a pool of candidate points $L$ generated, the final step of our query strategy is to identify the points in $L$ which are most informative, so that they can be labelled and added to $P_{\textrm{train}}$ for the next iteration of active learning. Of course, ``most informative'' is hardly a rigorously defined term, so we must arrive at a definition that yields a performant classifier after training. A common criterion for the informativeness of a point in active learning scenarios is the degree of uncertainty the classifier has about that point's label-- fortunately in the Bayesian neural network paradigm, we have ample metrics of uncertainty, discussed in Section \ref{sec:uncertainty-quantification}. Therefore, given $L$ if we wish to add $k$ points to our training data, we can select the $k$ points from $L$ which have the highest uncertainty measure based on one of the metrics discussed in that Section. This line of thinking leads us to four selection criteria, all of which we will consider in our analysis:
\begin{itemize}
\item \textbf{Maximum Entropy:} Points with the largest Shannon entropy, defined in Eq.(\ref{eq:shannon-entropy}), are selected.
\item \textbf{Bayesian Active Learning by Disagreement (BALD):} Points with the largest mutual information, defined in Eq.(\ref{eq:mutual-information}), are selected. This strategy's utility in active learning problems was discussed (and the name was coined) in \cite{houlsby2011bayesian}.
\item \textbf{Variation Ratios:} Points with the largest variation ratios, defined in Eq.(\ref{eq:variation-ratios}), are selected.
\item \textbf{Random:} As a control to gauge the efficacy of our other strategies, points are assigned a random ``uncertainty'' score (in reality just a random number sampled from the uniform distribution between 0 and 1), and those points with the highest scores are selected.
\end{itemize}

With each active learning iteration we add $5 \times 10^3$ new points to $P_{\textrm{train}}$, out of a pool $L$ consisting of $5\times 10^5$ points generated as described in this Section. In contrast to our approach when generating our initial training data, we do \emph{not} augment the newly added data with additional positive points to balance the data set labels-- in practice we find that approximate parity between the label classes is preserved in any event.

\section{Experiments}\label{sec:experiments}

In this Section, we shall present the results of employing our procedure for training a bounded-from-below classifier with several example scalar potentials.
We find with all examples that our methodology results in consistently high-performance classifiers that exhibit accuracy likely to be sufficient for most parameter point scans, as well as robust uncertainty estimates that permit more careful evaluation of the points most likely to be incorrectly labelled.

\subsection{Experiments: Scalar Potentials}\label{sec:scalar-potentials}

Before presenting our results, we must, of course, specify some scalar potentials to analyze with our techniques. For our purposes here, we shall only consider potentials with SM-like (that is, $SU(2)_L \times U(1)_Y$) gauge symmetry-- this is for simplicity and because much of the work on nontrivial boundedness-from-below conditions for scalar potentials has been done in this regime, for example regarding multi-Higgs doublet models. We stress, however, that the techniques outlined in this paper and implemented in our public code are theoretically applicable to \emph{any} renormalizable scalar potential with any symmetry group. Having limited ourselves to $SU(2)_L \times U(1)_Y$ potentials, then, we further narrow our considerations by selecting three different classes of scalar potentials to analyze in detail.
The first of these is the most general Two-Higgs doublet model (2HDM), where we write the quartic part of the scalar potential as 
\begin{align}\label{eq:V-2HDM}
    V^{(4)}_{\textrm{2HDM}} = &\frac{\lambda_1}{2} |H_1|^4 +  \frac{\lambda_2}{2} |H_2|^4 + \lambda_3 |H_1|^2 |H_2|^2 + \lambda_4 |H_1^\dagger H_2|^2\\ 
    & + \bigg[ \frac{\lambda_5}{2} (H_1^\dagger H_2)^2 + \lambda_6 |H_1|^2 (H_1^\dagger H_2) + \lambda_7 |H_2|^2 (H_1^\dagger H_2) + h.c. \bigg], \nonumber
\end{align}
$H_1$ and $H_2$ are two complex $SU(2)_L$ doublets. Because $\lambda_5$, $\lambda_6$, and $\lambda_7$ are all complex parameters, there are a total of 10 real quartic coefficients, or in our terminology the $\vec{\lambda}$ space which characterizes the potential is 10-dimensional. Meanwhile, after leveraging gauge invariance we can see that a given vev configuration in the model has 5 independent real parameters. Next, we consider a three-Higgs doublet model (3HDM) with a $Z_2 \times Z_2$ discrete symmetry imposed, initially proposed in \cite{Weinberg:1976hu} as a model with CP violation in the scalar sector and suppressed flavor-changing neutral currents. The quartic part of the potential function in this case is given by
\begin{align}\label{eq:V-3HDM}
    V^{(4)}_{\textrm{3HDM}} = &\lambda_1 |H_1|^4 + \lambda_2 |H_2|^4 + \lambda_3 |H_3|^4 + \lambda_4 |H_1|^2 |H_2|^2 + \lambda_5 |H_1|^2 |H_3|^2\\
    &+ \lambda_6 |H_2|^2 |H_3|^2 + \lambda_7 |H_1^\dagger H_2|^2 + \lambda_8 |H_1^\dagger H_3|^2 + \lambda_9 |H_2^\dagger H_3|^2 \nonumber\\
    &+ \frac{1}{2}\bigg[ \lambda_{10} (H_1^\dagger H_2)^2 + \lambda_{11} (H_1^\dagger H_3)^2 + \lambda_{12} (H_2^\dagger H_3)^2 + h.c.\bigg], \nonumber
\end{align}
where now the Higgs doublets $H_1$ and $H_2$ are joined by a third Higgs doublet $H_3$. Since $\lambda_{10}$, $\lambda_{11}$, and $\lambda_{12}$ are all complex parameters, the $\vec{\lambda}$ space for this potential is 15-dimensional. Meanwhile, a vev configuration for the model is entirely specified by 9 independent real parameters. Finally, we consider the ``precustodial'' variant of the Georgi-Machacek (GM) model \cite{Georgi1985DoublyCH} as presented in \cite{Moultaka:2020dmb}, in which the usual custodial $SU(2)$ of the GM model is omitted.\footnote{This may occur, for example, if custodial symmetry-violating terms are generated at the loop level.} This potential is given as
\begin{align}\label{eq:V-PC}
    V^{(4)}_{PC} = &\frac{\lambda_1}{4} |H|^4 + \frac{\lambda_2}{4} (Tr A^\dagger A)^2 + \frac{\lambda_3}{4} Tr (A^\dagger A)^2 + \frac{\lambda_4}{4!} [Tr B^2]^2 + \lambda_5 |H|^2 Tr A^\dagger A \\
    &+ \lambda_6 H^\dagger A A^\dagger H + \frac{\lambda_7}{2} |H|^2 Tr B^2 + \frac{\lambda_8}{2} (Tr A^\dagger A)(Tr B^2) + \frac{\lambda_9}{2} (Tr A B)(Tr A^\dagger B) \nonumber\\
    &+\frac{i \lambda_{10}}{2} (H^T \sigma^2 A^\dagger B H - H^\dagger B A \sigma^2 H^*), \nonumber
\end{align}
where $H$ is the SM Higgs doublet, and $A$ and $B$ are real and complex triplets of $SU(2)_L$, respectively. Since all the coefficients in Eq.(\ref{eq:V-PC}) are real, the $\vec{\lambda}$ space here is 10-dimensional, while a vev configuration is fully specified by 10 real parameters. For the convenience of the reader, we have collected key information on our three scalar potentials in Table \ref{tab:potentials}.

\begin{table}[]
    \centering
    \begin{tabular}{| c | c | c | c |}
        \hline
        Potential &  Equation & Quartic Couplings & Vev Components\\
        \hline
        2HDM & Eq.(\ref{eq:V-2HDM}) & 10 & 5\\
        \hline
        3HDM & Eq.(\ref{eq:V-3HDM}) & 15 & 9\\
        \hline
        Precustodial & Eq.(\ref{eq:V-PC}) & 10 & 10\\
        \hline
    \end{tabular}
    \caption{\footnotesize The three different scalar potentials that we consider in our experiments for this work, along with the equations giving their scalar potential values, the number of independent real quartic couplings, and the number of independent real parameters to specify a vev in each model.}
    \label{tab:potentials}
\end{table}

Several remarks are in order regarding the scalar potentials we have chosen for this exploration. First, in spite of not extending the SM gauge group, all three of our potentials have highly nontrivial conditions for boundedness-from-below.
Of the three, only the 2HDM potential of Eq.(\ref{eq:V-2HDM}) has known exact symbolic bounded-from-below conditions, first presented in \cite{Maniatis:2006fs} and later expressed compactly as conditions on the eigenvalues of a Minkowski matrix in \cite{Ivanov:2006yq}. We have found these exact conditions to be a useful cross-check to our results for this potential, where we find that our oracle delivers perfect accuracy on all of our validation sets, and misclassifies only a handful of the $O(10^5)$ points used in training sets, which by the nature of active learning will be inherently more ambiguous. In the case of the 3HDM potential, only sufficient, but not necessary, conditions are known precisely (derived in \cite{Grzadkowski:2009bt}, with an alternative set discussed in \cite{Boto:2022uwv}, the latter of which applies only to real $\lambda_{10-12}$), in spite of the model having been originally proposed nearly 50 years ago. Partially resolved symbolic expressions for boundedness-from-below of the precustodial potential of Eq.(\ref{eq:V-PC}) are derived in \cite{Moultaka:2020dmb,Chen:2023ins}, but elements of the procedure the authors have derived still require establishing the positive-definiteness of a system of quartic polynomials, albeit a lower-dimensional one than the full space of possible vev configurations. 
The 3HDM and precustodial potentials therefore exhibit interesting use cases for the procedure we are exploring in this work-- closed-form symbolic expressions for their bounded-from-below conditions are unknown, and, because of the $O(10)$ number of parameters specifying a vev configuration in these models, unlikely to be tractable. Meanwhile, the 2HDM, although its bounded-from-below conditions have been solved, allows us a reassuring validation of the procedure's performance in one of the most complicated scenarios in which these conditions have been fully resolved.

\subsection{Experiments: Classifier Performance on Uniformly Sampled Test Sets}\label{sec:exp-performance}
For our first experiments, we simply implement our strategy, as described in Section \ref{sec:active-learning}, to create classifiers for the three scalar potentials we are considering. To get a better sense of their performance, for each potential we have performed the analysis five times (with five different random number seeds for generating and labelling training and validation data) and present the mean and variance of each performance metric we depict here. For validation data, each trial uses $10^6$ points sampled uniformly from the $\vec{\lambda}$ space hypersphere of the corresponding scalar potential. With an eye toward practical applications of our proposed analysis procedure, we note that this method of validation, which relies on labelling a large number of instances which aren't used in training, may be inadvisable or impractical if a particularly computationally expensive oracle is used, such as the NP-hard exact algorithm in \cite{Ivanov:2018jmz}. In such cases, a user can estimate improvements in $F_1$ score (that is, $\Delta F_1$) on \emph{unlabelled} data by measuring the agreement between classifier predictions after successive active learning iterations, following \cite{altschuler2019stopping}. As our current oracle is efficient enough to permit large validation sets, we do not explore this possibility here, but the BFBrain package contains multiple methods of tracking model performance in the absence of a labelled validation set, and we refer a curious reader to the package's documentation.

Returning to the parameters of our current experiment, for our results in this Section, we have performed active learning for 20 iterations for the 2HDM and precustodial potential, and for 40 iterations for the 3HDM potential, which we find produces highly performant classifiers. Our experiments will involve a variety of training hyperparameters-- for convenience we summarize the most relevant ones in Table \ref{tab:default-hyperparams}, as well as including ``default values'' for these parameters, which our training experiments will use unless otherwise specified. Beginning our experiments, in Figures \ref{fig:2HDM-F}, \ref{fig:3HDM-F}, and \ref{fig:Precustodial-F}, we get a sense for the overall performance of the method on our different scalar potentials by plotting the $F_1$ score, precision, and recall of the classifier on validation data, evaluated after each active learning iteration. To get a better idea of the quality of uncertainty quantification, these figures also depict the same quantities, evaluated on data sets where the most uncertain inputs (as judged by mutual information, which we find to be the most effective discriminator between correctly and incorrectly classified points) are omitted from the validation data-- specifically, we remove the inputs within the top 5\% of mutual information estimates within their predicted class.\footnote{We find that in general, due to the bounded-from-below points being comparatively clustered close together in $\vec{\lambda}$ space, points with this classification tend to have somewhat larger mutual information values than points which aren't bounded-from-below, at least for validation points drawn uniformly from the $\vec{\lambda}$ hypersphere. Therefore it is somewhat more useful to consider relative uncertainties of points separately for each predicted label class rather than computing uncertainty quantiles across all inputs at once. We emphasize, however, that by segregating quantile computation based on the classifier's prediction, we can perform this same discrimination with unlabelled data sets that might be encountered in, \eg, a parameter space scan.} For clarity, we have also included corresponding tables of the final values achieved for each classifier metric at the end of training, which provides a more quantitative picture of our results-- Table \ref{tab:fig-results-1} outlines the final performance metrics achieved for classifiers of all three models on full sets of validated data, while Table \ref{tab:fig-results-2} depicts the same performance metrics after points with a high mutual information score are removed from the validation sets in the manner that we have described.

\begin{table}[]
    \centering
    \begin{tabular}{| c | c | c |}
        \hline
        Parameter &  Definition & Default\\
        \hline
        $l$ & Length scale of weight prior $\mathcal{N}(0, l^{-2})$ (see Appendix \ref{appendix:Learner})& 0.1\\
        \hline
        Epoch Patience & Number of epochs without loss improvement before Adam terminates & 100\\
        \hline
        Layers & Number of hidden neural network layers (with 128 neurons each) & 5\\
        \hline
        $n_{\textrm{iter}}$ & Oracle accuracy hyperparameter (see Section \ref{sec:oracle}) & \begin{tabular}{@{}c@{}}100 (2HDM,3HDM) \\ 250 (Precustodial)\end{tabular}\\
        \hline
    \end{tabular}
    \caption{\footnotesize The most relevant hyperparameters for our active learning experiments in this Section. In our analysis, the value in the ``Default'' column for each parameter is used unless we specify otherwise.}
    \label{tab:default-hyperparams}
\end{table} 

\begin{figure}
    \includegraphics[width=6in]{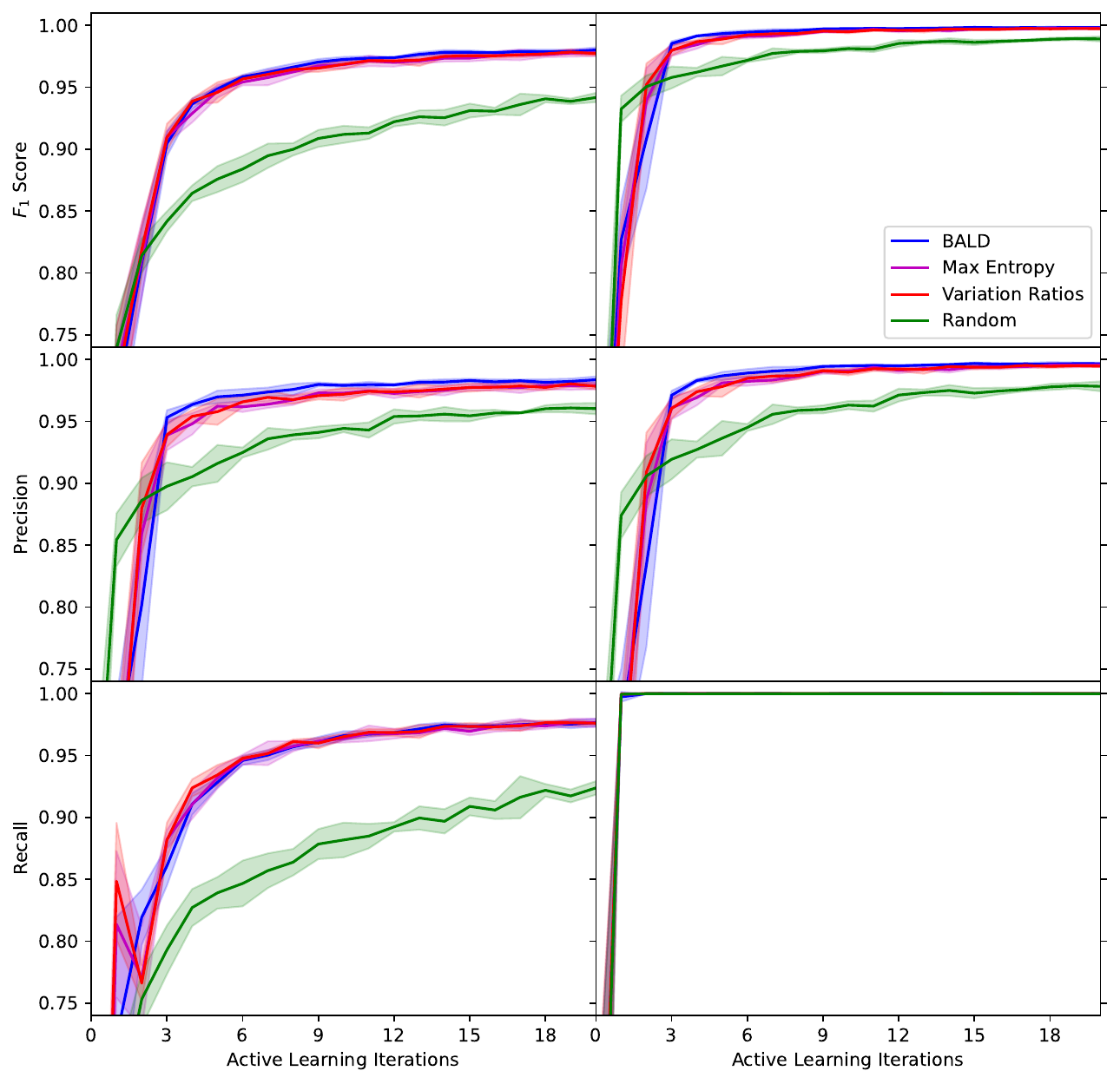}
    \caption{For the general 2HDM potential: (Left) The $F_1$ score (top), precision (middle), and recall (bottom) of the boundedness-from-below classifier with 5 hidden layers of 128 neurons each for BALD (blue), maximum entropy (magenta), variation ratios (red), and random (green) acquisition functions,
    as a function of the number of active learning iterations performed, recorded over the course of executing the active learning loop as described in Section \ref{sec:active-learning}. Each experiment is performed 5 times with different starting weights, initial training data, and initial validation data; the lines depicted represent the mean performance of all 5 trials, with the standard deviation being depicted as the transparent filled regions. Final classifier performances are listed in Table \ref{tab:fig-results-1}. (Right): As on the left, but with the validation set altered by removing the points for which the classifier predicts a mutual information (defined in Eq.(\ref{eq:mutual-information})) score greater than  $95^{\textrm{th}}$ percentile of all points with the same predicted classification. Final classifier performances on validation data subject to these restrictions are listed in Table \ref{tab:fig-results-2}.}
    \label{fig:2HDM-F}
\end{figure}
\begin{figure}
    \includegraphics[width=6in]{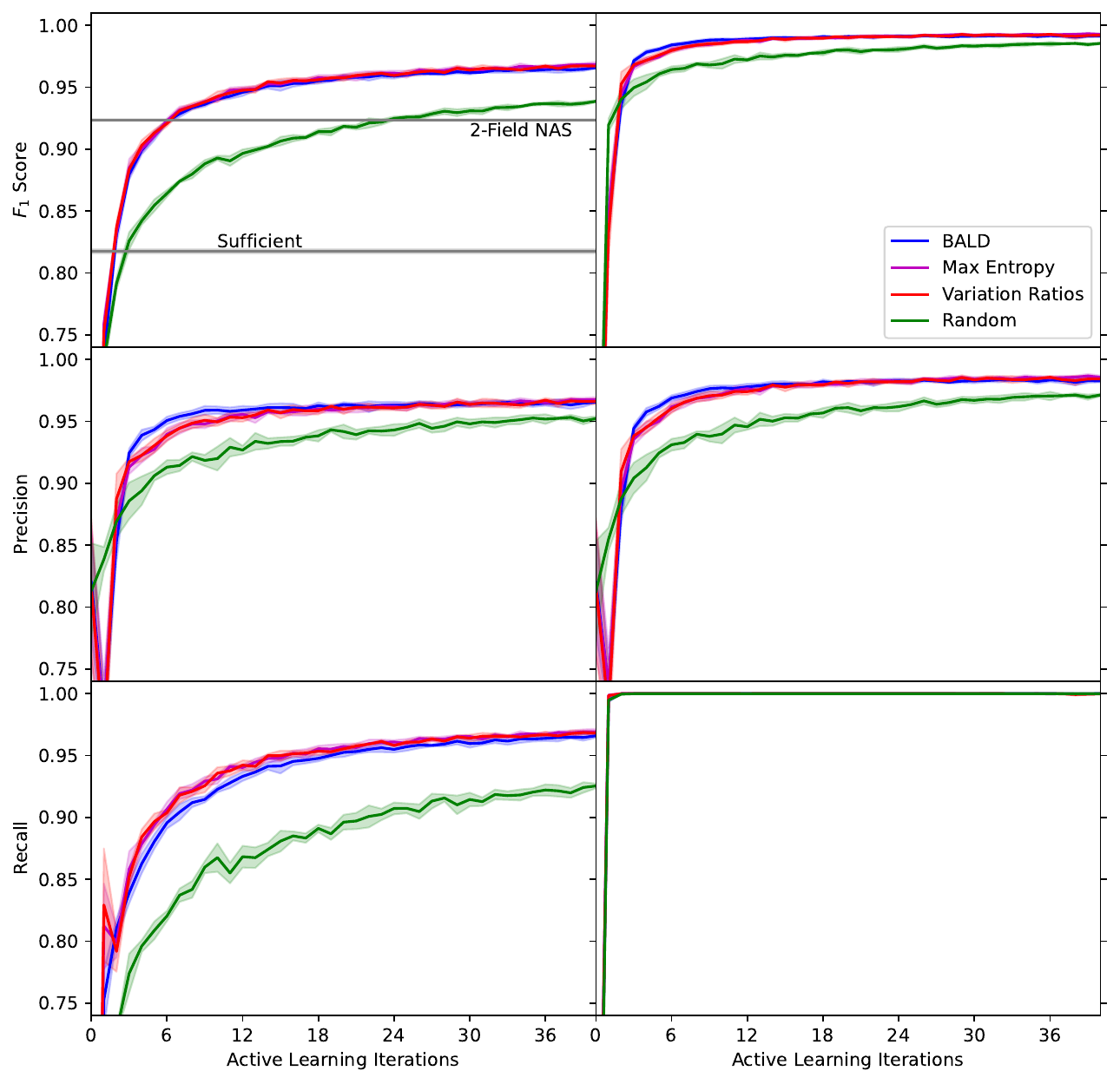}
    \caption{As Figure \ref{fig:2HDM-F}, but for the 3HDM potential given in Eq.(\ref{eq:V-3HDM}). Final performances for the data depicted in the left (right) column are listed in Table \ref{tab:fig-results-1}(\ref{tab:fig-results-2}). Notice that due to slower convergence, we have allowed active learning to continue for 40 iterations here, rather than the 20 iterations considered for the other potentials. For comparison, we have included the $F_1$ score on the validation data of the sufficient conditions of \cite{Grzadkowski:2009bt}, assuming that our oracle labels are accurate, as well as the $F_1$ score from applying the necessary and sufficient conditions for boundedness-from-below if only 2 of the 3 fields are allowed to achieve nonzero vev's simultaneously. As with other lines the value and uncertainty of these $F_1$ scores are taken as the mean and standard deviation of 5 independent experiments (in this case the 5 independent validation data sets).}
    \label{fig:3HDM-F}
\end{figure}
\begin{figure}
    \includegraphics[width=6in]{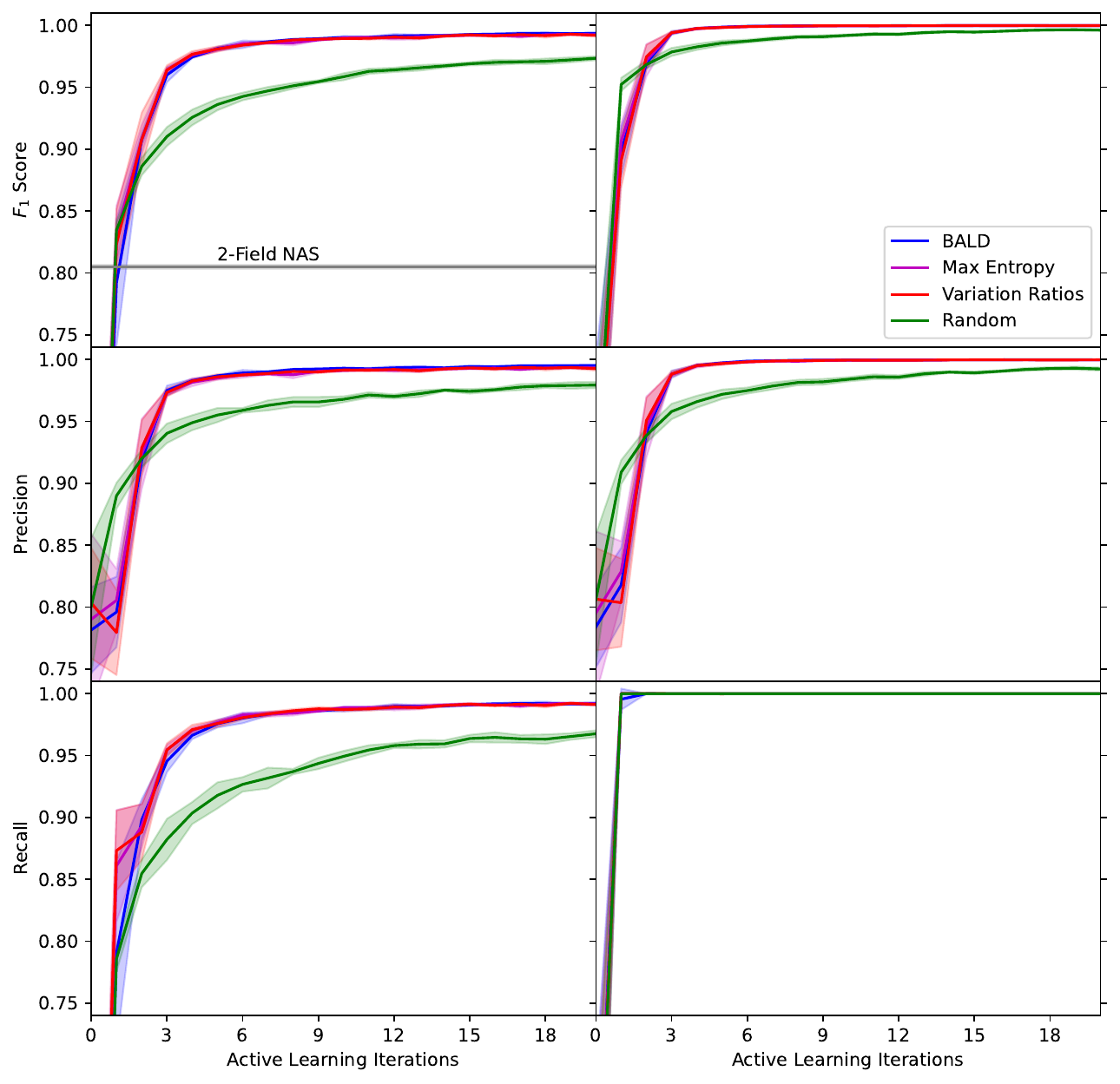}
    \caption{As Figure \ref{fig:2HDM-F}, but for the precustodial GM potential given in Eq.(\ref{eq:V-PC}). Final performances for the data depicted in the left (right) column are listed in Table \ref{tab:fig-results-1}(\ref{tab:fig-results-2}). For comparison, we have included the $F_1$ score on the validation data of the symbolic necessary and sufficient conditions for boundedness-from-below, assuming that only 2 of the 3 fields achieve nonzero vevs at a time (these may be readily extracted from \cite{Moultaka:2020dmb}). As in Figure \ref{fig:3HDM-F} we have assumed that our oracle labels are accurate, and the value and uncertainty of this $F_1$ score is taken as the mean and standard deviation of 5 independent experiments.}
    \label{fig:Precustodial-F}
\end{figure}

\begin{table}[]
    \centering
    \begin{tabular}{| c | c | c | c | c |}
        \hline
         Potential & Query & $F_1$ Score & Precision & Recall\\
        \hline
        \multirow{4}{*}{2HDM} & BALD & 0.980(2) & 0.984(3) & 0.977(3) \\
        & Max Entropy & 0.978(2) & 0.979(3) & 0.976(4)\\
        & Variation Ratios & 0.977(2) & 0.978(2) & 0.976(4)\\
        & Random & 0.942(4) & 0.960(5) & 0.924(5)\\
        \hline
        \multirow{4}{*}{3HDM} & BALD & 0.9658(7) & 0.9656(6) & 0.966(2) \\
        & Max Entropy & 0.968(2) & 0.968(2) & 0.969(2)\\
        & Variation Ratios & 0.967(2) & 0.966(2) & 0.968(1)\\
        & Random & 0.9387(8) & 0.952(2) & 0.925(2)\\
        \hline
        \multirow{4}{*}{Precustodial} & BALD & 0.9936(7) & 0.9952(7) & 0.992(1) \\
        & Max Entropy & 0.9925(4) & 0.9932(6) & 0.9918(7)\\
        & Variation Ratios & 0.992(1) & 0.993(2) & 0.991(1)\\
        & Random & 0.973(2) & 0.979(3) & 0.967(3)\\
        \hline
    \end{tabular}
    \caption{\footnotesize The $F_1$ score, precision, and recall achieved by fully trained classifiers for different scalar potentials and active learning query strategies on their full validation data sets, depicted in the left columns of Figures \ref{fig:2HDM-F}-\ref{fig:Precustodial-F}. Means and uncertainties computed by averaging the results of five independent trials.}
    \label{tab:fig-results-1}
\end{table}

\begin{table}[]
    \centering
    \begin{tabular}{| c | c | c | c | c |}
        \hline
         Potential & Query & $F_1$ Score & Precision & Recall\\
        \hline
        \multirow{4}{*}{2HDM} & BALD & 0.9983(5) & 0.997(1) & 1.0(0) \\
        & Max Entropy & 0.9978(8) & 0.996(2) & 1.0000(1)\\
        & Variation Ratios & 0.9978(8) & 0.9947(9) & 1.0(0)\\
        & Random & 0.989(2) & 0.978(4) & 1.0(0)\\
        \hline
        \multirow{4}{*}{3HDM} & BALD & 0.9913(6) & 0.983(1) & 0.99998(3) \\
        & Max Entropy & 0.9927(7) & 0.986(1) & 0.99990(6)\\
        & Variation Ratios & 0.9920(9) & 0.984(2) & 0.99983(8)\\
        & Random & 0.9855(6) & 0.971(1) & 1.0(0)\\
        \hline
        \multirow{4}{*}{Precustodial} & BALD & 0.99982(9) & 0.9998(1) & 0.99988(7) \\
        & Max Entropy & 0.99982(4) & 0.99974(6) & 0.99990(5)\\
        & Variation Ratios & 0.99986(3) & 0.9998(1) & 0.99991(7)\\
        & Random & 0.9962(6) & 0.992(1) & 1.0(0)\\
        \hline
    \end{tabular}
    \caption{\footnotesize As Table \ref{tab:fig-results-1}, but reflecting the performance on the validation sets after removing all points with mutual information greater than the $95^{\textrm{th}}$ percentile of points in their predicted classification, corresponding to the charts in the right column of Figures \ref{fig:2HDM-F}-\ref{fig:Precustodial-F}.}
    \label{tab:fig-results-2}
\end{table}

From these Figures and Tables, we can already glean a number of interesting characteristics of our method. First, for all three scalar potentials considered, the classifier achieves significant accuracy. Omitting the random query strategy (which should by design be inferior to all active learning strategies we employ) we find $F_1$ scores ranging from above 0.96 for the 3HDM potential to in excess of 0.99 for the precustodial potential. In turn, this indicates that the classifiers uniformly exhibit both comprehensive (that is, few false negatives) and precise (that is, few false positives) coverage of the bounded-from-below parameter space. 

It is important to emphasize that in the case of scalar potentials for which symbolic necessary and sufficient bounded-from-below conditions are unknown or intractable, the methodology here \emph{substantially} outperforms the more conventional techniques for approximating bounded-from-below conditions we have considered here. Applying sufficient conditions to the 3HDM potential or applying the necessary and sufficient conditions in the precustodial potential with only two fields with nonzero vevs both achieve $F_1$ scores on the validation sets of not far in excess of 0.8-- in the case of the former, this stems from a significant number of false positives, while in the case of the latter, it stems from false negatives. Applying the two-field necessary and sufficient symbolic conditions to the 3HDM potential results in somewhat improved performance, with an $F_1$ score in excess of 0.92, but our procedure outperforms even these results significantly. Meanwhile, for phenomenological studies the time necessary to classify large numbers of points is negligible: On a personal laptop with an Nvidia GeForce GTX 1660 Ti GPU, the 2HDM classifier is capable of evaluating $10^5$ inputs with 100 forward passes through the network each in $\sim 0.7$ seconds-- more than 5 times faster than even our implementation of the symbolic bounded-from-below conditions of \cite{Ivanov:2006yq}. So, we note that our methodology is capable of achieving enormously more accurate results than approximate symbolic bounded-from-below conditions while requiring comparable computation time after training, and can be applied in theory to any renormalizable scalar potential. While insufficient for work that requires extremely high precision (\eg, identifying a region of metastability for the potential), all three of our neural network classifiers can serve as excellent and efficiently evaluated expressions of the approximate bounded-from-below conditions for the purposes of a first-pass phenomenological parameter space scan.

Beyond simply noting the high performance of the models, we also see from these figures that the methodology outlined here is quite robust to both differing starting conditions and the choice of query strategy. To the former point, we see that while the models can evince significantly varying performance among the trials with different initial training data (and validation data) in early active learning iterations, these variances quickly converge to the sub-percent-level well before active learning terminates. To the second point, we see that our three uncertainty-motivated query strategies outlined in Section \ref{sec:query-strategy} all significantly outperform the baseline random query strategy, but there is little discrepancy among the results from the three strategies themselves-- especially as active learning continues and performance begins to plateau. Moreover, although mutual information is clearly an effective indicator of the reliability of the results, particularly in the case of false negatives, there seems to be no significant difference in this quantity's discriminating power among the neural networks trained with different query strategies.

As a final point we draw from Figures \ref{fig:2HDM-F}-\ref{fig:Precustodial-F}, we see that the performance of the classifiers vary significantly for different potentials: After 40 rounds of active learning, a 3HDM classifier trained with the BALD query strategy achieves an $F_1$ score of 0.9658 $\pm 0.0007$, while a 2HDM classifier with the same query strategy achieves an $F_1$ score of 0.9801 $\pm 0.0021$ after just 20 rounds, and the analogous classifier for the precustodial potential achieves 0.9936 $\pm 0.0007$. The scaling behavior of this performance with increasing $\vec{\lambda}$ space dimensionality or the number of independent vev parameters (the two obvious metrics for the complexity of a given scalar potential) is unclear-- on one hand, the 3HDM potential, with 15 real quartic coupling coefficients, demonstrates significantly poorer classifier performance than the 10-coefficient 2HDM and precustodial potentials, but on the other hand, the discrepancy between the 2HDM and precustodial classifiers' performances roughly equates the discrepancy between the 3HDM and 2HDM performances. Furthermore, the best-performing classifiers are those trained on the precustodial potential, in spite of the fact that this scalar potential has more free vev parameters than the 2HDM, and the same number of quartic coefficients. It is clear, then, that at least in the regime we have considered for these experiments (namely, potentials with $\lesssim O(10)$ independent quartic coupling coefficients and a similar number of free vev parameters), the efficacy of our procedure in characterizing the bounded-from-below region can vary somewhat (although remaining, at least for our examples, uniformly high) in a manner that is not obviously predictable. 

While a detailed exploration of the causes of these differing performance outcomes is beyond the scope of the current work, we can do some further investigation of this phenomenon by eliminating certain possible causes. In Figure \ref{fig:layer-comparison}, we depict the $F_1$ scores over the course of active learning for differing numbers of hidden layers in our network, quoting the final results for the trained networks in Table \ref{tab:layer-comparison}. If the performance of the classifier were limited solely by the capacity for the neural network architecture to learn the bounded-from-below decision rule, we might expect that a deeper network, with its larger number of weights, would achieve superior performance than a more shallow one. However, we see that for the 3HDM and the precustodial model, a shallower 3-layer network actually achieves the same (or incrementally superior) performance to a 5- and 7-layer architecture-- while suffering some instability in the quality of its results for the 2HDM potential. This result then suggests that the underperformance of our methodology on the 3HDM is not a product of an overly simple neural network.

\begin{table}[]
    \centering
    \begin{tabular}{| c | c | c | c |}
        \hline
          & 3HDM & 2HDM & Precustodial\\
        \hline
        3 Layers & 0.967(3) & 0.979(5) & 0.9938(4)\\
        5 Layers & 0.9658(7) & 0.980(2) & 0.9936(7)\\
        7 Layers & 0.962(2) & 0.979(2) & 0.9930(6)\\
        \hline
    \end{tabular}
    \caption{\footnotesize The final $F_1$ scores achieved by the trained neural networks in Figure \ref{fig:layer-comparison}, where the number of neural network layers is varied.}
    \label{tab:layer-comparison}
\end{table}

Another possible source of underperformance for the 3HDM analysis might be suboptimal training hyperparameters. Given that the Adam algorithm is usually quite robust to changing learning rates, we instead can focus on the characteristic length scale of our weight priors, $l$ (as defined and discussed in Appendix \ref{appendix:Learner}), and the epoch patience (the number of training epochs without improvement on the loss that the Adam optimizer tolerates before declaring the neural network weights converged). In these cases, decreasing $l$ (or in other words, increasing the variance of the Gaussian prior $\mathcal{N}(0, l^{-2})$ on the weights) or increasing the epoch patience should lead to a better fit of the neural network to the training data, since either less prior knowledge is assumed on the weight distributions (as $l \rightarrow 0$, the priors become those of total ignorance) or a greater number of optimizer steps is permitted. If the neural network with our original hyperparameters is failing to adequately learn a decision rule that's well-represented in our training data, decreasing $l$ and increasing the epoch patience would presumably improve the model's performance. To test this, we train 3HDM classifiers with a dramatically reduced $l$ value ($l = 0.01$), as well as classifiers with dramatically increased epoch patience (500, instead of 100). The comparative $F_1$ scores associated with tuning these hyperparameters, along with the performance of the default setting, are depicted in Figure \ref{fig:3HDM-hyperparams}, with their final performances summarized in Table \ref{tab:3HDM-hyperparams}-- we can clearly see that there is no statistically significant difference in performance when these hyperparameters are altered. In turn, this result supports the thesis that inadequacy of the model architecture is not the principal source of the 3HDM classifier's underperformance.

\begin{table}[]
    \centering
    \begin{tabular}{| c | c | c | c |}
        \hline
          & BALD & Max Entropy & Variation Ratios\\
        \hline
        Default & 0.9658(7) & 0.968(2) & 0.967(2)\\
        Epoch Patience = 500 & 0.968(2) & 0.9670(9) & 0.9687(8)\\
        $l = 0.01$ & 0.966(2) & 0.968(1) & 0.967(2)\\
        \hline
    \end{tabular}
    \caption{\footnotesize The final $F_1$ scores achieved by the trained neural networks in Figure \ref{fig:3HDM-hyperparams}, where the 3HDM bounded-from-below conditions are learned with differing training hyperparameters, as described in Figure \ref{fig:3HDM-hyperparams} and the text.}
    \label{tab:3HDM-hyperparams}
\end{table}

\begin{figure}
    \includegraphics[width=6in]{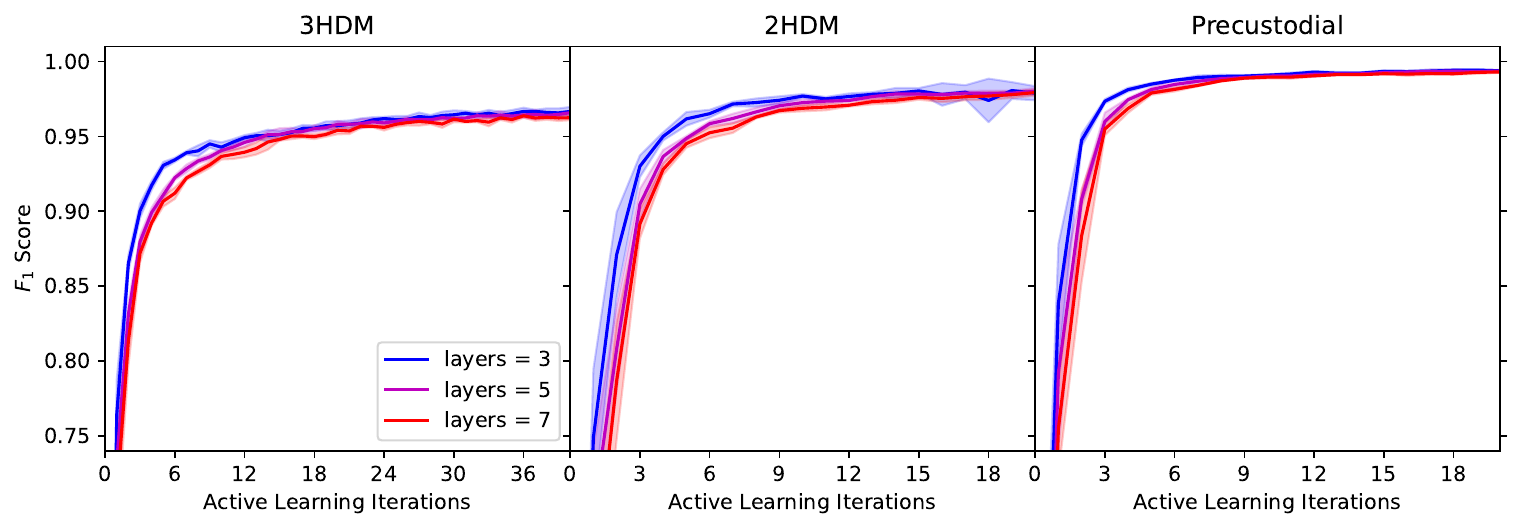}
    \caption{The $F_1$ score achieved over the course of active learning with the BALD query strategy for the 3HDM (left), 2HDM (middle), and precustodial (right) scalar potentials, for differing numbers of hidden layers (where each hidden layer is constructed with 128 neurons)-- 3 layers (blue), 5 layers (magenta), and 7 layers (red). Lines represent the mean of 5 experiments while transparently shaded areas denote the standard deviation. Final performances achieved by these trained classifiers are summarized in Table \ref{tab:layer-comparison}.}
    \label{fig:layer-comparison}
\end{figure}

\begin{figure}
    \includegraphics[width=6in]{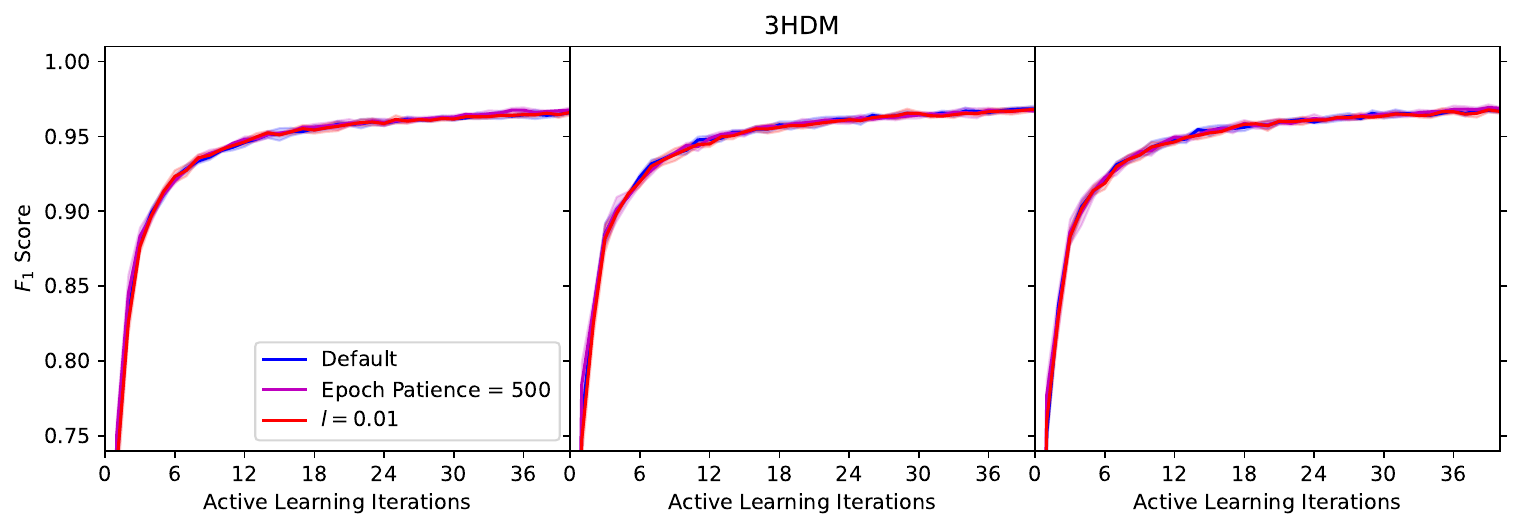}
    \caption{The $F_1$ score achieved over the course of active learning for the 3HDM potential with different hyperparameter values for the prior length scale $l$ and the patience of the Adam optimizer's early stopping condition (epoch patience), for BALD (left), maximum entropy (middle), and variation ratios (right) query strategies. Lines represent the mean of 5 experiments while transparently shaded areas denote the standard deviation. Final performances achieved by these trained classifiers are summarized in Table \ref{tab:3HDM-hyperparams}.}
    \label{fig:3HDM-hyperparams}
\end{figure}

Given our results, perhaps the most likely explanation for the degraded performance of the classifier on the 3HDM is that the active learning algorithm fails to explore the 15-dimensional $\vec{\lambda}$ space with the same efficiency that it explores the 2HDM and precustodial potentials' 10-dimensional $\vec{\lambda}$ space. We can find further support for this position by considering the accuracy that the classifier achieves on their active learning training sets. In Table \ref{tab:3HDM-hyperparams-accuracy}, we compute the binary accuracy of the 3HDM model on its training data after 40 epochs, determining the mean and error as usual from 5 independent trials.

\begin{table}[]
    \centering
    \begin{tabular}{| c | c |}
        \hline
        Hyperparameter Choice &  Training Set Accuracy\\
        \hline
        Default & 0.975(2) \\
        \hline
        Epoch Patience = 500 & 0.989(3)\\
        \hline
        $l = 0.01$ & 0.974(2)\\
        \hline
    \end{tabular}
    \caption{\footnotesize The binary accuracy on training data (which should, by its nature, be ambiguous) for fully-trained 3HDM classifiers with the specified hyperparameter choices. Mean and error are computed by conducting 5 independent trials for each parameter choice.}
    \label{tab:3HDM-hyperparams-accuracy}
\end{table}

We see that, while adjusting $l$ has no discernible effect on the neural network's accuracy on its training set (and therefore, likely a minimal effect on the neural network's performance in general), the trials with increased epoch patience unsurprisingly improve the models' accuracies on their training sets. However, this improved performance on the training set doesn't translate to improved performance on the validation set. We can therefore reason that the validation set contains points which are not well-represented by the training data-- the 3HDM parameter space is not being entirely explored.
Because the geometry of each potential's bounded-from-below region will, of course, differ substantially, it is furthermore feasible that some difference in the efficiency of our active learning procedure in exploring the 2HDM and the precustodial potentials may account for the performance discrepancy between these potentials as well.

Finally, we might suspect that the performance discrepancies between potentials that we observe could be improved by simply increasing the number of active learning rounds. To explore this, we see in Figures \ref{fig:2HDM-F}-\ref{fig:Precustodial-F} that recall (which we remind the reader is an estimate of the probability that a truth level bounded-from-below point is correctly labelled by the classifier) for the 2HDM and 3HDM potentials has a visible positive slope even at the end of active learning, indicating that the neural network is continuing to discover new regions of bou233nded-from-below parameter space. However, precision (which estimates the probability that a point that the classifier labels as bounded-from-below is actually bounded-from-below at truth level) demonstrates a far less pronounced monotonic improvement for both of these potentials, so it remains unclear whether and to what degree any performance gap might be closed simply by arbitrarily extending training.\footnote{As the size of training data increases beyond what can be contained in GPU memory, it may be necessary to split the training data into batches, which we have found can slightly degrade performance-- see the discussion at the end of Appendix \ref{appendix:Learner}.}

Although Figures \ref{fig:2HDM-F}-\ref{fig:Precustodial-F} and Tables \ref{tab:fig-results-1} and \ref{tab:fig-results-2} have offered us some minor insight into the utility of our uncertainty metrics, we can now move on to explore these characteristics more rigorously. In Figure \ref{fig:uncertainty}, we depict the $F_1$ scores on the validation sets for the classifiers trained for the 2HDM, 3HDM, and precustodial potentials with our default parameters, as points with uncertainty metrics above different quantiles (determined separately for bounded-from-below and not bounded-from-below points) are excluded from the validation set.
We can see that all three uncertainty metrics are useful predictors of the classifier's accuracy on a given point: As the included uncertainty quantiles decrease, the $F_1$ scores for each potential rapidly improve-- for the 3HDM and 2HDM, we see that this improvement is most rapid when mutual information, that is, our estimate of purely epistemic uncertainty, is used as the uncertainty metric, while performance for all three uncertainty metrics is comparable for the precustodial potential, where we have also found the classifier achieves the best performance in general.

\begin{figure}
    \centering
    \includegraphics[width=6in]{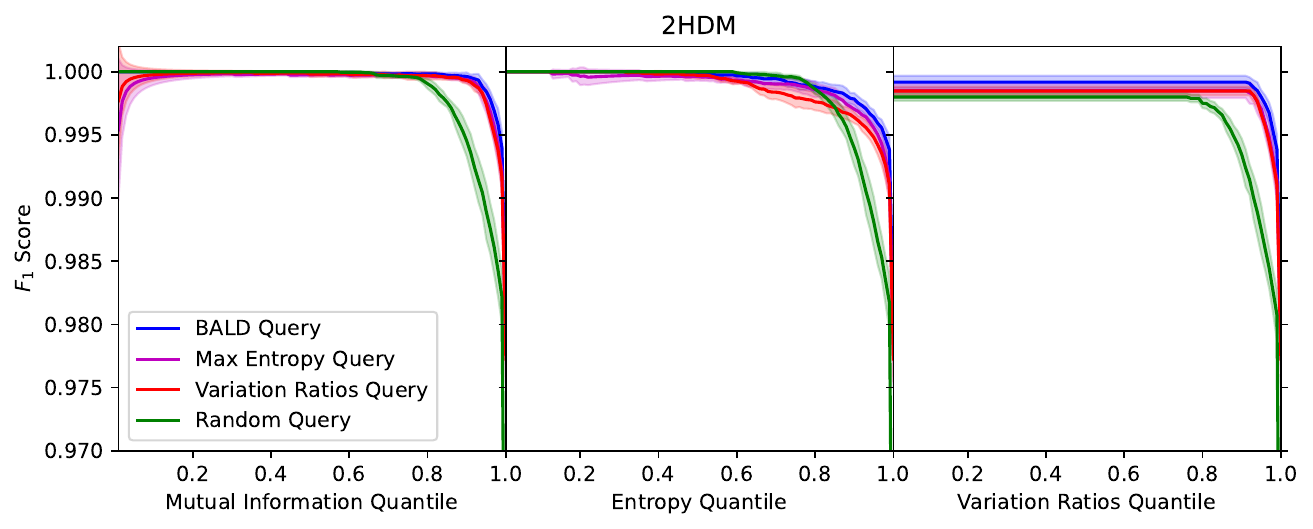}
    \includegraphics[width=6in]{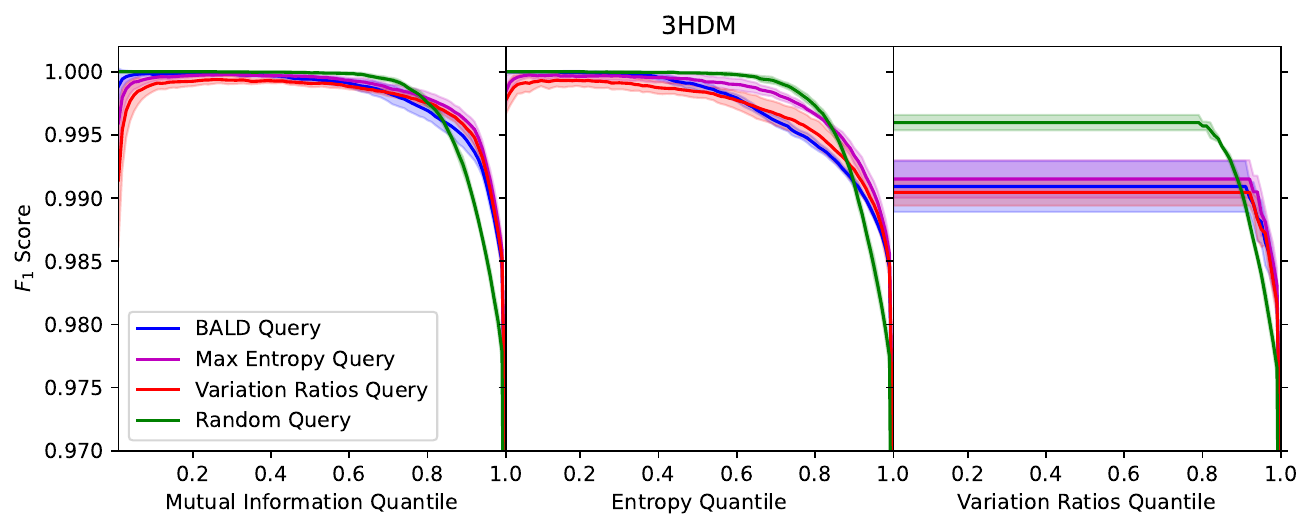}
    \includegraphics[width=6in]{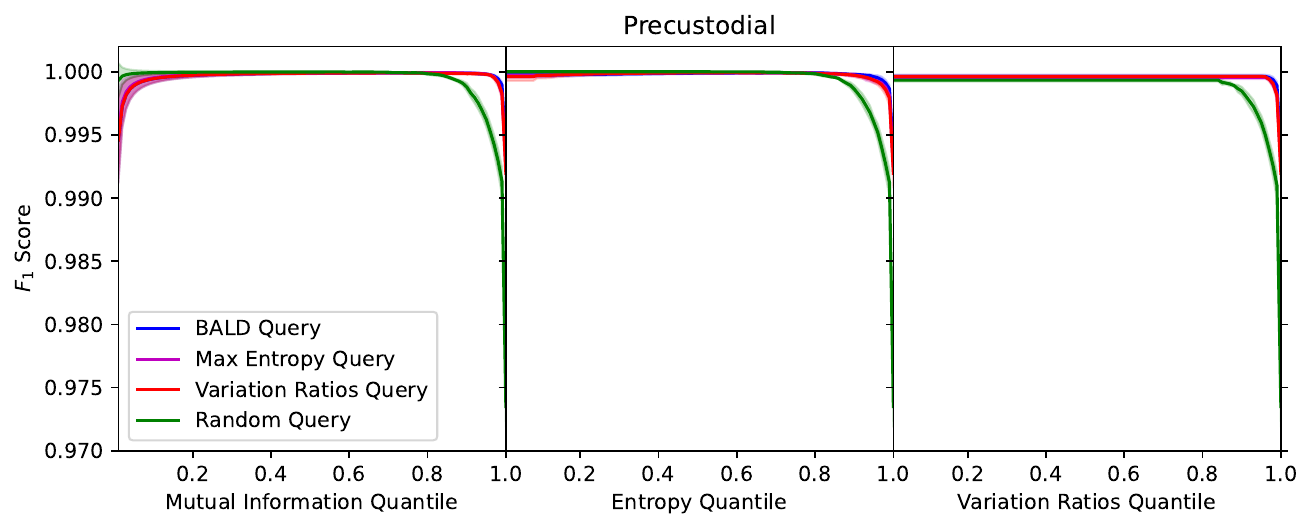}
    \caption{The $F_1$ score achieved for the 2HDM (top), 3HDM (middle), and precustodial (bottom) scalar potentials, with points which have an uncertainty score (mutual information, entropy, or variation ratios) greater than different quantiles for points with their predicted class excluded. Lines represent the mean of 5 experiments while transparently shaded areas denote the standard deviation, while the query strategy used is BALD (blue), maximum entropy (magenta), variation ratios (red), and random (green).}
    \label{fig:uncertainty}
\end{figure}

 The dropoff of the performance of some of the classifiers as the included uncertainty quantiles are very small ($\lesssim 10 \%$) suggests that these uncertainty quantification metrics are not perfect-- there exist a small number of points about which the model can be quite certain, but incorrect. However, these incorrect-but-certain points represent an extremely small fraction of the total points considered, given that the classifiers in question still exhibit $F_1$ scores of $>0.995$ for included quantiles of about $0.05$. Given that incorrect-but-certain points appear to be most present when mutual information is the uncertainty metric, this suggests that these points are likely points of low epistemic uncertainty (that is, adequate training data in the region of the points), but high aleatoric uncertainty (that is, the points themselves are very close to the decision boundary). Since mutual information is insensitive to aleatoric uncertainty, the failure of this metric to account adequately for these points is unsurprising, however we note that it is likely that such points, being near the neural network's decision boundary, will likely produce similar phenomenology to points which are correctly classified. The variation ratios uncertainty metric appears, in general, to be the least useful uncertainty metric-- producing consistent and significant underperformance as a discriminator for all three potential functions. This is unsurprising given the fact that the variation ratio's expressivity as an uncertainty metric is significantly limited by the number of predictive evaluations that we perform: Using 1000 forward passes as we do, we can expect that any point which has a ``truth-level'' (that is, the limit as the number of evaluations goes to infinity) variation ratio of $\lesssim 0.001$ will be imprecisely determined due to insufficient statistics. Mutual information and entropy, which aren't based on predictions crossing a specific threshold, will be significantly less sensitive to the effect of a finite number of predictive evaluations.

A final observation we can make on the discriminating power of the various uncertainty metrics lies in the different performance of the classifiers trained with different query strategies. Notably, the random query strategy appears to be robust against the dropoff in $F_1$ score for very small included uncertainty quantiles, even in cases where other query strategies exhibit this behavior. This suggests, then, that a classifier trained with the random acquisition function possesses slightly higher-quality uncertainty estimates, in that there are significantly fewer points for which the classifier is highly certain but incorrect than there are for classifiers with our various active learning query strategies. We might expect that this would be a consequence of statistical bias in active learning \cite{farquhar2021statistical}, namely that the training set for active learning, not being an independent and identically distributed (i.i.d.) sample from the same distribution as the validation set, might not predict uncertainties that are optimally calibrated for it. Of course, the training set points used during training with the random acquisition function aren't from the same distribution as the validation set uniformly sampled from the unit hypersphere surface (or, indeed, independently distributed between active learning iterations), but by not being disproportionately selected to have high uncertainty, it is perhaps not unreasonable to suggest that the randomly-queried training set is a closer representation of the validation set's distribution than those which are generated through the active learning strategies. Of course, if this is the cause of the discrepancy, it is both unclear whether this advantage for the random query strategy will hold for \emph{other} distributions, such as those which might emerge in a parameter space scan, and in any case the discrepancy in performance only becomes statistically significant in most cases once an impractically small $(\lesssim 0.1)$ included uncertainty quantile is used, in which case the neural network classifier is not especially useful. A detailed exploration of the effect of training set bias in uncertainty quantification for active learning for this application would require the development of a method of removing this bias from the loss estimate for our active learning application. Such an estimator was developed in \cite{farquhar2021statistical}, for example, in the case of active learning with a finite and not-replenished pool of unlabelled candidate training data $L$, but identifying an analogous estimator for our active learning implementation is beyond the scope of this work.

\subsection{Experiments: Classifier Performance on Semi-Realistic Test Sets}\label{sec:exp-slices}

In the previous Section, we have evaluated our classifiers' performances on validation sets of points which uniformly sample from the $\vec{\lambda}$ space hyperspheres of the 2HDM, 3HDM, and precustodial potential functions. While these data sets are useful to get a sense of a classifier's overall coverage of the complete bounded-from-below space of the potential, these sets themselves bear little resemblance to the distributions of points which are more typical of a practical parameter space scan. As such, we shall explore our models' performance on test distributions of a form that more commonly appears in parameter space scans: 2-dimensional slices in $\vec{\lambda}$ space. For each potential, we have selected 4 pairs of quartic coefficients to scan over while we hold the remainder fixed. Then, for each pair, we randomly generate 500 sets of the remaining parameters, and then for each of these 500 sets, we randomly generate 2000 pairs of the two scanned parameters. In all cases, we randomly sample from uniform distributions over each quartic coefficient between -5 and 5, except where a negative value of a quartic parameter violates a necessary condition for boundedness-from-below, in which case we restrict the value to be positive (this helps maximize the fraction of our slices which are ``interesting''-- that is, contain a decision boundary). In the case of the precustodial model, two coefficients $\lambda_2$ and $\lambda_3$ in Eq.(\ref{eq:V-PC}), must satisfy $\lambda_2 + \lambda_3 > 0$ and $\lambda_2 + \lambda_3 / 2 > 0$ in order to avoid bounded-from-below conditions for a potential with a single complex $SU(2)_L$ triplet, so in cases where $\lambda_2$ and $\lambda_3$ aren't scanned over, we uniformly sample these two combinations of the coefficients along intervals between 0 and 5 instead of sampling $\lambda_2$ and $\lambda_3$ directly.

We can use these ``slice ensembles'' to get a sense of the level of performance we can expect from our trained classifiers on generic 2-dimensional slices (and more broadly, typical lower-dimensional cross sections in parameter space that appear often in phenomenological studies), and, crucially, how accuracy of the model on these slices might be estimated from the predictive uncertainty metrics described in Section \ref{sec:bayesian-nn}. To do this, we label each point in our slices using the corresponding oracle. Then, taking our models trained using the BALD query strategy as a baseline, we check the predictive accuracy for the model along each 2-dimensional parameter space slice, along with the uncertainty (as determined by mutual information, which we found in Section \ref{sec:exp-performance} to provide the best discriminating power between incorrectly and correctly classified points in that setting) associated with each prediction. By scanning a large number of these slices, we can also get a qualitative and quantitative understanding of the ``worst-case'' performance of the classifiers on these types of samples (in effect, executing a primitive search for adversarial inputs in the language of machine learning), as well as what constitutes more typical performance and how one can differentiate between the two. In Table \ref{tab:slice-data}, we list data related to the binary accuracy of the classifiers on these ensembles of slices, including the average binary accuracy on a slice, the minimum binary accuracy, and the median performance among slices for which $\geq 5 \%$ of points are predicted as bounded-from-below (what might be considered ``interesting'' parameter space-- computing the median accuracy across all slices will tend to uninformatively yield a value of 1.0, due to a significant number of slices which are entirely correctly classified as not bounded-from-below).

\begin{table}[]
    \centering
    \begin{tabular}{| c | c | c | c | c |}
        \hline
        & Scanned Variables &  Min. Accuracy & Med. Accuracy & Accuracy-Uncertainty Correlation\\
        \hline
        \textbf{2HDM} & $\lambda_4 - Re \, \lambda_5$ & 0.978(2) & 0.9951(8) & -0.62(10)\\
        & $\lambda_2 - \lambda_3$ & 0.979(5) & 0.9942(9) & -0.52(14)\\
        & $\lambda_3 - \lambda_4$ & 0.977(7) & 0.9964(3) & -0.45(20)\\
        & $Re \, \lambda_6 - Re \, \lambda_7$ & 0.977(4) & 0.9946(6) & -0.80(5)\\
        \hline
        \textbf{3HDM} & $\lambda_4-\lambda_7$ & 0.65(10) & 0.986(1) & -0.70(7)\\
        & $Re \, \lambda_{10} - Re \, \lambda_{11}$ & 0.16(11) & 0.982(2) & -0.82(7)\\
        & $\lambda_1 - \lambda_7$ & 0.40(25) & 0.9856(6) & -0.82(10)\\
        & $\lambda_1 - \lambda_5$ & 0.59(8) & 0.9849(6) & -0.75(6)\\
        \hline
        \textbf{Precustodial} & $\lambda_5-\lambda_6$ & 0.85(8) & 0.9975(2) & -0.82(9) \\
        & $\lambda_8-\lambda_9$ & 0.93(3) & 0.9986(2) & -0.87(5)\\
        & $\lambda_2-\lambda_5$ & 0.97(2) & 0.9989(2) & -0.75(7)\\
        & $\lambda_4-\lambda_7$ & 0.96(1) & 0.9980(3) & -0.74(3)\\
        \hline
    \end{tabular}
    \caption{\footnotesize The minimum binary accuracy for each slice ensemble, the median accuracy for slices in the ensemble which fall in the ``interesting region'' ($\geq 5 \%$ of points are predicted to be bounded-from-below), and the Pearson correlation coefficient between binary accuracy and the average mutual information of the points in elements of the slice ensemble. The averages and errors for these quantities are computed from the results for five independently trained classifiers. The quartic coefficients are defined as in Eq.(\ref{eq:V-2HDM}) for the 2HDM, Eq.(\ref{eq:V-3HDM}) for the 3HDM, and Eq.(\ref{eq:V-PC}) for the precustodial model.}
    \label{tab:slice-data}
\end{table}

The results outlined in Table \ref{tab:slice-data} give us a feeling for the performance of the classifiers on the slice ensembles. We can see that, with the possible exception of the universally $\geq 97 \%$ accuracy achieved by the 2HDM classifiers, we have found specific 2-D slices for which the classifier has a significant error rate. In the case of the 3HDM potential, we even find slices with accuracy values which are substantially worse than the $\sim 50\%$ expected performance of an untrained classifier. However, it also appears that a strong negative correlation universally exists between the mean mutual information in the points of a given slice and the classifier's binary accuracy for that slice, as we might expect: Just as with the validation samples in Section \ref{sec:exp-performance}, this uncertainty metric provides a reliable estimate of the accuracy of a given prediction. Furthermore, this correlation appears to increase as less accurately classified slices are present in the ensemble: While the uniformly accurate 2HDM classifiers tend to have the most modest negative correlations between accuracy and mean uncertainty, the sometimes seriously inaccurate 3HDM classifiers uniformly enjoy correlation coefficients of less than or equal to -0.7.

While Table \ref{tab:slice-data} presents some quantitative data on the performance of our methodology on different slice ensembles, a visual approach can give us a better understanding of the utility of our method-- specifically, it can better illustrate how uncertainty quantification can allow us to robustly exclude (or depending on the needs of the model builder, re-examine with a more robust, but computationally expensive, methodology for determining boundedness-from-below) the substantially inaccurate slices while correctly identifying the slices for which we can trust the classifier's results. In Figures \ref{fig:2HDM-slice-image}, \ref{fig:3HDM-slice-image}, and \ref{fig:Precustodial-slice-image}, we depict some illustrative examples of classifier slices for the 2HDM, 3HDM, and precustodial potentials, respectively. To use uncertainty estimates to meaningfully gauge the reliability of individual model predictions, we will of course have to have a notion of what a ``large'' uncertainty means for each model. To establish such a notion, we compute our highest-quality uncertainty estimator, mutual information, for a ``calibration set'' of $10^6$ input $\vec{\lambda}$'s sampled uniformly from the surface of the unit hypersphere (for our purposes here, we use our existing labelled validation sets, but an unlabelled and randomly generated set of points will work just as well for use cases in which labelling such a large set of points with an oracle is impractical; the BFBrain package possesses methods to produce such a calibration set automatically). Noting that points which are predicted as bounded-from-below tend to have larger mutual information than points of the other class, we then compute the 0.95 and 0.99 mutual information quantiles for the set of calibration set points which the model predicts to be bounded-from-below. These quantiles allow us to differentiate highly uncertain inputs from others, without reference to the distribution of our input slices or the precise numerical scale of the mutual information measurements, which is nontrivially dependent on neural network architecture, training data, and hyperparameters.

To examine both the circumstances under which the classifier performs poorly, as well as get a sense for each classifier's more typical performance, Figures \ref{fig:2HDM-slice-image}-\ref{fig:Precustodial-slice-image} depict both the member of the given slice ensemble which exhibits the lowest binary accuracy (as discussed above, our ``adversarial'' examples), as well as one which exhibits the median binary accuracy for the ensemble (just as in Table \ref{tab:slice-data}, we have restricted each slice ensemble to only those points with at least $5 \%$ of the points being predicted to be bounded-from-below). Unlike the figures thus far displayed in this work, these classifications are the product of individual trained classifiers, and not averaged results among 5 independent identical experiments-- this is because individual predictions of trained classifiers can't be meaningfully combined without likely enhancing the networks' collective predictive power via ensembling \cite{barber1998ensembling}, a use case which we do not explore in this work. For completeness, plots for the slice ensembles not depicted in the main text are included in Figures \ref{fig:2HDM-slice-image-appendix} - \ref{fig:Precustodial-slice-image-appendix} in Appendix \ref{appendix:slice-ensemble}, along with Tables \ref{tab:2HDM-slice-params-worst}-\ref{tab:Precustodial-slice-params-med}, which give the specific quartic coefficient values used to produce all plots depicted here and in that appendix.

\begin{figure}
    \centerline{\includegraphics[width=3.2in]{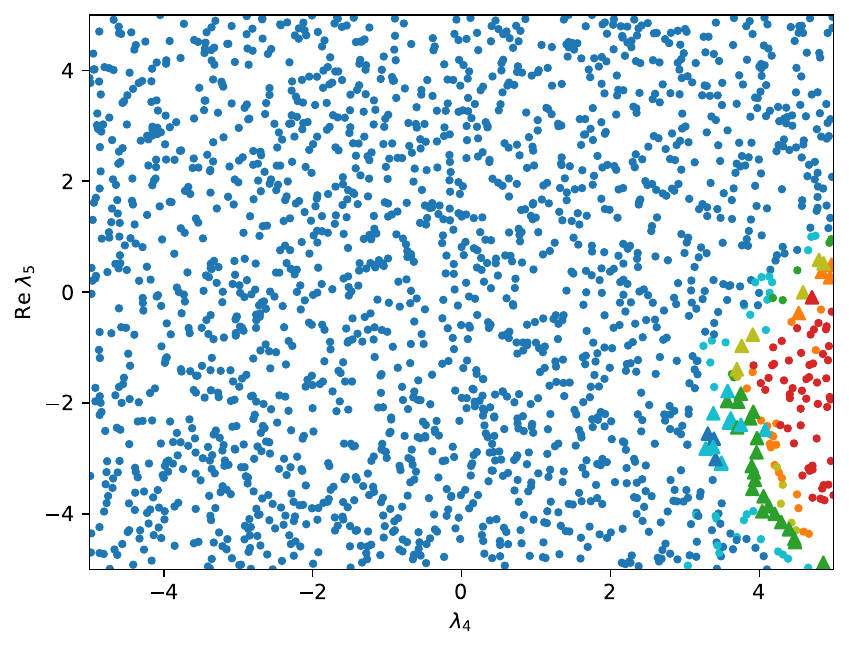}
    \hspace{-0.25cm}
    \includegraphics[width=3.2in]{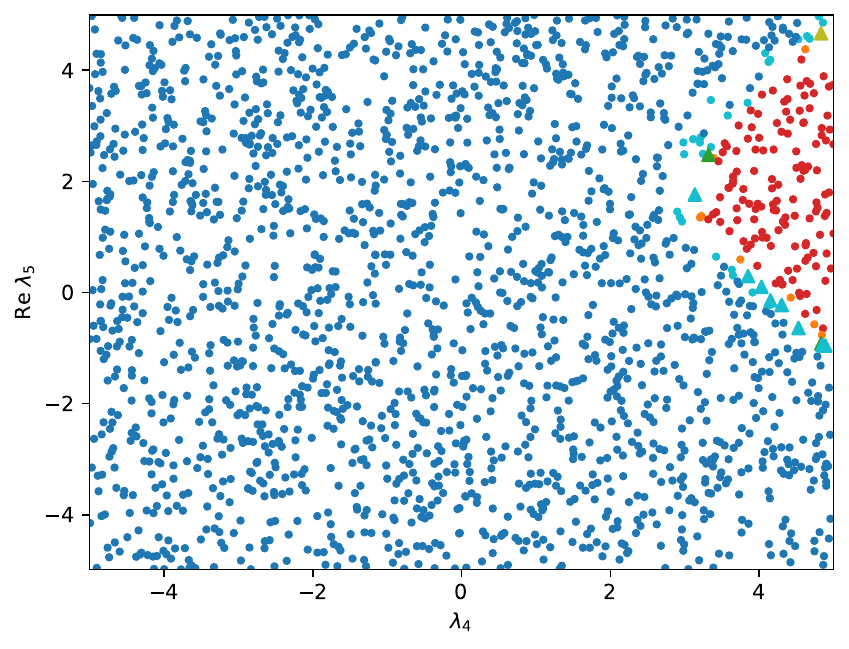}}
    \caption{Assuming a BALD query strategy and default hyperparameters, scatter plots drawn from the $\lambda_4- \textrm{Re} \, \lambda_7$ slice ensembles for a classifier for the 2HDM potential given in Eq.(\ref{eq:V-2HDM}). Dots represent correctly classified points, while triangles represent incorrectly classified points. Classification label and mutual information are indicated by point color: Red, orange, and olive points are predicted to be bounded-from-below, while blue, cyan, and green points are predicted to not be. Different color classifications within predicted labels indicate that the point has mutual information greater than the 0.99 quantile (olive, green) or 0.95 quantile (orange, cyan) of the mutual information in the calibration set (see text). On the left, the slice in the ensemble with the lowest binary accuracy is depicted. On the right, a more typical example is depicted, with binary accuracy equal to the median accuracy, computed as described in the text.}
    \label{fig:2HDM-slice-image}
\end{figure}

\begin{figure}
    \centerline{\includegraphics[width=3.2in]{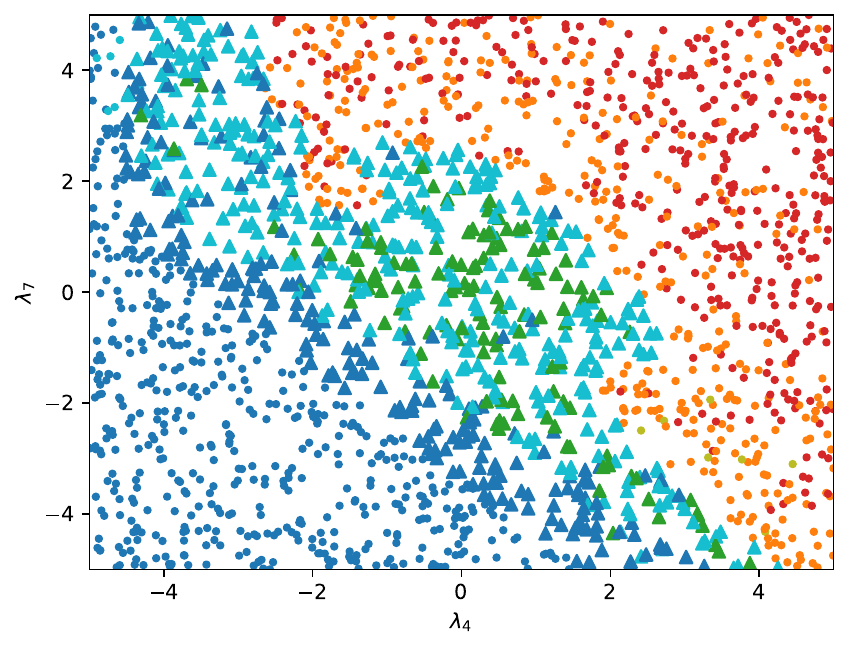}
    \hspace{-0.25cm}
    \includegraphics[width=3.2in]{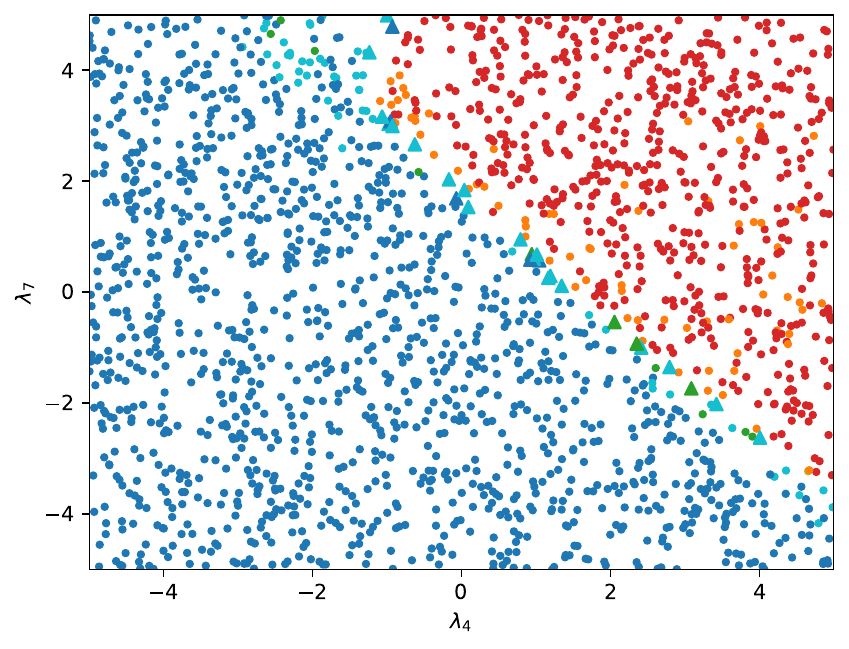}}
    \caption{As Figure \ref{fig:2HDM-slice-image}, but for the $\lambda_4 -\lambda_7$ slice ensemble of the 3HDM potential, with quartic coefficients defined in Eq.(\ref{eq:V-3HDM}).}
    \label{fig:3HDM-slice-image}
\end{figure}

\begin{figure}
    \centerline{\includegraphics[width=3.2in]{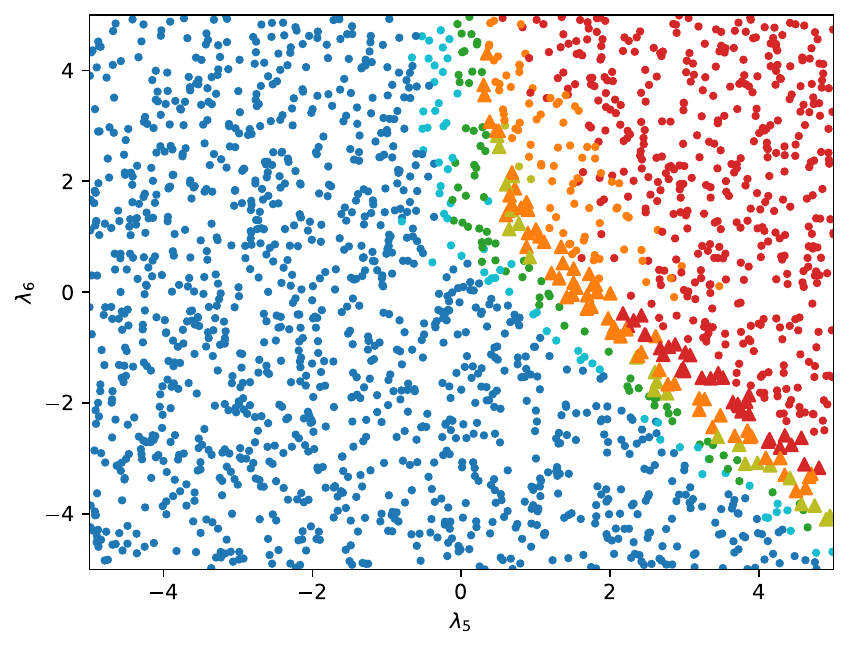}
    \hspace{-0.25cm}
    \includegraphics[width=3.2in]{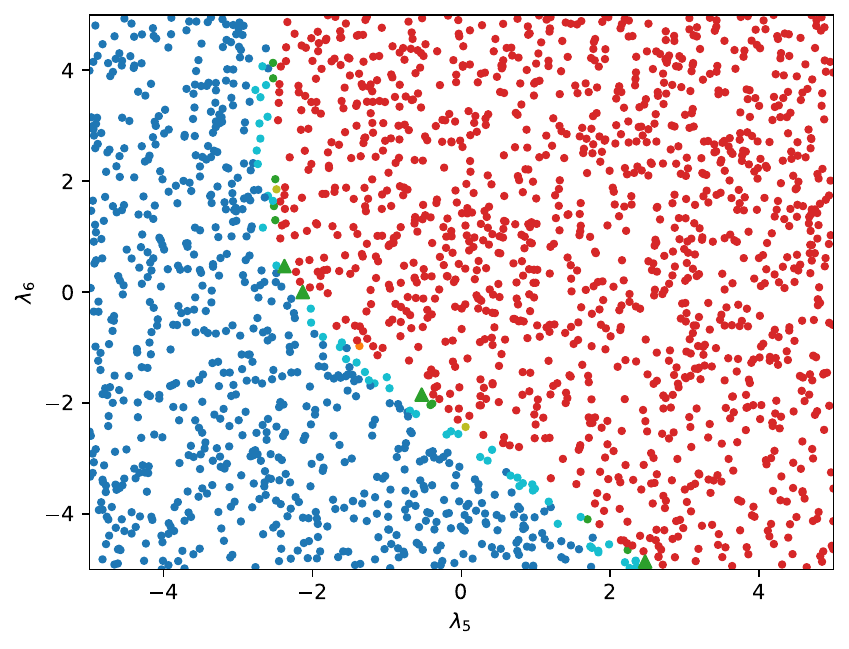}}
    \caption{As Figure \ref{fig:2HDM-slice-image}, but for the $\lambda_5 -\lambda_6$ slice ensemble of the precustodial potential, with quartic coefficients defined in Eq.(\ref{eq:V-PC}).}
    \label{fig:Precustodial-slice-image}
\end{figure}

Figures \ref{fig:2HDM-slice-image}-\ref{fig:Precustodial-slice-image} help depict the practical usefulness of classifiers trained by our method, allowing us to see both the typical (extremely precise) performance of the model on various 2-D slices, as well as what the output may look like when the classifier fails. For the plots which depict median classifier performance along the slice ensembles, we see uniformly high accuracy for all three of the 2HDM, 3HDM, and precustodial classifier. Misclassified points are extremely rare, invariably close to the decision boundary, and tend to have unusually high mutual information scores-- indicating higher-than-average epistemic uncertainty. Furthermore, in agreement with the correlations found in Table \ref{tab:slice-data}, we see that our median-accuracy slices have a significant number of points that do \emph{not} demonstrate high mutual information, so that we might identify 2-dimensional slices on which the classifier is likely to be highly performant without having to label our points with an oracle.

In the case of our classifers' worst slices, we see that at least for the 3HDM and the precustodial potentials' classifiers, there exist 2-dimensional parameter space slices for which there is significantly degraded performance. More reassuringly, we see that these slices also have a number of points with exceedingly high mutual information, in particular within the misclassified regions. These worst-case scenario plots therefore suggest that while it may be possible to find regions of parameter space over which the classifiers have unacceptably poor performance, such regions are in general recognizable as such.\footnote{In the case of 2-dimensional parameter space scans along the quartic coefficients, we can also identify likely unreliable outputs by observing non-convexities in the bounded-from-below region, which must be incorrect. However, doing this for 3-dimensional scans is considerably more difficult, and for 4-dimensional scans cannot be done visually.} We should also note that the worst-case scenario performances are in no way typical of the majority of samples in a parameter space-- unless the user is scanning over hundreds of two-dimensional slices and specifically selecting points which have a high density of high-uncertainty points, the median-performance plots are a far more typical.


\subsection{Experiments: Effects of Label Noise}

We conclude our experimentation by considering the effect of a flawed oracle on the quality of the classifier-- specifically in the case that the principal parameter governing the oracle's accuracy in our experiments, $n_{\textrm{iter}}$, is set to significantly suboptimal values. From Section \ref{sec:oracle}, we recall that this will lead to a significant number of points incorrectly being labelled as bounded-from-below, with smaller $n_{\textrm{iter}}$ values leading to a greater number of misclassifications. To gauge the severity of this effect, we train classifiers with oracles that use different small values of $n_{\textrm{iter}}$ (specifically 5, 10, and 20), and track their performance on validation data labelled using the $n_{\textrm{iter}}$ parameters of Table \ref{tab:default-hyperparams}, which satisfy our robustness criteria of Section \ref{sec:oracle}. In Figure \ref{fig:noise-comparison}, we depict the $F_1$ scores achieved over the course of training for each of our considered scalar potentials with the different $n_{\textrm{iter}}$ values, as well as the performance of each oracle with suboptimal $n_{\textrm{iter}}$ on the validation data set. We have also listed the final $F_1$ scores achieved by each of these noisy classifiers, in addition to the corresponding results for the equivalent trials with default $n_\textrm{iter}$ values first depicted in Figures \ref{fig:2HDM-F}-\ref{fig:Precustodial-F}, in Table \ref{tab:noise-comparison}.

\begin{figure}
    \centering
    \includegraphics[width=6in]{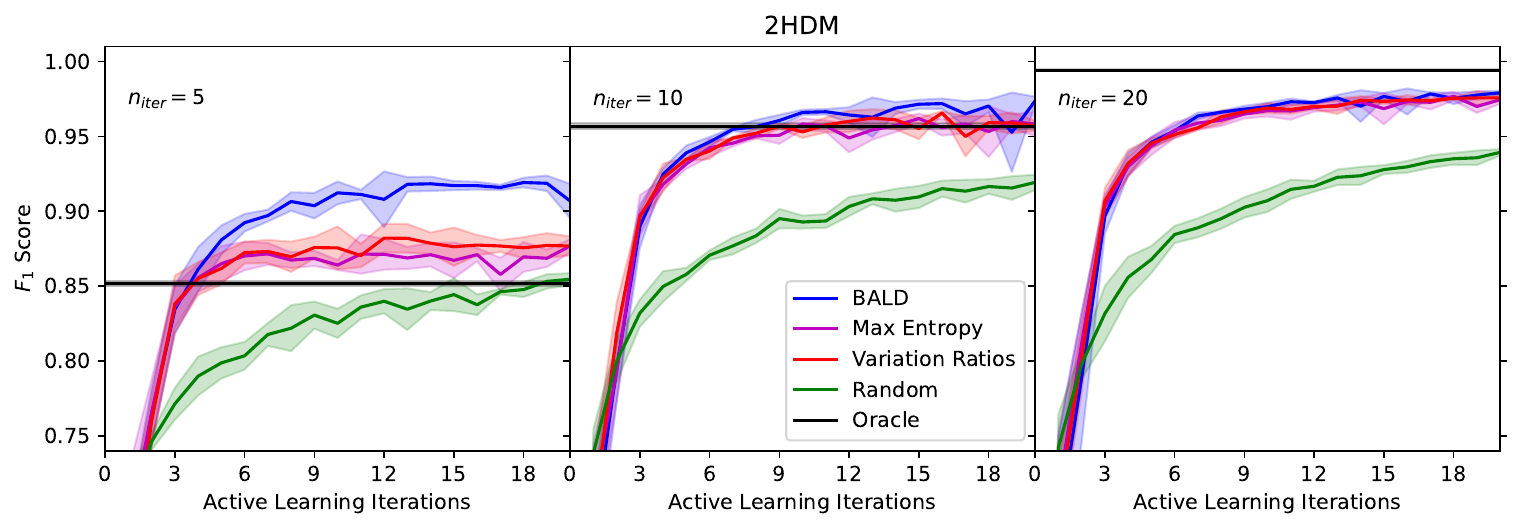}
    \includegraphics[width=6in]{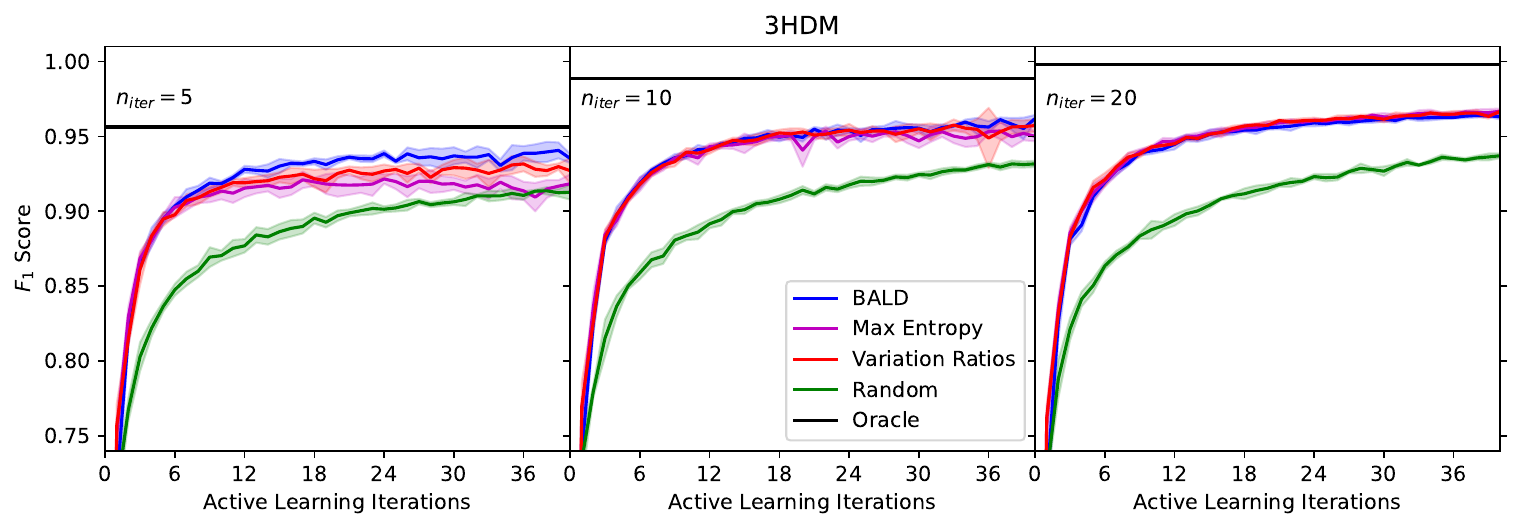}
    \includegraphics[width=6in]{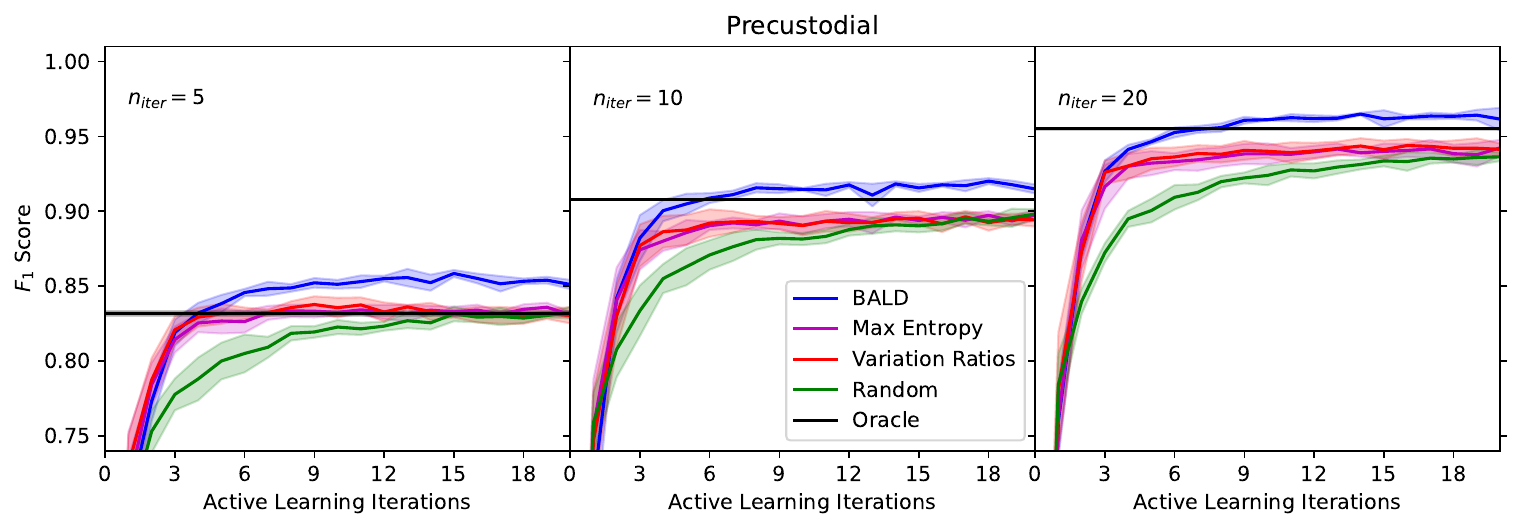}
    \caption{The $F_1$ score achieved for the 2HDM (top), 3HDM (middle), and precustodial (bottom) scalar potentials over the course of active learning, for oracles with $n_{\textrm{iter}} = 5$ (left), 10 (middle), and 20 (right). Lines represent the mean of 5 experiments while transparently shaded areas denote the standard deviation, while the query strategy used is BALD (blue), maximum entropy (magenta), variation ratios (red), and random (green). Although the training data is labelled using an oracle with the indicated $n_{\textrm{iter}}$ values, the validation data is labelled using the robust $n_{\textrm{iter}}$ parameters we discussed in Section \ref{sec:oracle} and given in Table \ref{tab:default-hyperparams}. The black lines represent the performance of the noisy (that is, low $n_{\textrm{iter}}$) oracle on the validation data sets, which is imperfect due to incorrect positive labels. Final $F_1$ scores (and associated uncertainties) attained from the classifiers depicted here are listed in Table \ref{tab:noise-comparison}.}
    \label{fig:noise-comparison}
\end{figure}

\begin{table}[]
    \centering
    \begin{tabular}{| c | c | c | c | c | c |}
        \hline
        \multirow{2}{*}{Potential} & \multirow{2}{*}{Query} & \multicolumn{4}{| c |}{$F_1$ Scores}\\
        \cline{3-6}
        & & $n_\textrm{iter} = 5$ & $n_\textrm{iter} = 10$ & $n_\textrm{iter} = 20$ & Default $n_\textrm{iter}$\\
        \hline
        \multirow{4}{*}{2HDM} & BALD & 0.91(1) & 0.974(3) & 0.979(1) & 0.980(2) \\
        & Max Entropy & 0.877(4) & 0.958(8) & 0.975(3) & 0.978(2)\\
        & Variation Ratios & 0.877(6) & 0.957(3) & 0.976(2) & 0.977(2)\\
        & Random & 0.855(4) & 0.919(5) & 0.939(2) & 0.942(4)\\
        \hline
        \multirow{4}{*}{3HDM} & BALD & 0.935(3) & 0.962(2) & 0.963(2) & 0.9658(7)\\
        & Max Entropy & 0.918(6) & 0.950(3) & 0.967(2) & 0.968(2)\\
        & Variation Ratios & 0.927(8) & 0.957(4) & 0.967(1) & 0.967(2)\\
        & Random & 0.913(5) & 0.932(2) & 0.937(2) & 0.9387(8)\\
        \hline
        \multirow{4}{*}{Precustodial} & BALD & 0.851(3) & 0.915(3) & 0.962(7) & 0.9936(7) \\
        & Max Entropy & 0.831(2) & 0.897(4) & 0.942(5) & 0.9925(4)\\
        & Variation Ratios & 0.830(5) & 0.894(4) & 0.941(4) & 0.992(1)\\
        & Random & 0.832(4) & 0.898(3) & 0.936(3) & 0.973(2)\\
        \hline
    \end{tabular}
    \caption{\footnotesize The $F_1$ scores achieved by fully trained classifiers with varying degrees of oracle noise, parameterized by lower values for the oracle hyperparameter $n_{\textrm{iter}}$-- these represent the final classifier performances achieved by the training results depicted in Figure \ref{fig:noise-comparison}.}
    \label{tab:noise-comparison}
\end{table}

From this Figure and Table, we can observe some intriguing classifier behavior for suboptimal $n_{\textrm{iter}}$. First, we see that in stark contrast to our results for our default $n_{\textrm{iter}}$ values, depicted in Figures \ref{fig:2HDM-F}-\ref{fig:Precustodial-F}, for noisy oracles (generally those which achieve $F_1$ scores of $\lsim 0.95$), we actually find mild to moderately-significant discrepancies between classifier performance based on the active learning query strategy. In particular, BALD (based on mutual information) appears to substantially outperform other query strategies. Inspecting the accuracy of the suboptimal oracles on the training set in the 2HDM classifier, we can see that this stems from the fact that BALD selects points which are significantly less likely to be misclassified than the maximum entropy or variation ratios-based strategies: For $n_{\textrm{iter}} = 5$, our 2HDM oracle achieves an average $F_1$ score of $0.87 \pm 0.03$ on the training data set generated by a BALD query strategy (accuracy is determined using the symbolic necessary and sufficient bounded-from-below conditions of \cite{Ivanov:2006yq}, while mean and error are extracted by performing 5 identical experiments). Meanwhile, it achieves an $F_1$ score of $0.60 \pm 0.04$ and $0.62 \pm 0.04$ for the maximum entropy and variation ratios query strategies. This difference vanishes when $n_{\textrm{iter}} = 20$, at which point the oracles' $F_1$ scores on the training data are identical up to statistical error, at $0.996 \pm 0.001$-- where we furthermore see no significant performance difference between the different active learning query strategies.

The reason that BALD selects points more likely to be classified well by an inferior oracle is less immediately clear. A potential intuitive reason is that, while maximum entropy and variation ratios both heavily incentivize the selection of points near the decision boundary (points with high aleatoric uncertainty). Points along the decision boundary, being extremely close to the bounded-from-below region, should intuitively have only a very small region in vev space in which the quartic potential is negative if they are not bounded-from-below, making it more probable that our approximate oracle will miss these regions if $n_{\textrm{iter}}$ is small. Meanhile BALD, by selecting points with high mutual information (which we take to estimate epistemic uncertainty), tends to prefer points for which there is a significant variance in the neural network's predicted confidence scores while being somewhat far from the decision boundary-- that is, in regions about which the neural network can expect to be quite certain about, once it has some additional information.\footnote{Simplistically, this can be shown by considering the expression for mutual information of Eq.(\ref{eq:mutual-information}) for a case with two forward passes through the network, which give $c_0 - \delta$ and $c_0 + \delta$. Then, we see that mutual information achieves a minimum when $c_0 = 0.5$, and increases as the mean departs from the decision boundary.} In fact, we see that in many cases, the neural network with the BALD query strategy significantly outperforms the oracle on which it was trained. Referring back to the training outcomes with our default $n_{\textrm{iter}}$ values, depicted in Figures \ref{fig:2HDM-F} - \ref{fig:Precustodial-F}, it appears that as a rough rule, this overperformance seems to persist whenever the $F_1$ score of the noisy oracle on the validation data is more than $O(10^{-2})$ less than the optimal performance achieved by the classifier trained with the default, robust $n_{\textrm{iter}}$-- after this point, the oracle appears to achieve uniformly higher performance than the neural network classifiers.

To get a better visual understanding of the degradation of classifier performance from oracle noise, and explore the effect of oracle noise on uncertainty quantification,in Figures \ref{fig:2HDM-noise-uncertainty}, \ref{fig:3HDM-noise-uncertainty}, and \ref{fig:Precustodial-noise-uncertainty}, we depict the performances on validation data of classifiers with our different active learning query strategies over the course of active learning for different values of $n_{\textrm{iter}}$, along with the performance over only points that have a mutual information value of below the 0.95 quantile of the respective classification in the validation set, in the same manner as Figures \ref{fig:2HDM-F} - \ref{fig:Precustodial-F}. 

\begin{figure}
    \includegraphics[width=6in]{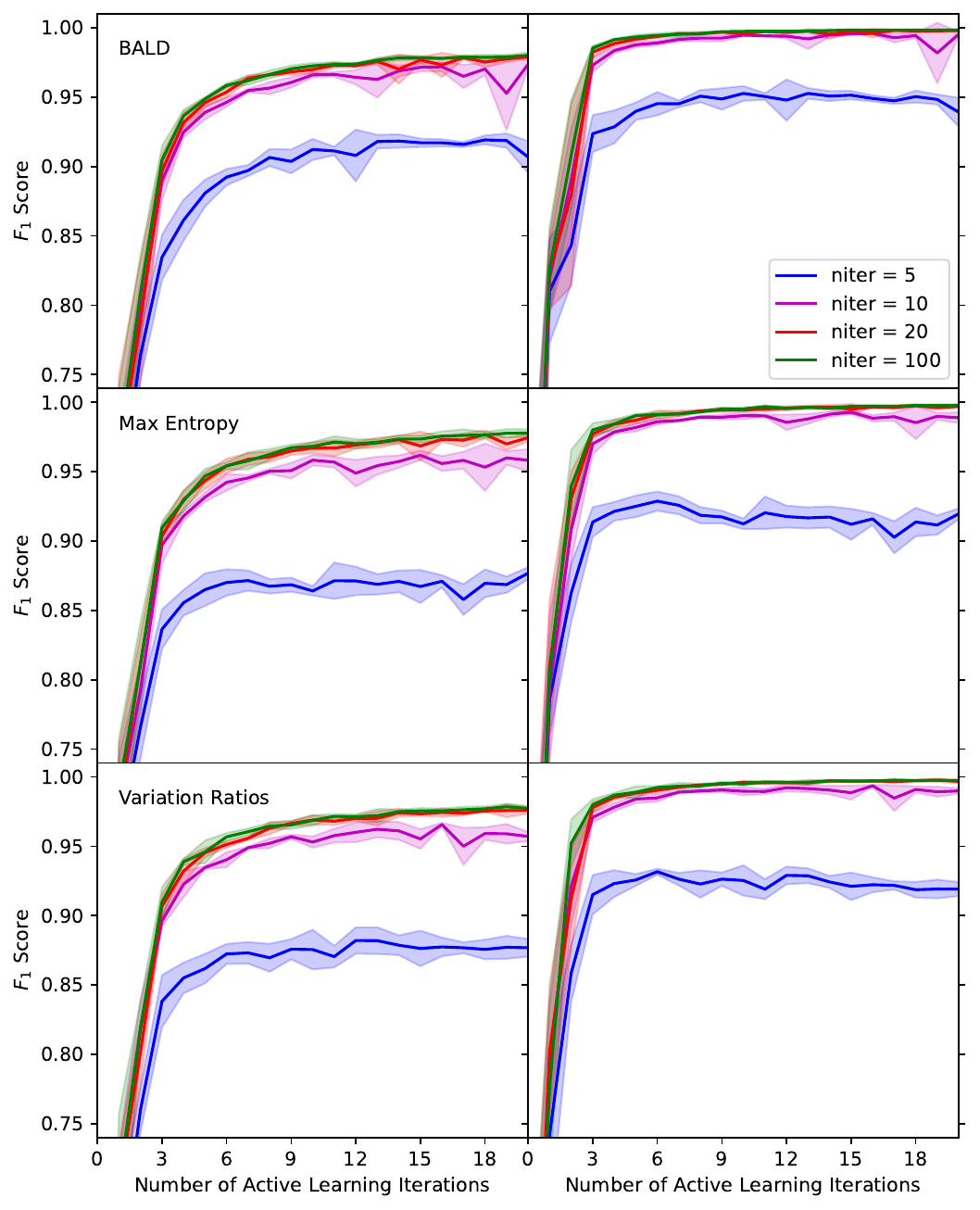}
    \caption{For the general 2HDM potential: (Left) The $F_1$ score
    as a function of the number of active learning iterations performed with the training data labelled by an oracle with $n_{\textrm{iter}} = 5$ (blue), 10 (magenta), 20 (red), and the default value of Table \ref{tab:default-hyperparams} (green). Plots are made with a classifer trained using the BALD query strategy (top), maximum entropy (middle), and variation ratios (bottom). (Right): As on the left, but with the validation set altered by removing points with mutual information greater than the 0.95 quantile for their respective predicted class, in the same manner as Figures \ref{fig:2HDM-F}-\ref{fig:Precustodial-F}.}
    \label{fig:2HDM-noise-uncertainty}
\end{figure}

\begin{figure}
    \includegraphics[width=6in]{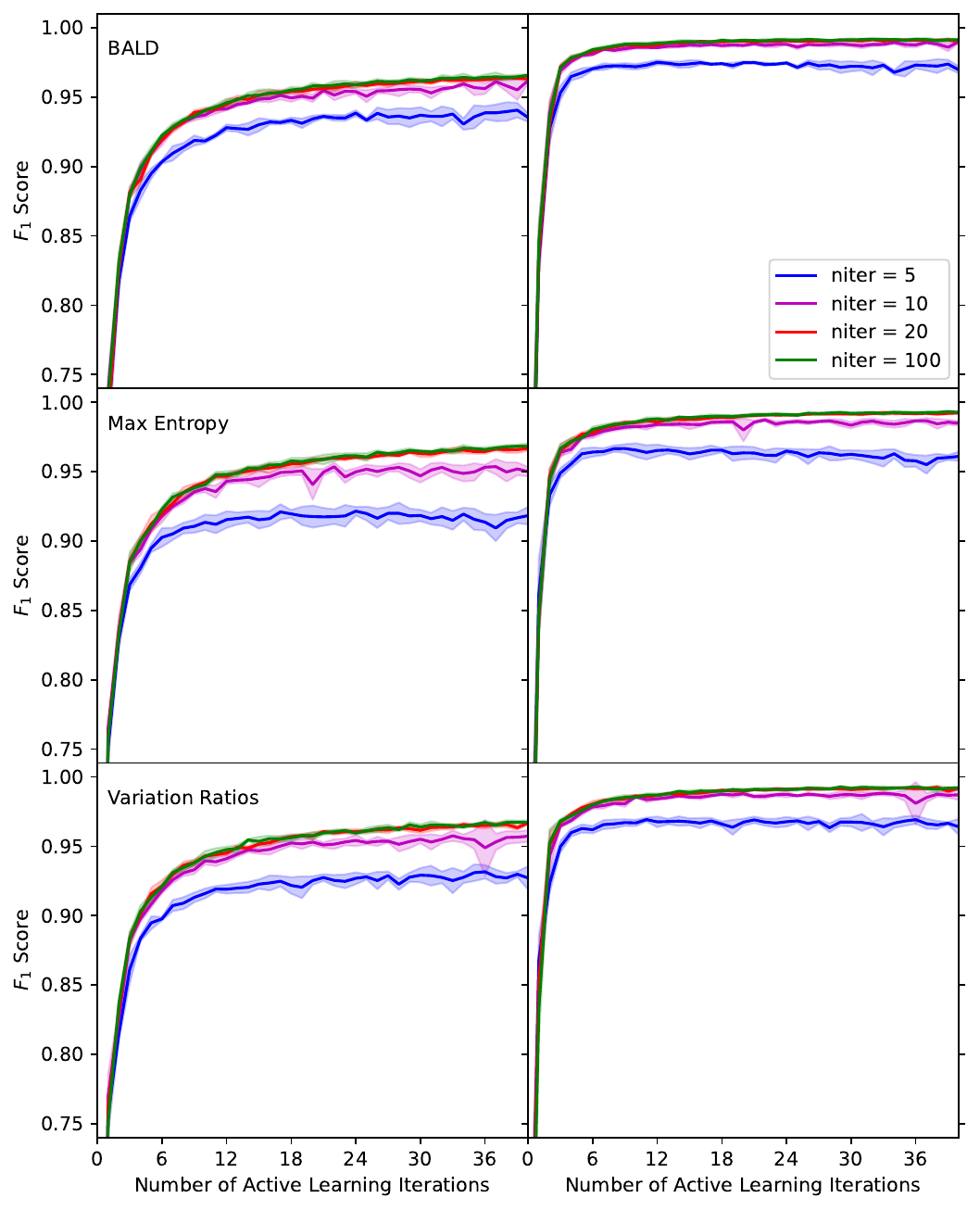}
    \caption{As Figure \ref{fig:2HDM-noise-uncertainty}, but for the 3HDM potential in Eq.(\ref{eq:V-3HDM}).}
    \label{fig:3HDM-noise-uncertainty}
\end{figure}

\begin{figure}
    \includegraphics[width=6in]{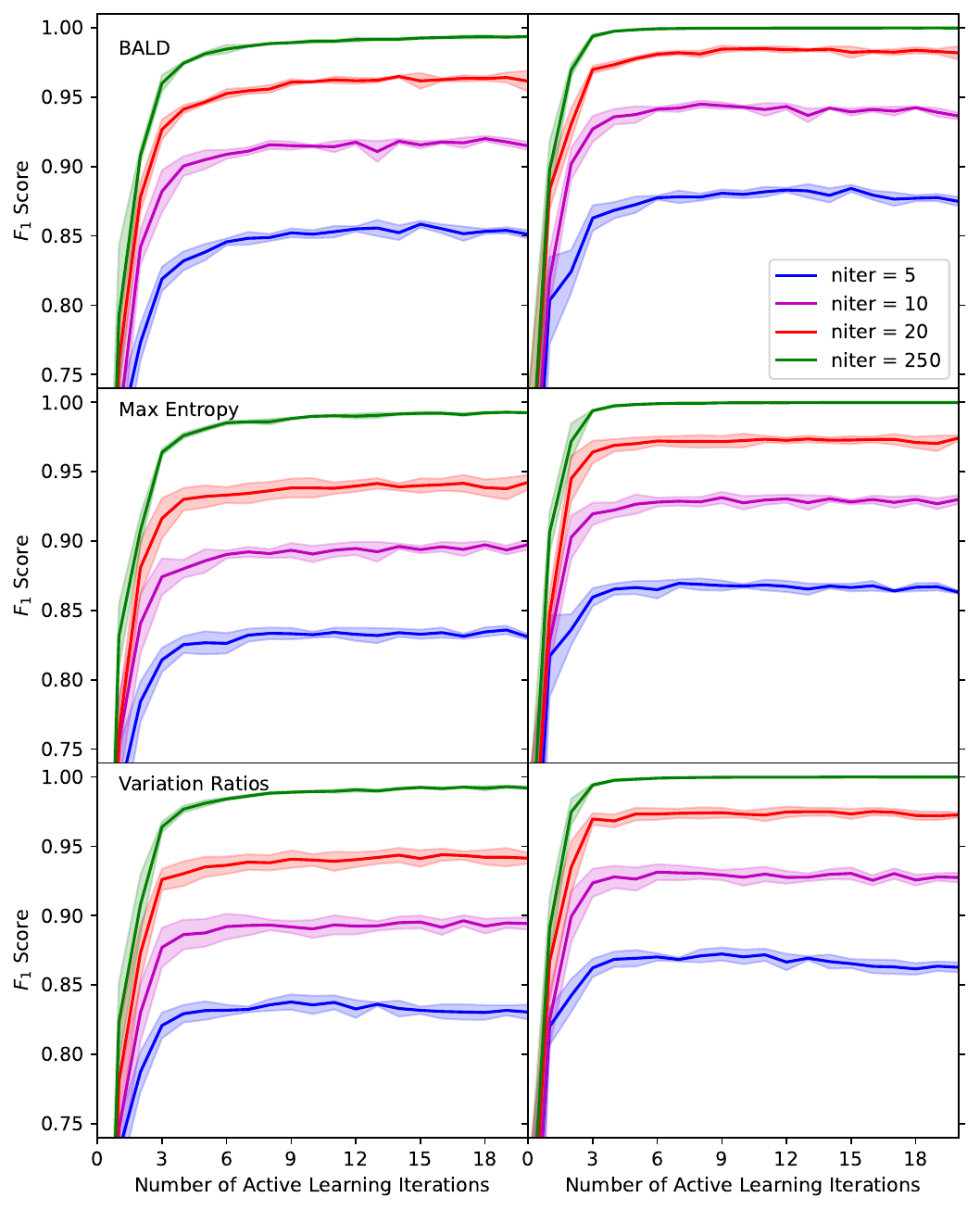}
    \caption{As Figure \ref{fig:2HDM-noise-uncertainty}, but for the precustodial potential in Eq.(\ref{eq:V-PC}).}
    \label{fig:Precustodial-noise-uncertainty}
\end{figure}

In these figures, we find several reassuring results regarding the robustness of our classifier models to oracle noise. First, we see that in the case of an oracle with an $F_1$ score $\gsim 0.99$ (\ie, the $n_{\textrm{iter}} = 20$ case for the 2HDM and 3HDM potentials), there is no significant performance degradation for any of the query strategies-- in turn this suggests that, since the oracle we have used in these experiments is imperfect (and given our robustness tests, likely at a level significantly below the discrepancy between the $n_{\textrm{iter}} = 20$ oracle and our defaults), in all probability we need not be concerned with the effects of oracle noise on our classifiers. Furthermore, we see that even in cases with significantly degraded performance, mutual information remains a highly effective gauge of the reliability of given point predictions: By omitting highly uncertain points, we see significant improvements in the classifier performance over all potentials, query strategies, and $n_{\textrm{iter}}$ values. It should be noted, however, that $F_1$ scores remain significantly degraded relative to the default oracles when a very noisy (\eg, $n_{\textrm{iter}}  = 5$) oracle is used, even when highly uncertain points are omitted from the analysis.

\section{Discussion and Conclusions}\label{sec:conclusion}

In this work, we have presented a novel procedure using active learning to automatically approximate the strictly bounded-from-below parameter space of arbitrary renormalizable scalar potentials. We have further shown that the classifiers produced by this method can significantly outperform existing methods of producing approximate symbolic bounded-from-below conditions, while requiring comparable execution time on personal electronics and demanding no special insight from the model builder beyond the ability to write down the quartic part of potential. These approximate methods also provide uncertainty quantification that can be readily used to gauge a prediction's reliability, allowing the outputs of the classifier to be trusted for use in phenomenological scans of the parameter space in a broad array of BSM scalar models. While in this work we have restricted our analysis to three scalar potentials with complicated (or unknown) bounded-from-below conditions, we have also created a public Python package, called BFBrain \cite{Wojcik2023}, which allows for our procedure to be replicated for any user-specified renormalizable scalar potential with detailed installation and usage instructions available online.

Our study also suggests a number of directions for further improvement and study of our procedure. For example, it remains unclear what precise factors influence the classifier's maximum performance on different potentials, although we have observed small but significant discrepancies in classifier accuracy between the scalar potentials we have explored here. Additional data regarding classifier performance for potentials with a wider range of dimensionalities for their vev parameters and quartic coefficients, therefore, could potentially establish a rough sense of scaling behavior that we cannot deduce here. Further refinements of the active learning strategy and neural network architecture may also result in significant performance gains over what we have thus far demonstrated-- for example, developing a procedure to dynamically alter the procedure for generating the pool of candidate points $L$ as active learning continues may improve the efficiency with which the algorithm explores the parameter space, while permutation symmetries of some potentials, such as those of multi-Higgs doublet models, may be implemented either explicitly or via a graph neural network to further narrow the parameter space that must be explored. Additional study can also be made of stopping criteria for active learning, namely the conditions under which we might consider the network to have achieved high performance (and can stop active learning), without relying on a labelled validation set (which may be unreasonably expensive to produce for some potentials)-- some work in this direction in other active learning contexts has been considered in, \eg, \cite{altschuler2019stopping, ishibashi2021stopping}.

More broadly, we have seen in our work that machine learning techniques continue to find new applications in BSM model building, joining other applications such as active learning experimental and theoretical parameter space constraints \cite{caron2019constraining,hammad2023exploration,goodsell2023active}, reinforcement learning of flavor physics parameters \cite{Harvey:2021oue,Nishimura:2023wdu}, and even automated generation of simplified models \cite{Waltenberger:2020ygp}. We have seen that the problem of finding boundedness-from-below conditions for BSM scalar potentials, for which approximate numerical conditions are often overlooked in favor of analytically tractable, but often highly imprecise, algebraic conditions, can be addressed to a high degree of accuracy using publicly available machine learning tools and hardware accessible to most researchers. It stands to reason that considerably more applications of recent developments in the machine learning space exist for BSM model building, where a significant number of other complex and intractable problems exist.

\section{Acknowledgements}
This work was supported by the U.S. Department of Energy under the contract number DE-SC0017647. The author would like to thank Lisa Everett and Matthew Sullivan for useful discussions related to this project.

\appendix
\section{A Review of Binary Classifier Performance Metrics}\label{appendix:PerformanceEval}
In the interest of maximizing the accessibility of this work, here we briefly review the definitions of various metrics that we employ to rate the performance of our binary classifiers-- specifically precision, recall, and $F_1$ score. The definitions we use here are all standard, and therefore a reader already familiar with these terms will find nothing new here.

We extract all three of the aforementioned metrics from a \emph{confusion matrix} of the classifier on some labelled data set (in this case our various validation data sets. Assuming a binary classifier assigns a ``positive'' or ``negative'' classification to each point in the data set, the elements of the confusion matrix are:
\begin{itemize}
    \item \textbf{True Positives (TP):} The number of points which the classifier \emph{correctly} identifies as positive.
    \item \textbf{False Positives (FP):} The number of points which the classifier \emph{incorrectly} identifies as positive.
    \item \textbf{True Negatives (TN):} The number of points which the classifier \emph{correctly} identifies as negative.
    \item \textbf{False Negatives (FN):} The number of points which the classifier \emph{incorrectly} identifies as negative.
\end{itemize}
Then, we can compute the precision, recall, and $F_1$ score as
\begin{align}
    \textrm{Precision} = \frac{TP}{TP + FP}, \;
    \textrm{Recall} = \frac{TP}{TP + FN}, \;
    F_1 = \frac{2 TP}{2 TP + FP + FN}.
\end{align}
It can be derived that the $F_1$ score is simply the harmonic mean of precision and recall.

We can equivalently think of precision and recall as conditional probabilities in terms of two events: That a point is of the positive class (which we can denote as $+$) and that a point is classified in that class (which we call $C+$). Then, precision is the conditional probability $P(+|C+)$, while recall is the conditional probability $P(C+|+)$. If we let the bounded-from-below class of potentials represent our positive class, then the $F_1$ score incorporates these two probabilities and gives us a key metric for the performance of our classifier: It's ultimately dependent on what fraction of the bounded-from-below parameter space the classifier correctly identifies (recall) and what fraction of points classified as bounded-from-below are classified correctly (precision).

\section{Bayesian Inference, Monte Carlo Dropout, and Concrete Dropout}\label{appendix:BayesianDropout}

This appendix reviews the Bayesian deep learning techniques employed in this paper, specifically Monte Carlo dropout and its variant concrete dropout, in greater detail. To begin, we note that in Bayesian inference, we wish to learn a posterior distribution on some set of model parameters $\mathbf{\theta}$ (in this case, these are distributions on the weights of the Bayesian neural network), given a set of training data $\mathbf{X}$ and their corresponding labels $\mathbf{Y}$-- in other words, we need to learn the distribution
\begin{align}\label{eq:exact-posterior}
    p(\mathbf{\theta} | \mathbf{X}, \mathbf{Y}) = \frac{p( \mathbf{Y} | \mathbf{X}, \mathbf{\theta} ) p(\mathbf{\theta})}{\int p (\mathbf{Y} | \mathbf{X}, \mathbf{\theta} ) p(\mathbf{\theta}) \, d \mathbf{\theta}},
\end{align}
where in Eq.(\ref{eq:exact-posterior}) we have invoked Bayes's theorem, and $p(\mathbf{\theta})$ is a prior distribution on the model parameters. It is unfortunately not tractable to determine the posterior distribution (or in particular, do the integration in the denominator of the equation) exactly in our problem case-- or in fact for all but a few extremely simple statistical learners. Variational inference is a common technique to approximate the posterior-- in which we posit a tractable distribution $q _\phi (\mathbf{\theta})$, which depends on some parameters $\phi$, and then find parameters $\phi$ that minimize the Kullback-Leibler (KL) divergence between $q_\phi(\mathbf{\theta})$ and the intractable exact posterior. The exact posterior is unknown, but maximizing the log evidence lower bound,
\begin{align}\label{eq:ELBO}
    \mathcal{L}_{VI} \equiv \int q_\phi (\mathbf{\theta}) \log p (\mathbf{Y} | \mathbf{X}, \mathbf{\theta} ) \, d \mathbf{\theta} - KL \big( q_\phi (\mathbf{\theta}) || p(\mathbf{\theta}) \big),
\end{align}
where $KL$ denotes the KL divergence (note that the divergence in this expression is between two known distributions, $q_\phi(\mathbf{\theta})$ and the prior), is equivalent to minimizing the KL divergence between $q_\phi (\mathbf{\theta})$ and the exact posterior \cite{bishop2006pattern}. Provided all the terms in Eq.(\ref{eq:ELBO}) are differentiable with respect to the parameters $\phi$, we can optimize the expression with respect to the various parameters of the approximate distribution (for example, mean weights and variances of the model weights).

In \cite{gal2016dropout}, the authors demonstrated a correspondence between the optimization objective in Eq.(\ref{eq:ELBO}) for certain classes of Bayesian neural networks and the optimization objective of a conventional feed-forward neural network with dropout applied before each weight layer. In general, such a network has its predictions given as a composition of layers which map some input $\mathbf{x}$ to an output. The output of the $k^{\textrm{th}}$ layer in the neural network is given by
\begin{align}\label{eq:dropout-layer}
    \mathbf{s}_{k+1} = \sigma_k ( \mathbf{W}_k \cdot \mathbf{z}_k \cdot \mathbf{s}_k + \mathbf{b}_k ),\\
    \mathbf{z}_k \sim \textrm{Bernoulli}(1-p_k). \nonumber
\end{align}
Here, $\mathbf{W}_k$ is the weight matrix for the $k^{\textrm{th}}$ layer, $z_k$ is a diagonal matrix of Bernoulli random variables so that each element of the layer input vector $\mathbf{s}_k$ has a probability $p_k$ of being set to 0 with each forward pass through the network, $\mathbf{b}_k$ is a bias vector, and $\sigma_k$ is some nonlinear activation function (in our case, a ReLU or a sigmoid activation function). For a training data set consisting of $N$ samples with a binary classifier, optimizing the loss \cite{gal2016dropout,gal2017concrete}
\begin{align}\label{eq:mc-dropout-loss}
    BCE + \frac{1}{N} \sum_{k = 1}^L \bigg\{ \frac{l^2 (1- p_k)}{2} ||\mathbf{W}_k||^2_2 + \frac{1}{2} ||\mathbf{b}_k ||^2_2 + K_{k} \big( p_k \log p_k + (1 - p_k) \log (1 - p_k) \big)   \bigg\},
\end{align}
approximates variational inference of a Bayesian neural network with prior distributions $\mathcal{N}(0, 1/l^2)$ over the weights and biases, where $BCE$ denotes binary cross-entropy loss (averaged over all samples in the training set, or a representative batch of them) and evaluated with a single forward pass with dropout), $L$ is the number of layers in the neural network, and $K_k$ is the number of neurons in the $(k-1)^{\textrm{th}}$ layer of the network (that is, the number of inputs into the $k^{\textrm{th}}$ layer). Furthermore, as discussed in the main text, obtaining the outputs of the neural network at test time by performing multiple evaluations of the same inputs with dropout will approximate drawing from the distribution of outputs of the Bayesian neural network-- this technique is known as Monte Carlo dropout.

In Monte Carlo dropout, we see that the parameters $\phi$ of our approximate posterior $q_\phi( \mathbf{\theta})$ are the weights and biases of our dropout network ($\mathbf{W}_k$ and $\mathbf{b}_k$, respectively), as well as the dropout probabilities for each layer $p_k$.
While the loss in Eq.(\ref{eq:mc-dropout-loss}) is differentiable with respect to $\mathbf{W}_k$ and $\mathbf{b}_k$, it is not possible to differentiate with respect to the dropout probability $p_k$. Therefore, in order to minimize this part of the objective one must perform a scan over possible dropout parameters for each layer. Given that our use case includes continuously augmenting the training data set, in principle we would then need to perform this scan for each active learning iteration, which is clearly unworkable. If instead we arbitrarily specified dropout probabilities, the quality of our uncertainty estimates would suffer-- as discussed in \cite{gal2017concrete}, a model with fixed dropout probability can only reduce its predictive variance by reducing the magnitude of its weights-- which can ultimately degrade model performance (after all, a network with only weight values of exactly 0 will have zero predictive variance, but also no predictive power). In practice, this means that for fixed dropout probability, a larger amount of training data may not a priori reduce the network's predictive variance \cite{verdoja2020notes}, in turn suggesting that, for example, comparing uncertainty estimates between active learning iterations will be nonsensical (or at the very least, unprincipled). More formally, failing to tune the dropout probability is ultimately failing to fully optimize the KL divergence between the approximate posterior $q(\mathbf{\theta})$ and the true posterior $p (\mathbf{\theta} | \mathbf{X}, \mathbf{Y})$. Hence, the potential predictive power (and associated uncertainty) of the Bayesian classifier is not extracted.

Concrete dropout \cite{gal2017concrete} circumvents this difficulty by approximating the Bernoulli random variables $\mathbf{z}_k$ with corresponding draws from the concrete distribution,
\begin{align}
    \textrm{sigmoid} \bigg( \frac{1}{t} (\log p - \log(1-p) + \log u - \log (1-u) )\bigg), \; u \sim \textrm{Unif}(0,1),
\end{align}
where $t$ is some temperature parameter. Hence, instead of being either retained with probability $p_k$ or dropped, an input to the $k^{\textrm{th}}$ layer of the neural network will be scaled according to a draw from this distribution. For small $t (\lesssim 0.1)$, the concrete distribution well approximates the Bernoulli distribution, but is smooth and can be differentiated with respect to the probability $p$. As a result, the loss in Eq.(\ref{eq:mc-dropout-loss}) can now be optimized over its weights, biases, \emph{and} dropout probabilities via gradient descent methods (\ie, conventional neural network training). In the BFBrain package, we implement concrete dropout in Tensorflow 2, borrowing heavily from both the code presented in \cite{gal2017concrete} and an earlier adaptation to Tensorflow 2 given by \cite{Amerio2022}, with only marginal changes made to facilitate model portability.

\section{The Learner}\label{appendix:Learner}
In this appendix, we discuss the details of our specific implementation of a Bayesian neural network in our 
As discussed in Section \ref{sec:bayesian-nn}, for our learner we have selected a Bayesian neural network, with Bayesian inference approximated using concrete dropout. Having outlined the fundamental concepts used in the Bayesian neural network in that section, here we restrict our attention to the particularities of our learner's construction. Our basic network is a Bayesian multilayer perceptron (MLP) implemented in Tensorflow \cite{tensorflow2015-whitepaper}, which takes as input a vector $\vec{\lambda}$ and outputs confidence scores in a probability distribution between 0 and 1-- if the average score is close to 0, the network is confident that the point is not bounded from below, while if the average score is close to 1, the network is confident that it is.
The quartic coefficients are fed into the MLP, which has some number (varying between 3 and 5 in our experiments) of hidden layers, each with 128 neurons.\footnote{We found little appreciable change in performance when increasing or decreasing the number of neurons in a given layer, as long this number was $\gsim O(100)$.} All weights have a Bayesian prior distribution $\mathcal{N}(0, 1/l^2)$, where we have found empirically that $l = 0.1$ gives good results for all potentials that we consider, and that somewhat smaller or larger values of $l$ do not substantially affect the results. 

Exact Gaussian process inference is approximated by concrete dropout, discussed in Section \ref{sec:bayesian-nn} and Appendix \ref{appendix:BayesianDropout}, where dropout is applied to the outputs from all hidden layers, the hidden layer activation functions are ReLU, and the output node activation function is a sigmoid. In \cite{gal2017concrete}, the authors also suggest that concrete dropout should be applied to the input layer. If the training set is sufficiently informative, then they find that the dropout probability for the input layer will converge to 0. However, we find this practice somewhat degrades the performance of our neural network, both in raw accuracy and quality of the uncertainty metrics. This discrepancy likely stems from the fact that the study in \cite{gal2017concrete} considered image classification, where the inputs are very high-dimensonal and neural network must learn an appropriate low-dimensional feature representation of the raw input. In our case, the inputs are already low-dimensional and condensed into extremely informative features, namely the quartic potential coefficients themselves -- ignoring any one of these inputs will make a rigorous determination of the potential's boundedness-from-below impossible. In lieu of applying concrete dropout to the input layer, we instead apply regularization terms to the network to render the learner equivalent to one with concrete dropout applied to the input layer (see Eq.(\ref{eq:mc-dropout-loss})), but that layer's dropout probability is kept constant at 0.

We can restrict our neural network architecture to leverage the fact that boundedness-from-below of a scalar potential is invariant to a positive rescaling of all the quartic coefficients fairly easily: By setting all neuron biases, as defined in Appendix \ref{appendix:BayesianDropout}, Eq.(\ref{eq:dropout-layer}), to 0, it is trivial to show that the neural network's output is restricted to a sigmoid of a homogenous function in the inputs, which in turn guarantees that the classifications will demonstrate scale invariance. Because a large positive rescaling will still affect the neural network's confidence scores (for example, the entropy of an uncertain output might be significantly reduced simply by scaling the input by a large positive number), in any practical inference task we will still apply a preprocessing layer to project the input onto the unit hypersphere. However, by limiting the functional form of our network's outputs to reflect the character of the problem, we still narrow our hypothesis space and have found that we achieve a more performant trained classifier.

After new training points are acquired in each training round, the neural network's weights are randomly re-initialized and the network is re-trained on the entire set of training data thus far accrued. This re-initialization prevents the network from overweighting of training data that was sampled in an early active learning iteration, which we find eventually leads to a collapse of the model's epistemic uncertainty estimates. The loss objective is, as is standard for classifiers of this type, given by binary cross entropy (as well as the regularization terms discussed in Appendix \ref{appendix:BayesianDropout}). Optimization is accomplished with the Adam algorithm \cite{kingma2014adam} with a learning rate of $0.001$, and continues for $2 \times 10^4$ epochs or until the loss on the training data set fails to decrease for 100 consecutive epochs (in which case the weights which yielded the smallest loss are restored at the end of training).

A final remark on the details of our neural network training is in order, regarding our arrangement of the training data into batches. It is common practice, in order to leverage the high degree of parallelization possible on a GPU, to compute the training loss for a neural network for a large batch of inputs simultaneously, and then apply updates to the weights to minimize the mean loss within that batch. Throughout this paper, we shall take our batches to be large enough to contain the \emph{entire} training data set-- this will result in more stable performance and uncertainty evaluation, at the expense of limiting the potential size of our training data sets to be small enough to fit within GPU memory. We have found that an alternative strategy, with small batch sizes of $5\times 10^3$ points, results in a mild degradation of performance and training stability, so for the somewhat simple models we consider for our experiments in this work (and only modestly-sized training data sets), we adhere to placing all training data in a single batch. The BFBrain package permits the use of smaller batches, to accommodate larger models and/or training data that a particular scalar potential might necessitate.

\section{Additional Slice Ensemble Data}\label{appendix:slice-ensemble}

In this Appendix, we include additional figures in the format of Figures \ref{fig:2HDM-slice-image}-\ref{fig:Precustodial-slice-image} for other slice ensembles that we have considered, in addition to a table of the quartic coefficients used to generate all figures appearing in the work.

\begin{figure}
    \centerline{\includegraphics[width=3.2in]{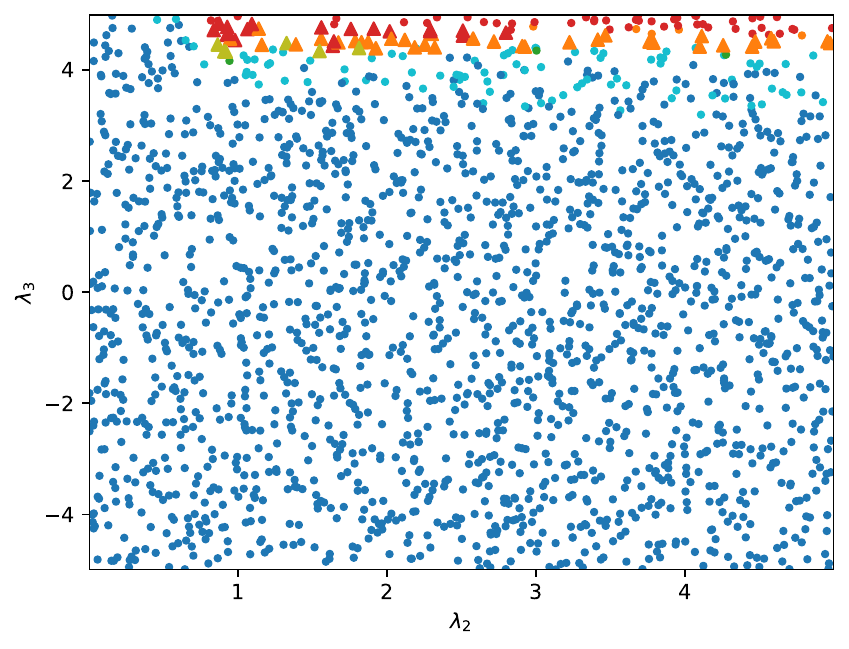}
    \hspace{-0.25cm}
    \includegraphics[width=3.2in]{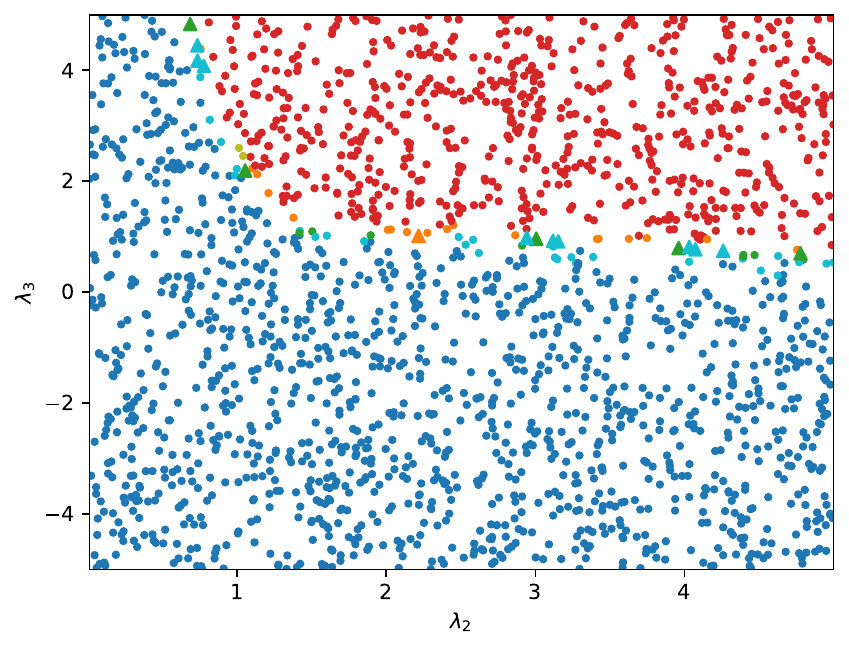}}
    \centerline{\includegraphics[width=3.2in]{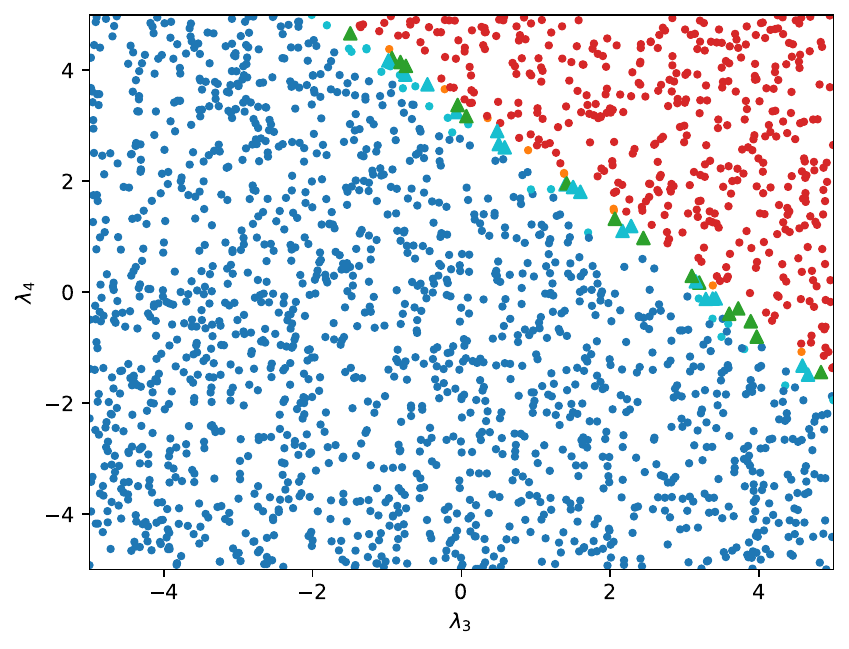}
    \hspace{-0.25cm}
    \includegraphics[width=3.2in]{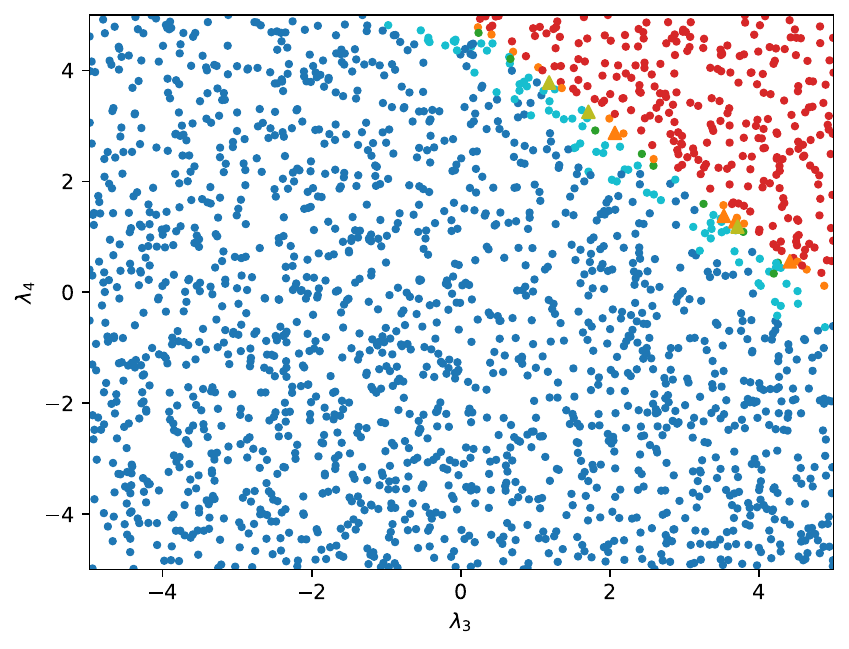}}
    \centerline{\includegraphics[width=3.2in]{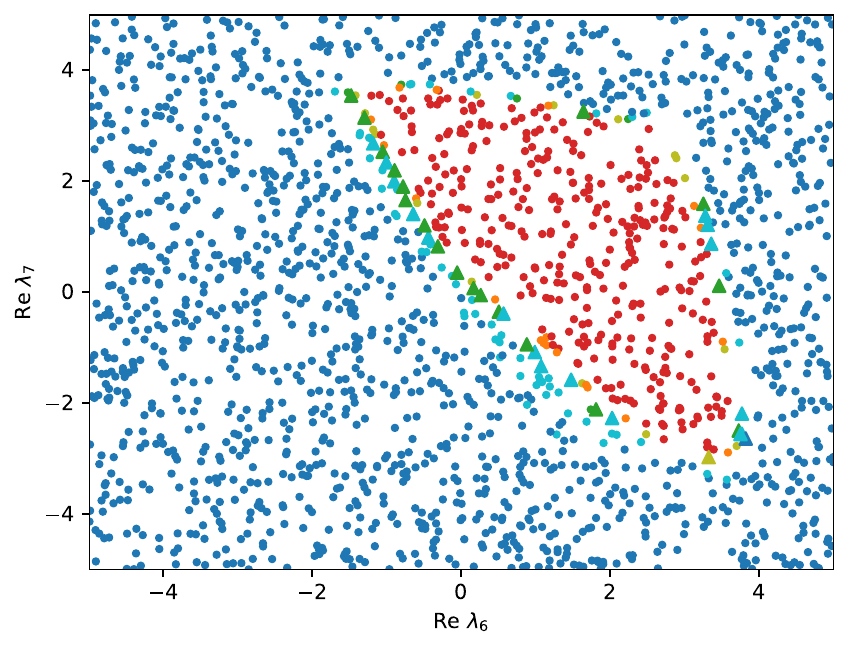}
    \hspace{-0.25cm}
    \includegraphics[width=3.2in]{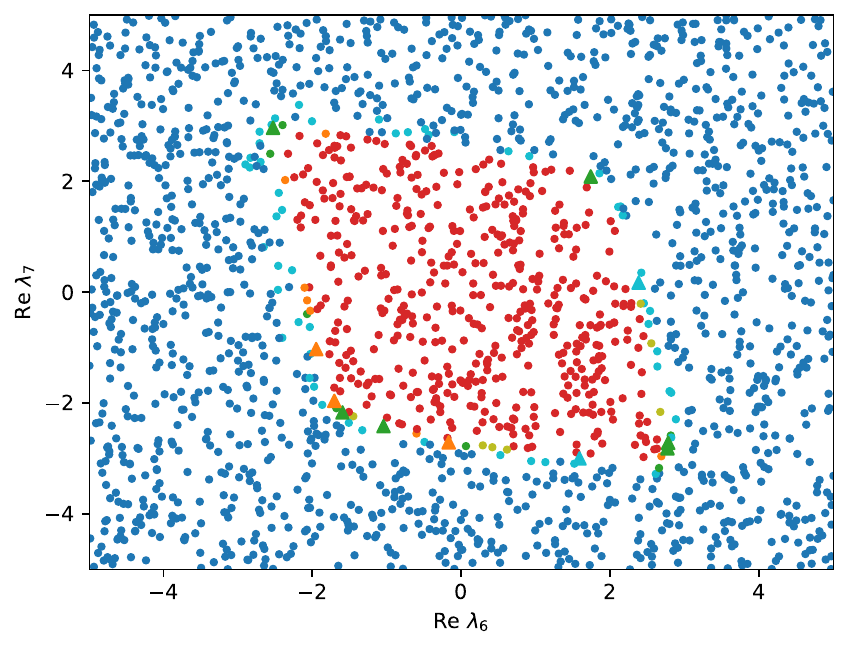}}
    \caption{As Figure \ref{fig:2HDM-slice-image}, but for the $\lambda_2 - \lambda_3$ (top), $\lambda_3 - \lambda_4$ (middle), and $\textrm{Re} \, \lambda_6 - \textrm{Re} \, \lambda_7$ (bottom) slice ensembles of the 2HDM potential, with quartic coefficients defined in Eq.(\ref{eq:V-2HDM}).}
    \label{fig:2HDM-slice-image-appendix}
\end{figure}

\begin{figure}
    \centerline{\includegraphics[width=3.2in]{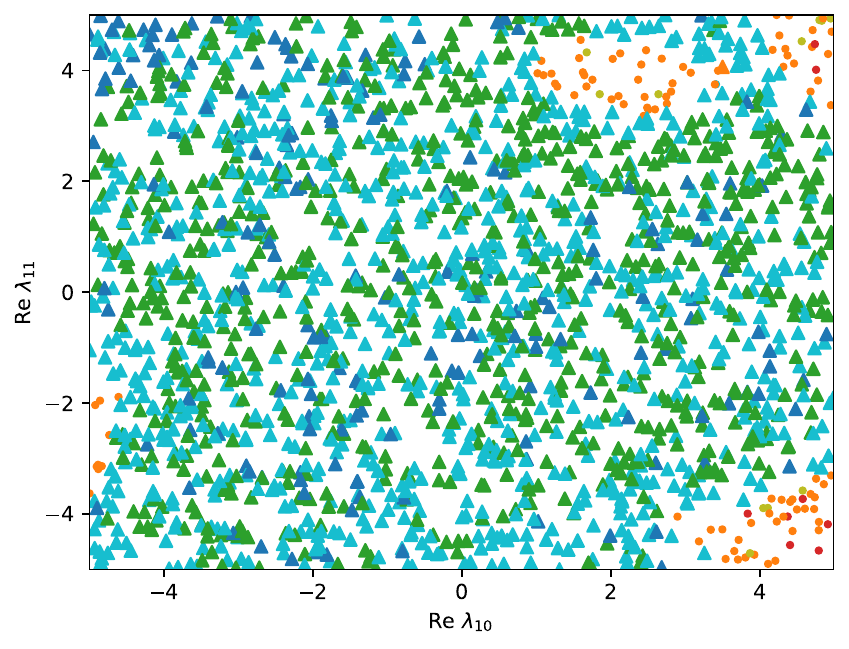}
    \hspace{-0.25cm}
    \includegraphics[width=3.2in]{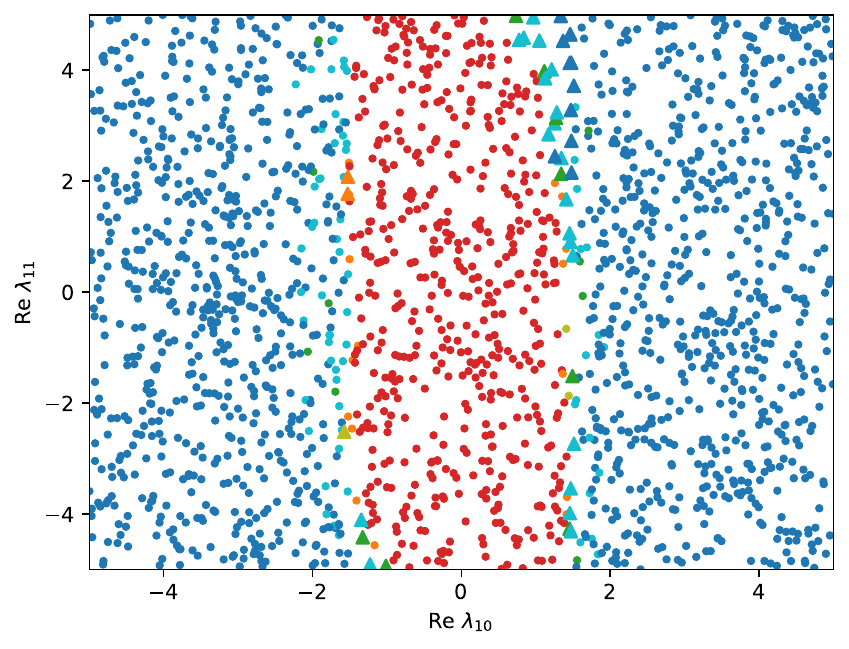}}
    \centerline{\includegraphics[width=3.2in]{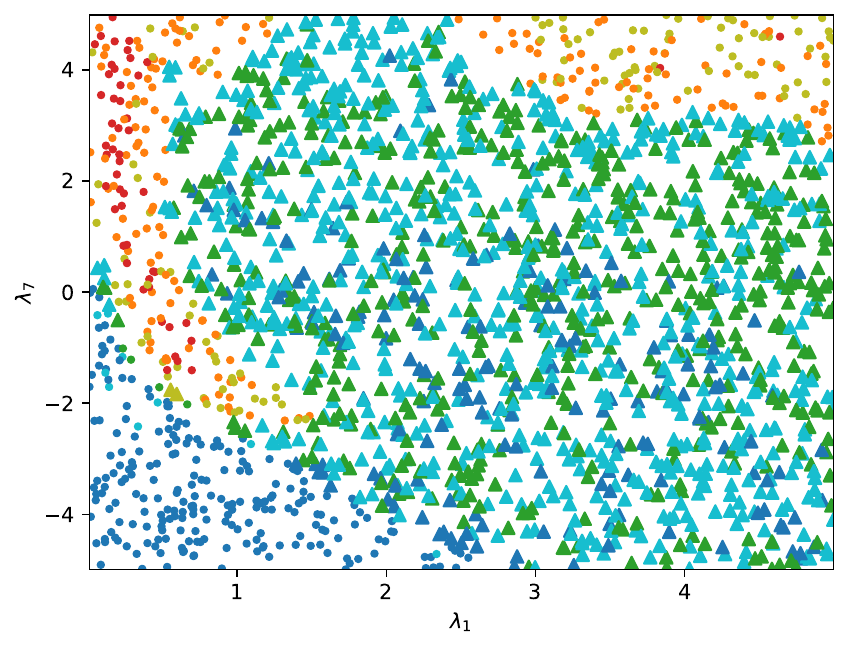}
    \hspace{-0.25cm}
    \includegraphics[width=3.2in]{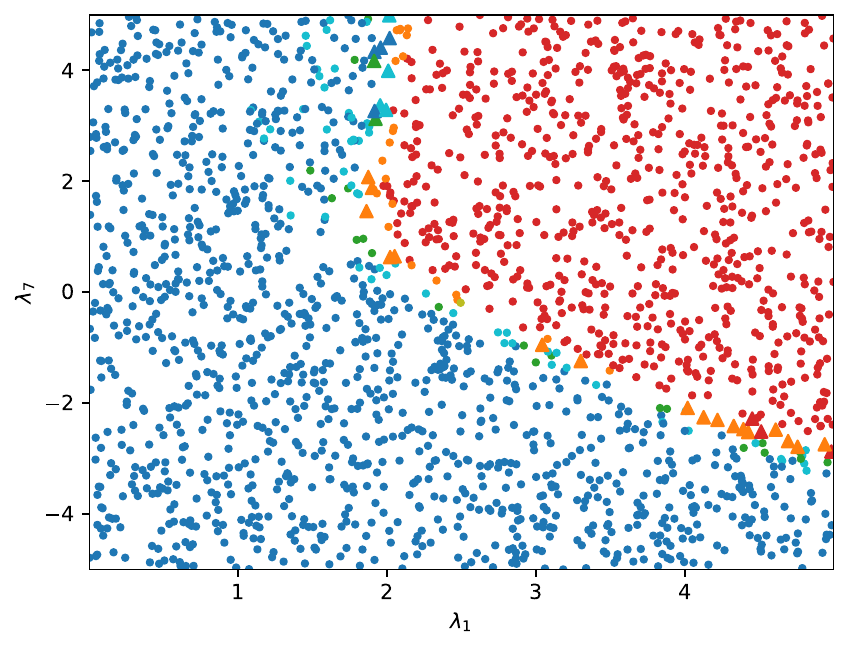}}
    \centerline{\includegraphics[width=3.2in]{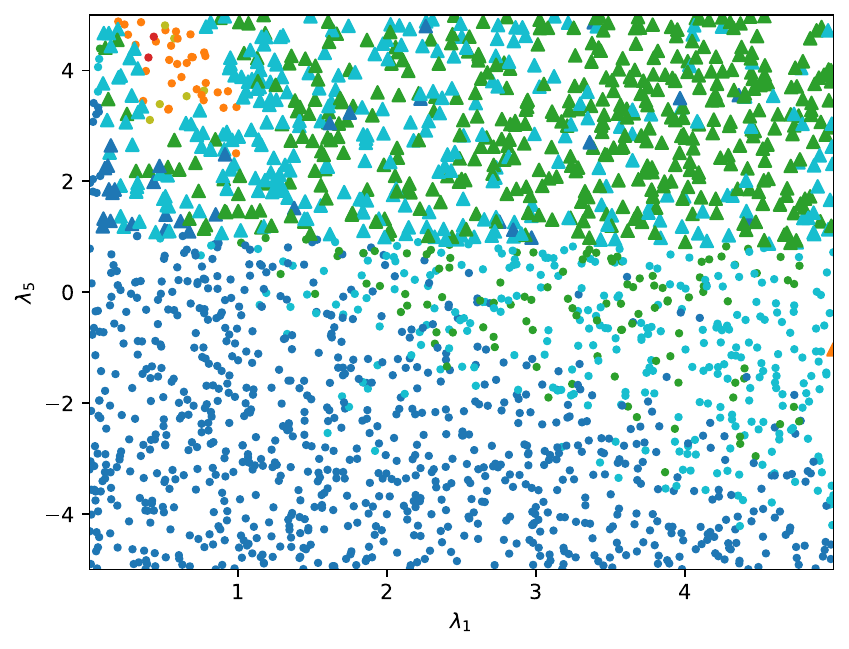}
    \hspace{-0.25cm}
    \includegraphics[width=3.2in]{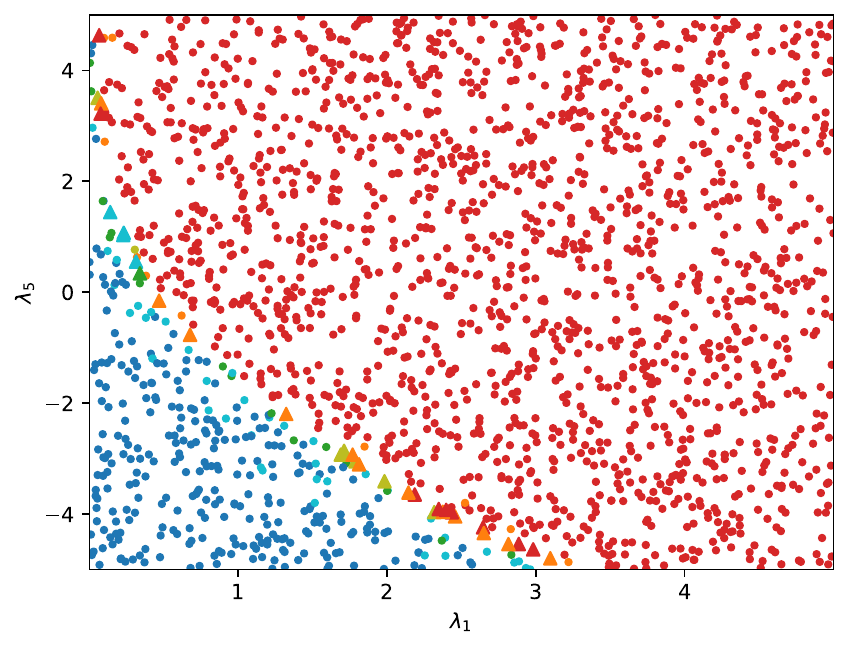}}
    \caption{As Figure \ref{fig:2HDM-slice-image}, but for the $\textrm{Re} \, \lambda_{10} - \textrm{Re} \lambda_{11}$ (top), $\lambda_1 - \lambda_7$ (middle), and $\lambda_1 - \lambda_5$ (bottom) slice ensembles of the 3HDM potential, with quartic coefficients defined in Eq.(\ref{eq:V-3HDM}).}
    \label{fig:3HDM-slice-image-appendix}
\end{figure}

\begin{figure}
    \centerline{\includegraphics[width=3.2in]{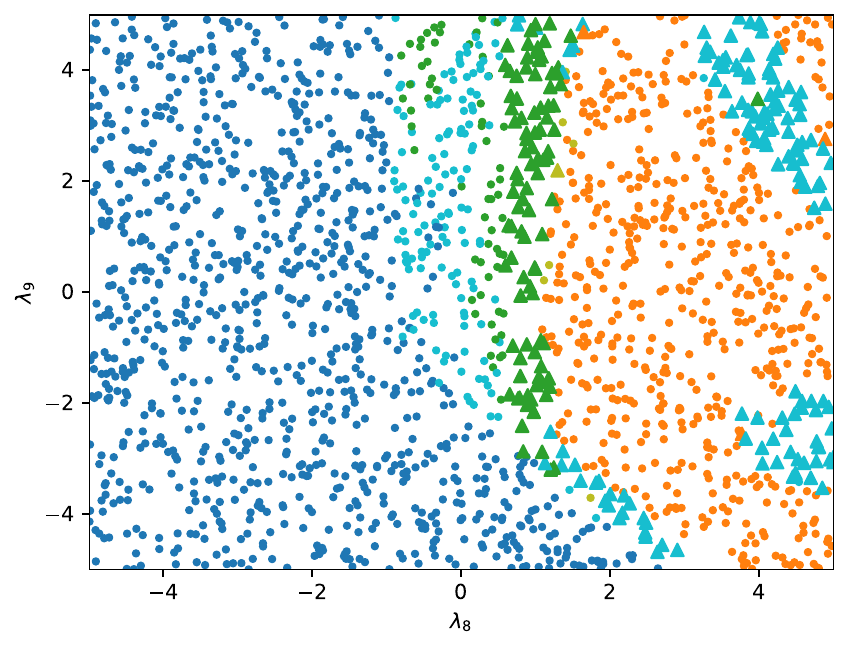}
    \hspace{-0.25cm}
    \includegraphics[width=3.2in]{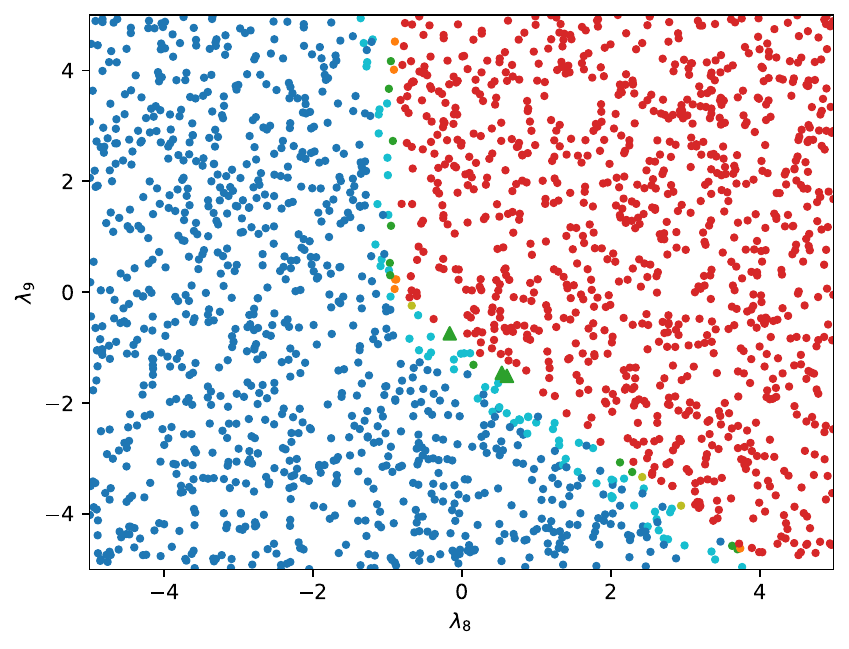}}
    \centerline{\includegraphics[width=3.2in]{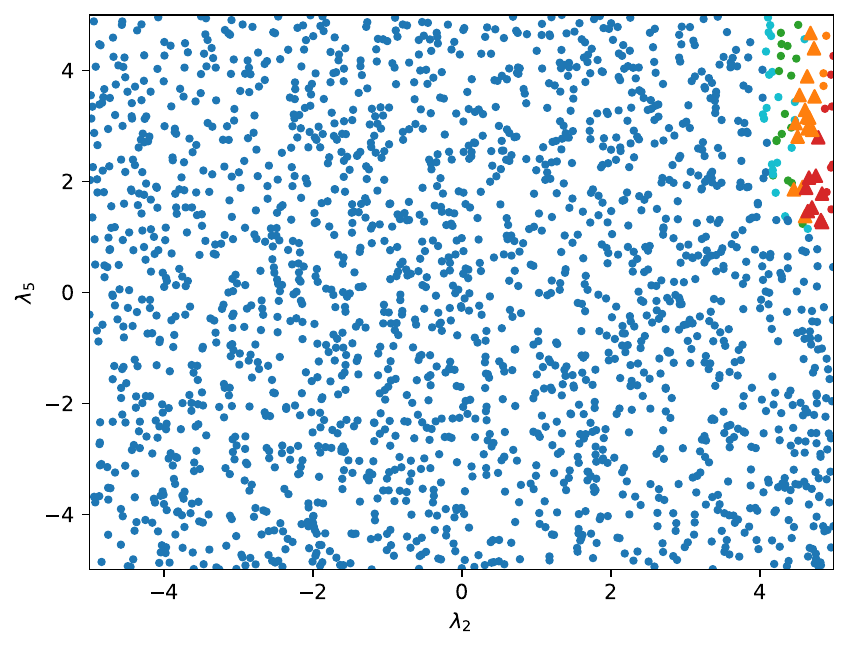}
    \hspace{-0.25cm}
    \includegraphics[width=3.2in]{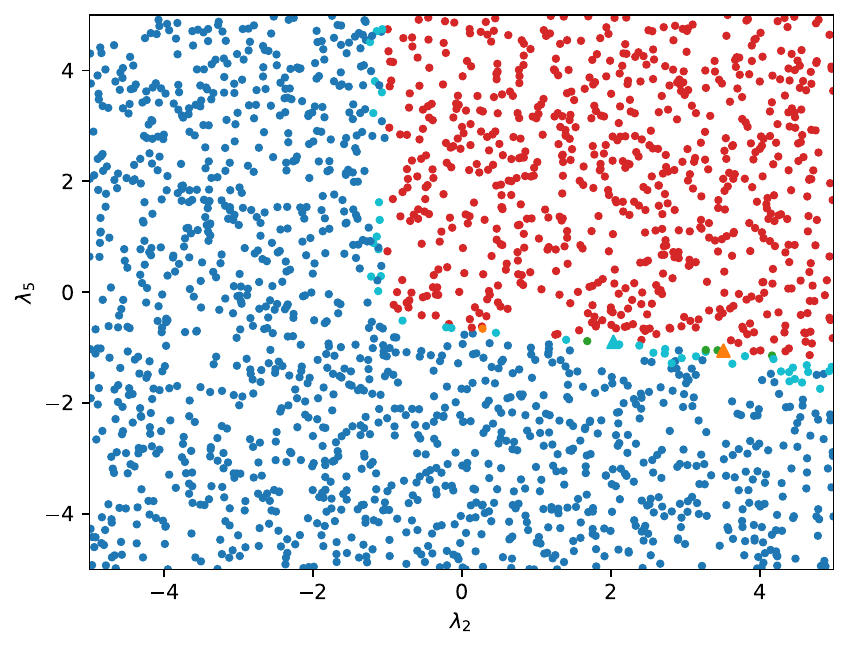}}
    \centerline{\includegraphics[width=3.2in]{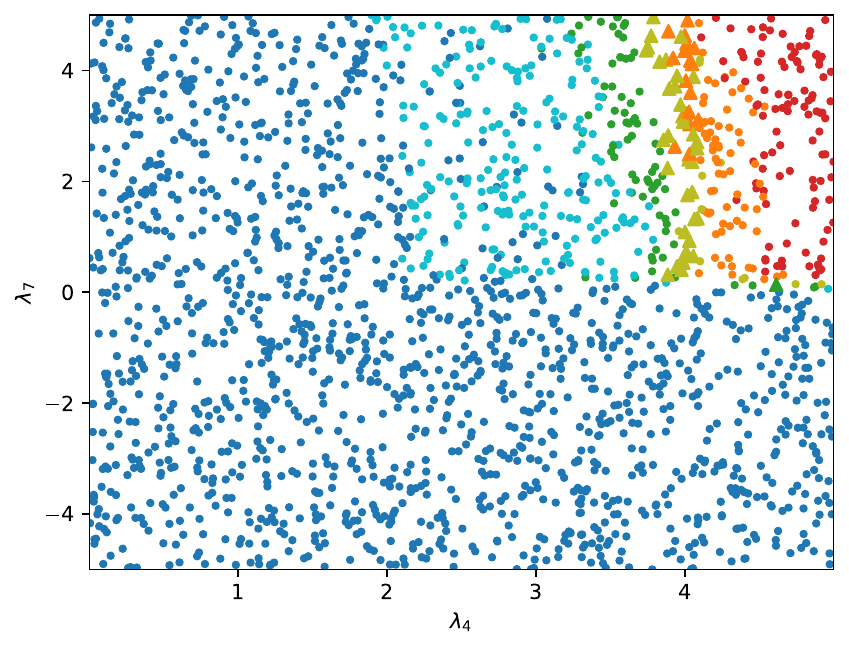}
    \hspace{-0.25cm}
    \includegraphics[width=3.2in]{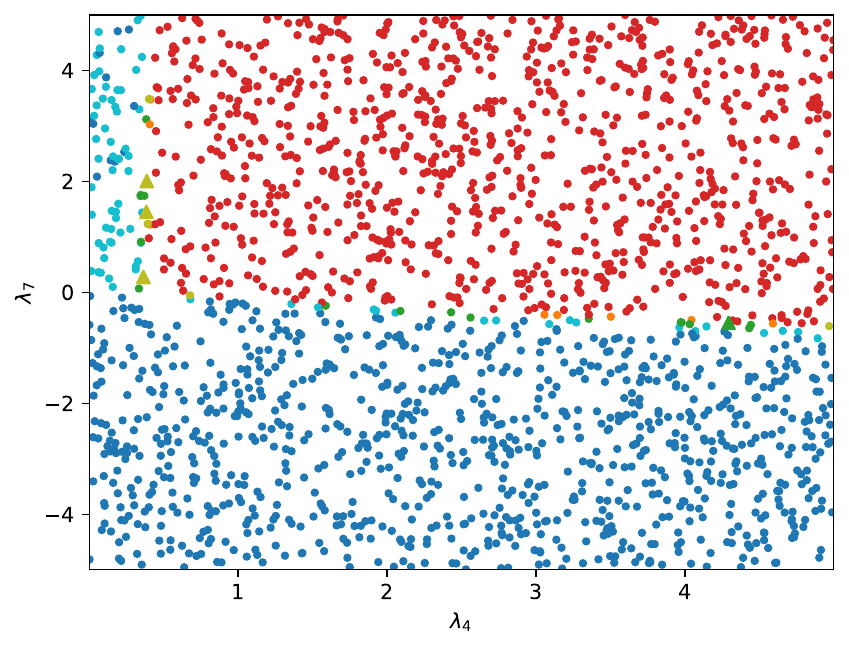}}
    \caption{As Figure \ref{fig:2HDM-slice-image}, but for the $\lambda_8 - \lambda_9$ (top), $\lambda_2 - \lambda_5$ (middle), and $\lambda_4 - \lambda_7$ (bottom) slice ensembles of the precustodial potential, with quartic coefficients defined in Eq.(\ref{eq:V-PC}).}
    \label{fig:Precustodial-slice-image-appendix}
\end{figure}

\begin{table}[]
    \centering
    \begin{tabular}{| c | c | c | c | c |}
    \hline
    & $\lambda_4 - \textrm{Re} \, \lambda_5$ & $\lambda_2-\lambda_3$ & $\lambda_3 - \lambda_4$ & $\textrm{Re} \, \lambda_6 - \textrm{Re} \lambda_7$\\
    \hline
    $\lambda_1$ & 1.99 & 2.43 & 4.32 & 2.54\\
    $\lambda_2$ & 0.36 & $\lambda_2$ & 1.46 & 2.62\\
    $\lambda_3$ & 3.54 & $\lambda_3$ & $\lambda_3$ & 2.35\\
    $\lambda_4$ & $\lambda_4$ & 3.94 & $\lambda_4$ & 4.02\\
    $\textrm{Re} \, \lambda_5$ & $\textrm{Re} \, \lambda_5$ & -3.53 & -1.84 & 4.27\\
    $\textrm{Im} \, \lambda_5$ & 2.19 & 3.59 & -0.92 & -3.25\\
    $\textrm{Re} \, \lambda_6$ & -1.64 & 0.85 & 0.34 & $\textrm{Re} \, \lambda_6$\\
    $\textrm{Im} \, \lambda_6$ & -2.41 & -3.12 & -3.02 & -1.35\\
    $\textrm{Re} \, \lambda_7$ & 0.07 & -0.91 & -0.35 & $\textrm{Re} \, \lambda_7$\\
    $\textrm{Im} \, \lambda_7$ & -1.11 & 1.22 & 2.26 & -0.41\\
    \hline
    \end{tabular}
    \caption{\footnotesize The parameters corresponding to the \emph{worst} binary accuracy among each of the 2HDM slice ensembles. Each column corresponds to a different pair of parameters which are scanned over (see Section \ref{sec:exp-slices} for details). Coefficients listed here are defined in Eq.(\ref{eq:V-2HDM}).}
    \label{tab:2HDM-slice-params-worst}
\end{table}

\begin{table}[]
    \centering
    \begin{tabular}{| c | c | c | c | c |}
    \hline
    & $\lambda_4 - \textrm{Re} \, \lambda_5$ & $\lambda_2-\lambda_3$ & $\lambda_3 - \lambda_4$ & $\textrm{Re} \, \lambda_6 - \textrm{Re} \lambda_7$\\
    \hline
    $\lambda_1$ & 4.22 & 2.52 & 4.53 & 2.96\\
    $\lambda_2$ & 4.10 & $\lambda_2$ & 3.08 & 2.83\\
    $\lambda_3$ & 3.68 & $\lambda_3$ & $\lambda_3$ & 4.30\\
    $\lambda_4$ & $\lambda_4$ & 4.68 & $\lambda_4$ & 4.82\\
    $\textrm{Re} \, \lambda_5$ & $\textrm{Re} \, \lambda_5$ & 0.53 & -1.73 & -4.05\\
    $\textrm{Im} \, \lambda_5$ & -2.26 & -0.13 & 0.56 & -0.81\\
    $\textrm{Re} \, \lambda_6$ & -2.48 & -2.36 & 0.10 & $\textrm{Re} \, \lambda_6$\\
    $\textrm{Im} \, \lambda_6$ & 1.54 & 0.82 & -4.54 & -2.97\\
    $\textrm{Re} \, \lambda_7$ & -3.13 & -0.99 & 0.21 & $\textrm{Re} \, \lambda_7$\\
    $\textrm{Im} \, \lambda_7$ & 1.96 & 1.39 & 2.34 & 1.76\\
    \hline
    \end{tabular}
    \caption{\footnotesize As Table \ref{tab:2HDM-slice-params-worst}, but for the parameters for which the \emph{median} accuracy in the slice ensemble is obtained, computed as described in Section \ref{sec:exp-slices}.}
    \label{tab:2HDM-slice-params-med}
\end{table}

\begin{table}[]
    \centering
    \begin{tabular}{| c | c | c | c | c |}
    \hline
    & $\lambda_4 - \lambda_7$ & $\textrm{Re} \, \lambda_{10} - \textrm{Re} \, \lambda_{11}$ & $\lambda_1 - \lambda_7$  & $\lambda_1 - \lambda_5$\\
    \hline
    $\lambda_1$ & 2.46 & 2.24 & $\lambda_1$ & $\lambda_1$\\
    $\lambda_2$ & 3.37 & 2.37 & 3.10 & 2.69\\
    $\lambda_3$ & 4.22 & 2.28 & 0.83 & 3.98\\
    $\lambda_4$ & $\lambda_4$ & 3.61 & 3.34 & -0.97\\
    $\lambda_5$ & -3.53 & -1.17 & 2.81 & $\lambda_5$\\
    $\lambda_6$ & 4.77 & 2.07 & 0.95 & -1.63\\
    $\lambda_7$ & $\lambda_7$ & 3.75 & $\lambda_7$ & 4.29\\
    $\lambda_8$ & 1.34  & 4.54 & 4.98 & 4.40\\
    $\lambda_9$ & 4.60 & -3.34 & 0.51 & -0.53\\
    $\textrm{Re} \, \lambda_{10}$ & -1.14 & $\textrm{Re} \, \lambda_{10}$ & 2.98 & 1.36\\
    $\textrm{Im} \, \lambda_{10}$ & 1.18 & -1.56 & 3.17 & 2.44\\
    $\textrm{Re} \, \lambda_{11}$ & 2.79 & $\textrm{Re} \, \lambda_{11}$ & -0.59 & 1.61\\
    $\textrm{Im} \, \lambda_{11}$ & -3.01 & -0.76 & -3.40 & -0.45\\
    $\textrm{Re} \, \lambda_{12}$ & 1.63 & 1.65 & -4.38 & -2.74\\
    $\textrm{Im} \, \lambda_{12}$ & -4.59 & -2.87 & -1.41 & -3.41\\
    \hline
    \end{tabular}
    \caption{\footnotesize As Table \ref{tab:2HDM-slice-params-worst}, but for the slices with the worst binary accuracy for the 3HDM potential, with parameters defined in Eq.(\ref{eq:V-3HDM}).}
    \label{tab:3HDM-slice-params-worst}
\end{table}

\begin{table}[]
    \centering
    \begin{tabular}{| c | c | c | c | c |}
    \hline
    & $\lambda_4 - \lambda_7$ & $\textrm{Re} \, \lambda_{10} - \textrm{Re} \, \lambda_{11}$ & $\lambda_1 - \lambda_7$ & $\lambda_1 - \lambda_5$\\
    \hline
    $\lambda_1$ & 3.02 & 3.03 & $\lambda_1$ & $\lambda_1$\\
    $\lambda_2$ & 1.88 & 2.05 & 1.96 & 3.95\\
    $\lambda_3$ & 4.50 & 3.01 & 3.80 & 4.64\\
    $\lambda_4$ & $\lambda_4$ & -2.22 & 1.13 & 0.92\\
    $\lambda_5$ & -4.61 & 0.50 & 1.37 & $\lambda_5$\\
    $\lambda_6$ & 0.52 & 4.56 & 0.42 & 4.47\\
    $\lambda_7$ & $\lambda_7$ & -1.24 & $\lambda_7$ & 2.84\\
    $\lambda_8$ & 2.99 & -0.09 & -2.29 & 2.94\\
    $\lambda_9$ & 1.94 & 1.56 & 3.75 & 1.31\\
    $\textrm{Re} \, \lambda_{10}$ & 2.40 & $\textrm{Re} \, \lambda_{10}$ & 3.06 & 4.19\\
    $\textrm{Im} \, \lambda_{10}$ & 4.74 & -0.13 & -2.57 & -2.63\\
    $\textrm{Re} \, \lambda_{11}$ & 1.62 & $\textrm{Re} \, \lambda_{11}$ & -3.17 & 4.48\\
    $\textrm{Im} \, \lambda_{11}$ & 4.97 & -1.44 & 3.13 & 3.63\\
    $\textrm{Re} \, \lambda_{12}$ & 4.44 & -2.82 & 1.29 & -4.01\\
    $\textrm{Im} \, \lambda_{12}$ & 3.84 & -2.08 & 4.38 & -0.84\\
    \hline
    \end{tabular}
    \caption{\footnotesize As Table \ref{tab:2HDM-slice-params-worst}, but for the slices with the median binary accuracy for the 3HDM potential, with parameters defined in Eq.(\ref{eq:V-3HDM}).}
    \label{tab:3HDM-slice-params-med}
\end{table}

\begin{table}[]
    \centering
    \begin{tabular}{| c | c | c | c | c |}
    \hline
    & $\lambda_5 - \lambda_6$ & $\lambda_8 - \lambda_9$ & $\lambda_2 -\lambda_5$ & $\lambda_4 - \lambda_7$\\
    \hline
    $\lambda_1$ & 3.33 & 2.54 & 3.08 & 0.40\\
    $\lambda_2$ & 6.28 & 4.73 & $\lambda_2$ & 1.26\\
    $\lambda_3$ & 3.16 & 2.11 & -3.84 & -0.61\\
    $\lambda_4$ & 0.81 & 4.51 & 3.74 & $\lambda_4$\\
    $\lambda_5$ & $\lambda_5$ & 4.27 & $\lambda_5$ & 3.29\\
    $\lambda_6$ & $\lambda_6$ & -3.25 & -1.87 & 1.78\\
    $\lambda_7$ & -0.40 & -1.35 & 1.55 & $\lambda_7$\\
    $\lambda_8$ & -0.93 & $\lambda_8$ & -0.43 & 3.02\\
    $\lambda_9$ & 3.81 & $\lambda_9$ & -0.65 & -3.83\\
    $\lambda_{10}$ & -0.80 & -0.41 & 0.45 & 4.05\\
    \hline
    \end{tabular}
    \caption{\footnotesize As Table \ref{tab:2HDM-slice-params-worst}, but for the slices with the worst binary accuracy for the precustodial potential, with parameters defined in Eq.(\ref{eq:V-PC}).}
    \label{tab:Precustodial-slice-params-worst}
\end{table}

\begin{table}[]
    \centering
    \begin{tabular}{| c | c | c | c | c |}
    \hline
    & $\lambda_5 - \lambda_6$ & $\lambda_8 - \lambda_9$ & $\lambda_2 -\lambda_5$ & $\lambda_4 - \lambda_7$\\
    \hline
    $\lambda_1$ & 2.85 & 4.67 & 0.80 & 1.19\\
    $\lambda_2$ & 5.13 & 6.38 & $\lambda_2$ & 4.70\\
    $\lambda_3$ & 3.61 & 5.92 & 2.00 & 2.84\\
    $\lambda_4$ & 3.70 & 0.55 & 1.15 & $\lambda_4$\\
    $\lambda_5$ & $\lambda_5$ & 4.70 & $\lambda_5$ & 3.11\\
    $\lambda_6$ & $\lambda_6$ & -2.00 & 1.11 & 3.35\\
    $\lambda_7$ & 2.79 & 1.26 & 2.69 & $\lambda_7$\\
    $\lambda_8$ & 4.41 & $\lambda_8$ & 4.14 & 2.09\\
    $\lambda_9$ & -3.12 & $\lambda_9$ & -2.51 & -2.72\\
    $\lambda_{10}$ & -1.44 & -0.71 & 1.64 & -3.36\\
    \hline
    \end{tabular}
    \caption{\footnotesize As Table \ref{tab:2HDM-slice-params-worst}, but for the slices with the median binary accuracy for the precustodial potential, with parameters defined in Eq.(\ref{eq:V-PC}).}
    \label{tab:Precustodial-slice-params-med}
\end{table}

\clearpage
\setlength{\bibsep}{3pt}
\bibliographystyle{JHEP}
\bibliography{main}

\end{document}